\newif\ifarxiv
\arxivtrue

\newcommand{\thetitle}{%
  Scheduling Tasks towards Energy Autarky:\\
  Benefits and Computational Costs of Flexibility
}
\newcommand{\thertitle}{%
  Scheduling Tasks towards Energy Autarky%
}
\def\PACS{FL~1247/1-1, 522475669}
\def\thxPACS{Funded by the Deutsche Forschungsgemeinschaft (DFG, German
Research Foundation), project ``Parameterized Algorithmics in Computational Sustainability (PACS)''---\PACS.}
\def\HUaffil{Humboldt-Universität zu Berlin, 
Department of Computer Science, Algorithm Engineering Group, Germany}

\def\myAckn{%
  We gratefully acknowledge 
  Andreas Reinhardt 
  and Mazen Bouchur 
  (TU Clausthal, Energy Informatics group)  
  for their continuous guidance throughout the project
  regarding details and specifics about energy systems.
}

\ifarxiv%
  \documentclass[a4paper, UKenglish, cleveref, autoref, thm-restate]{lipics-v2021}
\else
  \documentclass[a4paper, UKenglish, cleveref, autoref, thm-restate]{socg-lipics-v2021}
\fi
\ifarxiv%
  \hideLIPIcs  %
\fi%

\title{\thetitle} %

\titlerunning{\thertitle} %

\author{Robert Bredereck}{Institut für Informatik, TU Clausthal, Germany}{robert.bredereck@tu-clausthal.de}{https://orcid.org/0000-0002-6303-6276}{}%

\author{Till Fluschnik}{\HUaffil}{till.fluschnik@hu-berlin.de}{https://orcid.org/0000-0002-1825-0097}{\thxPACS}%

\author{Klaus Heeger}{Department of Industrial Engineering and Management, Ben-Gurion University of the Negev, Beer-Sheva, Israel}{heeger@post.bgu.ac.il}{https://orcid.org/0000-0001-8779-0890}{}%

\authorrunning{R.\ Bredereck, T.\ Fluschnik, K.\ Heeger} %

\Copyright{Robert Bredereck and Till Fluschnik and Klaus Heeger} %

\ccsdesc[500]{Theory of computation~Design \& analysis of algorithms}
\ccsdesc[500]{Mathematics of computing~Combinatorial optimization}
\ccsdesc[500]{Social \& professional topics~Sustainability}

\keywords{computational sustainability, NP-hardness, parameterized complexity and algorithms, integer linear programming}

\category{} %

\relatedversion{} %

\acknowledgements{\myAckn}%

\ifarxiv{}
  \nolinenumbers %
\fi{}

\EventEditors{John Q. Open and Joan R. Access}
\EventNoEds{2}
\EventLongTitle{42nd Conference on Very Important Topics (CVIT 2016)}
\EventShortTitle{CVIT 2016}
\EventAcronym{CVIT}
\EventYear{2016}
\EventDate{December 24--27, 2016}
\EventLocation{Little Whinging, United Kingdom}
\EventLogo{}
\SeriesVolume{42}
\ArticleNo{23}

\usepackage{adjustbox}
\usepackage{colortbl}
\usepackage{multirow}
\usepackage[textsize=footnotesize,backgroundcolor=green!40!white,linecolor=black!60,obeyFinal]{todonotes}

\usepackage{etoolbox}

\usepackage[capitalise]{cleveref}
\crefname{algocf}{Algorithm}{Algorithms}
\Crefname{algocf}{Algorithm}{Algorithms}
\crefname{axiom}{}{}
\creflabelformat{axiom}{(#2I#1#3)}

\usepackage{url}
\usepackage{dsfont, microtype}
\usepackage{booktabs, tabularx}

\usepackage{tikz}
\usetikzlibrary{arrows, arrows.meta, calc, decorations.pathmorphing, decorations.pathreplacing, calligraphy, backgrounds, math, shapes, shapes.geometric, spy, positioning, quotes}
\usepackage{pgfplotstable}
\usepackage{pgfplots}

\usepackage{xargs}
\newcommandx{\mydefenv}[4][3=A]{%
  \ifstrequal{#3}{A}{\crefname{#1}{#2}{#2s}}{\crefname{#1}{#2}{#3}}%
  \Crefname{#1}{#4{.}}{#4s{.}}
}

\newcommandx{\mydefenvnew}[4][3=A]{%
  \newtheorem{#1}{#2}
  \ifstrequal{#3}{A}{\crefname{#1}{#2}{#2s}}{\crefname{#1}{#2}{#3}}%
  \Crefname{#1}{#4{.}}{#4s{.}}
}

\theoremstyle{plain}
\mydefenv{theorem}{Theorem}{Thm}
\mydefenv{lemma}{Lemma}{Lem}
\mydefenv{proposition}{Proposition}{Prop}
\mydefenv{conjecture}{Conjecture}{Conj}
\mydefenv{corollary}{Corollary}[Corollaries]{Cor}
\mydefenv{observation}{Observation}{Obs}
\mydefenv{fact}{Fact}{Fact}
\theoremstyle{definition}
\mydefenv{definition}{Definition}{Def}
\mydefenv{construction}{Construction}{Constr}
\mydefenvnew{rrule}{Reduction Rule}{DRR}
\mydefenvnew{myalgo}{Algorithm}{Algo}
\mydefenvnew{problem}{Problem}{Prob}
\theoremstyle{remark}
\mydefenv{example}{Example}{Ex}
\mydefenv{remark}{Remark}{Rem}
\mydefenv{idea}{Idea}{Idea}

\newcommand{\rqed}{\hfill$\triangleleft$}
\newcommand{\cif}{\text{if~}}
\newcommand{\cotw}{\text{otherwise}}

\newcommand{\prob}[1]{\textnormal{\textsc{#1}}}
\newcommand{\wpb}{when parameterized by}
\newcommand{\RD}{$(\Rightarrow)\quad$}
\newcommand{\LD}{$(\Leftarrow)\quad$}
\newcommandx{\set}[2][1=1]{\ensuremath{\{#1,\ldots,#2\}}}
\newcommandx{\decprob}[6][3=Input,5=Question]{
\begin{problem}[{#1}]\label{prob:#2}
  Given #4,
  the question is whether #6.
  \end{problem}
}
\newcommandx{\dectask}[6][3=Input,5=Question]{
\begin{problem}[{#1}]\label{prob:#2}
  \textbf{Given} #4,
  the \textbf{task} is to #6.
  \end{problem}
}

\usepackage{mathtools}

\newcommand{\mathset}[1]{\mathbb{#1}}
\newcommand{\N}{\mathset{N}}
\newcommand{\Nzero}{\mathset{N}_0}
\newcommand{\Q}{\mathset{Q}}
\newcommand{\Qnn}{\mathset{Q}_{\geq 0}}
\newcommand{\Qp}{\mathset{Q}_{+}}
\newcommand{\calJ}{\mathcal{J}}
\newcommand{\calN}{\mathcal{N}}
\newcommand{\calR}{\mathcal{R}}

\newcommand{\cocl}[1]{\textrm{#1}}
\newcommand{\XP}{\cocl{XP}}
\newcommand{\W}[1]{\cocl{W[#1]}}
\newcommand{\FPT}{\cocl{FPT}}
\newcommand{\NP}{\cocl{NP}}
\newcommand{\classP}{\cocl{P}}
\newcommand{\coNP}{\cocl{coNP}}
\newcommand{\cpoly}{\cocl{poly}}
\newcommand{\NPincoNPslashpoly}{\ensuremath{\NP\subseteq\coNP/\cpoly}}
\newcommand{\unlessPK}{unless \NPincoNPslashpoly}
\newcommand{\UnlessPK}{Unless \NPincoNPslashpoly}
\newcommand{\croco}{cross-composition}
\newcommand{\ANDcroco}{AND-\croco}

\newcommand{\yes}{\emph{yes}}
\newcommand{\no}{\emph{no}}
\newcommand{\ceq}{\coloneqq}
\newcommand{\maxz}{\max\nolimits^{0}}
\newcommand{\minz}{\min\nolimits^{0}}
\newcommand{\lin}{\lambda_{\rm{in}}}
\newcommand{\lout}{\lambda_{\rm{out}}}
\newcommand{\Jset}{\mathcal{J}}
\newcommand{\Net}{\calN}
\newcommand{\ex}{X}
\newcommand{\Bat}{B}
\newcommand{\bz}{b_0}
\newcommand{\bc}{b_{\mathrm c}}
\newcommand{\bl}{b_{\mathrm \ell}}

\newcommandx{\tref}[2][1=]{\scriptsize{(\Cref{#2}#1)}}
\DeclareMathOperator{\poly}{poly}
\DeclareMathOperator{\im}{im}
\renewcommand{\Im}{\operatorname{Im}}
\newcommand{\indic}{\ensuremath{\mathbf{1}}}

\newcommand{\etal}{et~al.}

\newcommand{\qeTsc}{\prob{Autarky by Scheduling}}
\newcommand{\qeAcr}{\prob{AbS}}
\newcommand{\ubpTsc}{\prob{Unary Bin Packing}}
\newcommand{\ubpAcr}{\prob{UBP}}
\newcommand{\partTsc}{\prob{Partition}}

\newcommand{\nbin}{k}
\newcommand{\sbin}{b}

\newcommand{\pvfnt}[1]{\texttt{#1}}
\newcommand{\jis}{\widehat{\pi}}

\newcommand{\xcase}[2]{\emph{Case~#1}: #2.}

\usepackage{siunitx}
\newcommand{\vwatt}[1]{\SI{#1}{\watt}}
\newcommand{\vsqm}[1]{\SI{#1}{\meter\squared}}
\newcommand{\vmin}[1]{\SI{#1}{\minute}}
\newcommand{\vwattpersqm}[1]{\SI{#1}{\watt\per\meter\squared}}
\newcommand{\vwattmin}[1]{\SI{#1}{\watt\minute}}
\newcommand{\vkilowattmin}[1]{\SI{#1}{\kilo\watt\minute}}
\newcommand{\vkilowatthour}[1]{\SI{#1}{\kilo\watt\hour}}
\newcommand{\vsec}[1]{\SI{#1}{\second}}
\newcommand{\vhour}[1]{\SI{#1}{\hour}}

\newcommand{\cco}{\operatorname{cco}}
\newcommand{\mccco}{\text{m-$c$-$\cco$}}
\newcommand{\mx}[1]{#1}
\newcommand{\nrl}{\rho}
\newcommand{\ndl}{\delta}
\newcommand{\flex}{\phi}

\newcommand{\battsymb}{%
  \ensuremath{%
    \mathord{%
      \tikz[scale=0.025, baseline=0.02ex]{
        \clip (-3,0) rectangle (3,6.5);
        \draw[fill=black] (-1,6.5) rectangle (1,5);
        \draw[draw=black,thin] (-2,0) rectangle (2,5);
        \draw[very thin] (-3,0) -- (3,5.3);
        \draw[very thin] (3,0) -- (-3,5.3);
      }%
    }%
  }%
}

\newcommand{\appsymb}{{\large$\star$}}
\newcommand{\appref}[1]{{\hyperref[proof:#1]{\appsymb}}}

\newcommand{\appendixsection}[1]{%
  \gappto{\appendixProofText}{\section{Additional Material for Section~\ref{#1}}\label{app:#1}}
}

\newcommand{\toappendix}[1]{%
  \gappto{\appendixProofText}
  {{
    #1
  }}
}
\newcommand{\appendixproof}[2]{%
  \gappto{\appendixProofText}{%
    \subsection{%
      \texorpdfstring{Proof of \cref{#1}}{Proof of \ref*{#1}}%
    }\label{proof:#1}
    #2
  }%
}
\newcommand{\theabstract}{%
  We study the \emph{autarky} problem:
  given an energy forecast,
  a battery,
  and a set of energy-consuming jobs with time windows,
  decide whether all jobs can be scheduled
  without requiring
  external energy.
  We analyze the problem through the lens of job flexibility, 
  defined as the number of time steps at which a job may be scheduled.
  We show that the problem is NP-hard already for flexibility two,
  even in restricted settings.
  On the positive side,
  we identify settings in which the problem is polynomial-time solvable,
  even for large flexibilities.
  Moreover,
  we obtain fixed-parameter tractability for combined parameters involving flexibility,
  such as the number of jobs.
  In contrast,
  we establish W-hardness when parameterized by maximum flexibility alone, even in a restricted setting.
  To complement our theoretical results, 
  we formulate an integer linear program (ILP) that computes the minimum required external energy and 
  evaluate it experimentally on instances derived from real-world energy-consumption and radiation data. 
  The experiments indicate that increased job flexibility substantially reduces the need for external energy at moderate computational cost.
}

\newcommand{\plotpath}{pics/}

\begin{document}

\maketitle

  \begin{abstract}
  \theabstract{}
  \end{abstract}
\newpage
\section{Introduction}

Renewable,
weather-dependent resources such as solar and wind
become increasingly important for energy production.
This, in turn, 
increases the importance of reliable forecasts.
Given a reliable forecast of resource availability,
energy-consuming jobs
(abstracting tasks or devices, e.g., a kettle
or a PC)
can be scheduled accordingly to directly consume available power.
This applies both at a local level (e.g., solar panels on private properties)
and at a global level (e.g., offshore wind farms for industrial use).
In this work,
we study the following problem:
given an energy forecast, 
a battery, 
and a set of energy-consuming jobs,
how can the jobs be scheduled
so as to minimize the required external energy---and,
in particular,
to decide whether no external energy is required at all.
The difficulty stems from the interaction between cumulative energy constraints over time
and execution-window constraints, 
which together create long-range dependencies between scheduling decisions.

Our problem is closely linked to the evaluation
of a household's degree of \emph{autarky}.
Several
approaches to this already exist.
What is novel in our approach is that
we optimally solve the underlying scheduling problem
arising from the fact that jobs often allow some flexibility
in their execution.
In particular,
we show that while flexibility makes the problem computationally hard,
it can yield significant energy savings in practice.

The application domain ranges from
households
to quarters 
(i.e., neighborhoods)
and industrial settings,
which changes the view on the problem parameters.
For a household,
we expect relatively few jobs;
for quarters, somewhat more; 
and
for industry,
potentially many.
Conversely,
households and quarters may involve diverse jobs,
while industrial jobs may be more homogeneous.
Moreover,
households and quarters may be less flexible,
while industry may allow more flexibility in execution times, 
focusing primarily on whether a job is completed at all.
Hence,
to fundamentally understand the complexity of the problem,
we perform a classic computational and parameterized complexity analysis.

\subparagraph*{Our Contributions.}
We present an elaborate mathematical model and introduce the energy autarky problem 
\qeTsc{} (\qeAcr).
We conduct an extensive computational and parameterized complexity analysis
(see \cref{fig:results,fig:hasse};
\cref{fig:organigram} organizes our results;
\cref{sec:basics,sec:mpdp} survey our notations)
and show that \qeAcr{} becomes \NP-hard already when every job has flexibility exactly two,
even under further restrictions,
such as each job having unit length.
Moreover,
we show that \qeAcr{} is fixed-parameter tractable regarding the combined parameter flexibility and the order of the largest connected component of the job graph,
which intuitively captures the dependency between jobs.
In contrast, 
we show that \qeAcr{} is \W{1}-hard \wpb{} the number of time steps,
and hence by flexibility,
even if all jobs have length one and the same release date and deadline.
Given the intractability of \qeAcr{},
we formulate an exact algorithm in form of an integer linear program (ILP).
With the ILP at hand,
we run experiments on real-world data 
(household energy consumption and radiation profiles), 
artificially combined into a total of 48048 instances,
all of which are solved using the ILP.
Our results show how job flexibility relates to the minimum required external energy:
In a nutshell,
larger flexibility can save a significant amount of energy at a,
on average,
moderate increase of runtime.
The experimental evaluation is intended to complement the theoretical analysis by illustrating the practical impact of flexibility, 
rather than to serve as a benchmarking study.

\begin{figure*}[t]
 \centering
 \begin{tikzpicture}[font=\footnotesize]
  \def\xr{1}
  \def\yr{0.75}
  \def\ymx{3.75}
  \newcommandx{\resEntry}[6][6=0]{%
    \node at ($(#1)+(#2*\xr,#3*\yr)$)[anchor=north west,xshift=0.4*\xr pt,yshift=-2.1*\yr pt,align=left,text width=#4*\xr cm]{$\bullet$};
    \node at ($(#1)+(#2*\xr,#3*\yr)$)[anchor=north west,xshift=8*\xr pt,yshift=#6*\yr pt,align=left,text width=#4*\xr cm]{#5};
  }

  \def\xOne{0.6}
  \def\xTwo{5.5}
  \def\xThr{13.625}
  
  \draw[->,>=latex] (0,0) to (\xThr*\xr,0);
  
  \foreach \x/\y in {0/0,\xOne/1,\xTwo/2,\xThr/3}{
    \node (n\y) at (\x*\xr,0.125)[inner sep=-1pt]{};
    \ifnum\y<3 \draw[-] (\x*\xr,0.125) -- (\x*\xr,-0.125); \fi
  }
  
  \node at ($(n0)!0.5!(n1)$)[anchor=north, yshift=-6*\yr pt]{$\mx{\flex}=1$};
  \node at ($(n1)!0.5!(n2)$)[anchor=north, yshift=-5*\yr pt]{$\mx{\flex}=2$};
  \node at ($(n2)!0.5!(n3)$)[anchor=north, yshift=-5*\yr pt]{unbounded $\mx{\flex}$};

  \draw[fill=green!15!white, draw=none] (n0) to ($(n0)+(0,\ymx*\yr)$) to ($(n3)+(0,\ymx*\yr)$) to (n3) to (n0);

  \draw[fill=red!17!white, draw=none] (n1) rectangle ($(n2)+(0,1.5*\yr)$);

  \draw[fill=red!25!white, draw=none] (n2) rectangle ($(n3)+(0,2.25*\yr)$);
  \node at ($(n0)!0.5!(n1)+(0,1.0*\yr)$)[]{};
  
  \def\wOne{4.4}

  \resEntry{n1}{0}{0.75}{\wOne}{even if $\nrl=n$, $\lambda=1$
  \tref{prop:horiz:joborders}}
  \resEntry{n1}{0}{1.5}{\wOne}{even if $\tau=2$, $\mx{\ell}=1$ $^{\battsymb,\dagger}$
  \tref{prop:NPhard:Part1}}[3]
  \resEntry{n1}{0}{2.25}{\wOne}{when $\mx{\ell}=1,\nrl=n$  \tref{prop:oneortwouniquerels}}
  \resEntry{n1}{0}{3}{\wOne}{for constant $n$ {\tref[, FPT]{obs:FPTnPlusFlex}}}
  
  \def\wTwo{7.75}
  
  \resEntry{n2}{0}{0.75}{\wTwo}{even if~$e_i=1 \,\forall i\in N$ $^{\battsymb,1}$
  \tref{prop:whard}}[3]
  \resEntry{n2}{0}{1.5}{\wTwo}{even if~$\mx{\ell}=1$, $\nrl=n$ $^{\battsymb}$
  \tref{prop:NPhard:BP:rdiff}}[3]
  \resEntry{n2}{0}{2.25}{\wTwo}{even if $\mx{\ell}=\nrl=\ndl=1$ $^{\battsymb,\dagger,2}$ 
  \tref{prop:NPhard:BP:rsame}}[3]
  \resEntry{n2}{0}{3}{\wTwo}{when~$\mx{\ell}=1$, $e_i=e_j$, $\flex_i=\flex_j \,\forall i,j\in N$, $\bc=\bl$  \tref{prop:greedy}}
  \resEntry{n2}{0}{3.75}{\wTwo}{for constant $n$ when no battery exists \tref[, FPT]{prop:nobattery:FPTn}}

 \end{tikzpicture}
 \caption{Overview of our results for~\qeAcr{} with $\lambda=\min\{\lin,\lout\}$ 
 (green: P-time; light/darker-red: weak/strong \NP-hardness).  
 Some results hold even if 
 $^{\battsymb}$: there is no battery;
 $^\dagger$: the forecast is the same for all time steps.
 For some restrictions we also have 
 $^1$: \W{1}-h.~wrt.~$\nrl+\ndl$; $^2$: \W{1}-h.~wrt.~$\tau$.
 Results for tractability propagate to the left, for intractability to the right.}
 \label{fig:results}
\end{figure*}

\begin{figure}[t!]
 \centering
 \begin{tikzpicture}
  \def\xr{1}\def\yr{1}
  \def\xs{2.125*\xr}\def\ys{0.75*\yr}
  \tikzstyle{xnode}=[rectangle,rounded corners,minimum width=0.675cm,minimum height=0.85cm, thick]
  \tikzstyle{xxnode}=[rectangle,rounded corners,minimum width=0.75cm,minimum height=0.5cm,ultra thin,draw]
  \def\cPnph{red!25!white}
  \def\cFpt{green!20!white}
  \def\cFptNoPK{green!40!white}
  \def\cWh{yellow!50!white}
  \def\cXp{orange!25!white}
  \def\cOpen{blue!15!white}
  \tikzstyle{nPnph}=[xnode,fill=\cPnph,draw]
  \tikzstyle{nPnphs}=[xnode,fill=\cPnph, draw]
  \tikzstyle{nFpt}=[xnode,fill=\cFpt,draw]
  \tikzstyle{nFptNoPK}=[xnode,fill=\cFptNoPK,draw]
  \tikzstyle{nWh}=[xnode,fill=\cWh,draw]
  \tikzstyle{nXp}=[xnode,fill=\cXp,draw]
  \tikzstyle{nOpen}=[xnode,fill=\cOpen,draw]
  \tikzstyle{xedge}=[-, thick, lightgray!50!white]
  
  \newcommand{\dnode}[7]{%
    \node (#1) at (#2*\xs,#3*\ys)[#4, label={[label distance=-2.25pt]-90:{\tiny#6}}]{\phantom{\scriptsize#5}};
    \draw[-,densely dashed, very thin, black] (#1.east) to (#1.west);
    \node at (#1.north)[anchor=north]{\footnotesize#5};
    \node at (#1.south)[anchor=south]{\scriptsize#7};
  }
  
  \newcommand{\thenodes}{%
    \dnode{tau}{0.5}{-0.25}{nPnph}{$\tau$}{\Cref{prop:NPhard:Part1}}{$=2$}
    \dnode{ell}{2}{-2}{nPnph}{$\mx{\ell}$}{\Cref{prop:NPhard:Part1}}{$=1$}
    \dnode{e}{2.5}{-2}{nPnph}{$\mx{e}$}{\Cref{prop:whard}}{$=1$}
    \dnode{h}{1.5}{-1}{nPnph}{$\mx{h}$}{\Cref{prop:NPhard:Part1}}{$=2$}
    \dnode{s}{1}{-2}{nPnph,line width=1.75pt,draw}{$\mx{\flex}$}{\Cref{prop:NPhard:Part1}}{$=2$}
    \dnode{F}{3}{-2}{nPnph}{$\im(F)$}{\Cref{prop:whard}}{$= 1$}
    \dnode{Bs}{3.5}{-2}{nPnph}{$\bc$}{\Cref{prop:NPhard:Part1}}{$=0$}
    \dnode{n}{-1.25}{1.75}{nXp,line width=1.75pt,draw}{$n$}{\Cref{obs:FPTnPlusFlex}}{$\mx{\flex}^n$}
    \dnode{occ}{-0.25}{1.3}{nXp,line width=1.75pt,draw}{~$\cco{}$~}{\Cref{obs:concom}}{$\mx{\flex}^{\cco}$}
    
    \dnode{R}{-1.5}{-2}{nPnph}{$\nrl$}{\Cref{prop:NPhard:BP:rsame}}{$=1$}
    \dnode{D}{-0.5}{-2}{nPnph}{$\ndl$}{\Cref{prop:NPhard:BP:rsame}}{$=1$}
    
    \dnode{RD}{-1}{-1}{nPnph}{$\nrl,\ndl$}{\Cref{prop:NPhard:BP:rsame}}{$=2$}
    \dnode{RDs}{-0.25}{-0.3}{nPnph,line width=1.75pt,draw}{$\mx{\flex},\nrl,\ndl$}{\Cref{prop:NPhard:Part1}}{$=4$}
    \dnode{ns}{0.125}{3}{nFpt,line width=1.75pt,draw}{$n,\mx{\flex}$}{\Cref{obs:FPTnPlusFlex}}{$\mx{\flex}^n$}
    \dnode{socc}{0.75}{1.75}{nFptNoPK,line width=1.75pt,draw}{$\cco,\mx{\flex}$}{\Cref{obs:concom}}{$\mx{\flex}^{\cco}$}
    \dnode{smccco}{1.5}{1.75}{nFpt,line width=1.75pt,draw}{$\mccco,\mx{\flex}$}{\Cref{prop:mccco}}{$\mx{\flex}^{\mccco}$}
    
    \dnode{sFBs}{3}{0}{nPnph,line width=1.75pt,draw}{$\mx{\flex},\im(F),\bc$}{\Cref{prop:NPhard:Part1}}{weak}

  }

  \thenodes{}
  \foreach \x/\y in {s/socc,n/ns,
    s/smccco,smccco/ns,
    ell/h,s/h,
    h/tau,
    occ/n,occ/socc,
    socc/ns,
    R/RD,D/RD,RD/n,RD/RDs,s/RDs,RDs/ns,RD/tau,
    s/sFBs,F/sFBs,Bs/sFBs%
    }
    {\draw[xedge,->,>=latex] (\x) to (\y);}
  \thenodes{}
  
  \node at (3.75*\xs, 3.0*\ys)[rectangle,fill=\cFpt,inner sep=4pt,label=0:{\small FPT}]{};
  \node at (3.75*\xs, 2.5*\ys)[rectangle,fill=\cFptNoPK,inner sep=4pt,label=0:{\small FPT, no PK}]{};
  \node at (3.75*\xs, 2.0*\ys)[rectangle,fill=\cXp,inner sep=4pt,label=0:{\small XP}]{};
  \node at (3.75*\xs, 1.5*\ys)[rectangle,fill=\cPnph,inner sep=4pt,label=0:{\small p-NP-h}]{};

 \end{tikzpicture}
 \caption{Hasse diagram of our parameters with 
 running times of algorithms or hardness specifics
 in the lower half of each cell.
 When a parameter~$p$ points to a parameter~$p'$, 
 then there is a function~$f$ such that~$p\leq f(p')$ for all instances.
 Parameters are combined additively 
 (e.g., $p,p'$ refers to~$p+p'$).
 }
 \label{fig:hasse}
\end{figure}

\begin{figure}[t]
 \centering
 \begin{tikzpicture}[font=\footnotesize]
  \def\xr{0.99}
  \def\yr{1}
  \def\bxw{1.8}
  \def\bxh{0.5}
  \def\xs{2.4}
  \def\ys{1.9}
  \def\cPoly{green!15!white}
  \def\cwNPh{red!15!white}
  \def\csNPh{red!30!white}

  \newcommand{\thmBoxP}[4]{%
    \node (#1) at ($(#2*\xs*\xr,#3*\ys*\yr)$)[fill=\cPoly, minimum width=1*\bxw*\xr cm, minimum height=1*\bxh*\yr cm,text width=0.9*\bxw*\xr cm, align=center,rounded corners,draw]{#4};
  }

  \newcommand{\thmBoxN}[6]{%
    \node (#1) at ($(#2*\xs*\xr,#3*\ys*\yr)$)[fill=black, minimum width=1*\bxw*\xr cm, minimum height=2*\bxh*\yr cm,rounded corners,draw]{};
    \node at (#1.north)[anchor=north, fill=#4, minimum width=1*\bxw*\xr cm, minimum height=1*\bxh*\yr cm, text width=0.9*\bxw*\xr cm, align=center, rounded corners,draw]{#5};
    \node at (#1.south)[anchor=south, fill=#4, minimum width=1*\bxw*\xr cm, minimum height=1*\bxh*\yr cm, text width=0.9*\bxw*\xr cm,align=center, rounded corners,draw,font=\scriptsize]{#6};
  }

  \thmBoxP{one}{0}{-0.125}{\Cref{obs:trivial}}
  \thmBoxP{two}{-1.5}{-2}{\Cref{prop:oneortwouniquerels}}
  \thmBoxP{fpt}{0}{-1}{\Cref{obs:FPTnPlusFlex}}
  \thmBoxP{concom}{-0.45}{-2}{\Cref{obs:concom}}
  \thmBoxP{mccco}{0.45}{-2}{\Cref{prop:mccco}}

  \thmBoxP{unbnd}{1.4}{-2}{\Cref{prop:greedy}}

  \thmBoxN{wNPhl1}{-2}{-1}{\cwNPh}{\Cref{prop:NPhard:Part1}}{$\ell=1$}
  \thmBoxN{wNPhrn}{-1}{-1}{\cwNPh}{\Cref{prop:horiz:joborders}}{$\nrl=n$}

  \thmBoxN{sNPhl1r1}{3}{-1}{\csNPh}{\Cref{prop:NPhard:BP:rsame}}{$\ell=1,\nrl=1$}
  \thmBoxN{sNPhl1rn}{2}{-1}{\csNPh}{\Cref{prop:NPhard:BP:rdiff}}{$\ell=1,\nrl=n$}
  \thmBoxN{sNPhe1}{1}{-1}{\csNPh}{\Cref{prop:whard}}{$e_i=1\,\forall i$}

  \draw[->,>=latex] ($(one)+(0,0.325*\ys*\yr)$) -- node[midway,right]{set $\phi=1$}(one);
  \draw[->,>=latex] ($(one.south)+(-0.1*\xs*\xr,0)$) -- node[midway,left]{increase to $\phi=2$}($(one)+(-0.1*\xs*\xr,-0.375*\ys*\yr)$) -- ($(wNPhl1)+(0,0.5*\ys*\yr)$) -- (wNPhl1);
  \draw[->,>=latex] ($(wNPhrn)+(0,0.5*\ys*\yr)$) -- (wNPhrn);
  \draw[->,>=latex] ($(one.south)+(0.1*\xs*\xr,0)$) -- node[midway,right]{increase to unbounded $\phi$}($(one)+(0.1*\xs*\xr,-0.375*\ys*\yr)$) -- ($(sNPhl1r1)+(0,0.5*\ys*\yr)$) -- (sNPhl1r1);
  \draw[->,>=latex] ($(sNPhl1rn)+(0,0.5*\ys*\yr)$) -- (sNPhl1rn);
  \draw[->,>=latex] ($(sNPhe1)+(0,0.5*\ys*\yr)$) -- (sNPhe1);
  \draw[-,>=latex] ($(one)+(-0.45*\xs*\xr,-0.375*\ys*\yr)$) -- ($(fpt)+(-0.45*\xs*\xr,0.25*\ys*\yr)$) -- node[midway,above]{set constant~$n$}($(fpt)+(0.45*\xs*\xr,0.25*\ys*\yr)$) -- ($(one)+(0.45*\xs*\xr,-0.375*\ys*\yr)$);
  \draw[<-,>=latex] (fpt.north) -- ($(fpt)+(0,0.25*\ys*\yr)$);

  \draw[-,>=latex] (wNPhrn.south) -- node[midway,left,align=right,text width=0.4*\xs*\xr cm]{add $\ell=1$}($(wNPhrn)+(0,-0.6*\ys*\yr)$) -- ($(wNPhl1)+(0,-0.6*\ys*\yr)$) -- node[midway,left,align=right,text width=0.4*\xs*\xr cm,yshift=-0.03*\yr cm]{add $\nrl=n$}(wNPhl1.south);
  \draw[<-,>=latex] (two.north) -- ($(two)+(0,0.4*\ys*\yr)$);

  \draw[-,>=latex] (sNPhl1r1.south) -- node[midway,left,align=right,text width=0.675*\xs*\xr cm,xshift=0.00*\xs*\xr cm,yshift=0.00*\ys*\yr cm,font=\scriptsize]{add $\forall i,j$ $e_i=e_j$}($(sNPhl1r1)+(0,-0.6*\ys*\yr)$) -- ($(unbnd)+(0,0.4*\ys*\yr)$) -- node[midway,right,align=left,text width=1.7*\xs*\xr cm,yshift=-0.03*\ys*\yr cm]{and add $\flex_i=\flex_j\,\forall i,j$, $\bc=\bl$}(unbnd.north);
  \draw[-,>=latex] (sNPhl1rn.south) -- ($(sNPhl1rn)+(0,-0.6*\ys*\yr)$) -- node[midway,above,align=center,text width=0.675*\xs*\xr cm,xshift=0.05*\xs*\xr cm,yshift=-0.04*\ys*\yr cm,font=\scriptsize]{add $\forall i,j$ $e_i=e_j$}($(unbnd)+(0,0.4*\ys*\yr)$) -- (unbnd.north);
  \draw[->,>=latex] (sNPhe1.south) -- node[midway,right,align=left,text width=0.4*\xs*\xr cm,yshift=0.0*\ys*\yr cm]{add $\ell=1$}($(sNPhe1)+(0,-0.6*\ys*\yr)$) -- ($(unbnd)+(0,0.4*\ys*\yr)$) -- (unbnd.north);

  \draw[-,>=latex] (fpt.south) -- ($(fpt)+(0,-0.25*\ys*\yr)$);
  \draw[-,>=latex] ($(concom)+(0,0.75*\ys*\yr)$) -- node[midway, sloped,below,font=\scriptsize]{relax to}($(mccco)+(0,0.75*\ys*\yr)$);
  \draw[->,>=latex] ($(mccco)+(0,0.75*\ys*\yr)$) -- node[midway, sloped,above,font=\scriptsize]{constant}node[midway, sloped,below,font=\scriptsize]{$\mccco$}(mccco.north);
  \draw[->,>=latex] ($(concom)+(0,0.75*\ys*\yr)$) -- node[midway, sloped,above,font=\scriptsize]{constant}node[midway, sloped,below,font=\scriptsize]{$\cco$}(concom.north);
 \end{tikzpicture}
 \caption{Organization chart of our structural results. Green boxes correspond to polynomial-time solvability, light-red boxes to weak \NP-hardness,
 and (darker) red boxes to strong \NP-hardness.}
 \label{fig:organigram}
\end{figure}

As to the algorithmic and experimental focus of the presentation together with the space constraints,
we defer all hardness proofs to the appendix;
we also only provide proof sketches to our algorithmic results. 
Details to results marked with (\appsymb{}) can be found in the appendix.

\section{Preliminaries}
\label{sec:basics}
We denote by~$\N$ and~$\Nzero$ the natural numbers ex- and including zero,
respectively.
We denote by~$\Qp$ and~$\Qnn$ the set of all positive and non-negative numbers from the rational numbers~$\Q$,
respectively.
For a function~$f$,
we
denote by~$\Im(f)$ the image of~$f$, and by~$\im(F)=|\Im(F)|$.

We distinguish between weak and strong \NP-hardness in the standard sense.
Weak \NP-hardness arises from numerical parameters encoded in binary and does not preclude pseudo-polynomial-time algorithms,
whereas strong \NP-hardness rules out such algorithms unless $\classP=\NP$.
This distinction is relevant in our setting since several hardness results are obtained via reductions from \prob{Partition}, \prob{Bin Packing}, or variants with unary encodings.

We use basic terminology from parameterized algorithms and complexity~\cite{CyganFKLMPPS15}.
A parameterized problem is a language~$L\subseteq \Sigma^*\times \Nzero$
over a fixed finite-sized alphabet~$\Sigma$.
$L$ is in \XP{} if there are computable functions~$f,g$ only depending on~$p$ such that every instance~$I=(x,p)$ can be decided for~$L$ in~$f(p)\cdot |I|^{g(p)}$ time.
$L$ is fixed-parameter tractable (\FPT) if $g\equiv c$ is a constant,
i.e.,
every instance~$I=(x,p)$ can be decided for~$L$ in~$f(p)\cdot |I|^c$ time;
we also say that~$L$ is in \FPT.
If $L$ is \W{1}-hard,
then it is presumably not in FPT.
Such hardness is shown via parameterized reductions,
basically translating the parameter of the first problem into the parameter of the second problem.
A kernelization is a polynomial-time algorithm that transforms each instance~$I=(x,p)$ into an decision-equivalent instance~$I'=(x',p')$ such that~$|I'|\leq f(p)$ for some computable function~$f$ only depending on~$p$;
if~$f$ is a polynomial, then we call it a polynomial kernelization (PK).

%

%
%
%
\section{Model, Problem Definition, and Parameters}
\label{sec:mpdp}

\subparagraph*{Our Model.}
For a time window~$T=\set{\tau}$ of~$\tau$ time steps,
we have an energy~forecast $F\colon T \to \Q$
(see \cref{fig:model}, accompanying this section).
\begin{figure}[t]
 \centering
 \begin{tikzpicture}
  \def\xr{1}
  \def\yr{1}
  \def\xsh{1.125}
  \def\ysh{0.5}
  \def\bwdth{0.4}
  \def\bhght{1}
  
  \def\colF{orange}
  \def\colB{green}
  \def\colE{red!80!black}
  \def\colJ{blue}
  
  \newcommandx{\tikzBatt}[4][2=\colB]
  {
    \node (b) at (#1)[minimum width=1*\bwdth*\xr cm, minimum height=1*\bhght*\yr cm,draw,fill=white]{};
    \draw[line width=3pt] ($(b.north)+(-0.25*\bwdth*\xr,0.04*\yr)$) -- ($(b.north)+(0.25*\bwdth*\xr,0.04*\yr)$);
    \node at (b)[minimum width=1*\bwdth*\xr cm, minimum height=1*\bhght*\yr cm,draw,fill=white]{};
    \foreach \x in {0,...,6}{
      \coordinate (b\x) at ($(b.south west)+(0,\x*\bhght/6)$);
      \ifnum#4=0
        \node at (b\x)[inner sep=-1pt,label={[label distance=-2pt]180:{\tiny $\x$}}]{};
      \fi
      \ifnum\x>0
        \ifnum\x<6
          \draw (b\x) -- ($(b\x)+(0.2*\xr,0)$);
      \fi\fi
    }
    \coordinate (b55) at ($(b.south west)+(0,5.5*\bhght/6)$);
    \coordinate (b15) at ($(b.south west)+(0,1.5*\bhght/6)$);
    \coordinate (bm05) at ($(b.south west)+(0,-0.5*\bhght/6)$);
    \coordinate (b05) at ($(b.south west)+(0,0.5*\bhght/6)$);
    
    \draw[fill=#2,opacity=0.4,draw=none] (b.south east) rectangle (b#3);
    
  }
  
  \newcommand{\theCoords}{%
    
    \foreach \x in {1,...,6}{
      \node (x\x) at (\x*\xsh*\xr,-0.5*\ysh*\yr)[inner sep=0pt]{\small $\x$};
      \node (xx\x) at (\x*\xsh*\xr-0.5*\xsh*\xr,-0.0*\ysh*\yr)[inner sep=0pt]{};
      \ifnum\x=6
        \node (xx7) at (\x*\xsh*\xr+0.5*\xsh*\xr,-0.0*\ysh*\yr)[inner sep=0pt]{};
        \node (x7) at (7*\xsh*\xr,-0.0*\ysh*\yr)[inner sep=0pt]{};
      \fi
    }
    
    \foreach \y in {0,...,5}{
      \node (y\y) at (0.4*\xsh*\xr,\y*\ysh*\yr)[inner sep=0pt,label=180:{\scriptsize $\y$}]{};
    }
    \node (yBat) at (0.4*\xsh*\xr,7*\ysh*\yr)[inner sep=0pt]{};

    \draw[lightgray,thin,->,>=latex] ($(x1|-y0)+(-0.5*\xsh*\xr,-0.1*\yr)$) -- ($(x1|-y5)+(-0.5*\xsh*\xr,+0.25*\yr)$);
    \node at ($(x1|-y5)+(-0.5*\xsh*\xr,+0.25*\yr)$)[label={[label distance=-9pt]135:{\tiny energy}}]{};
    
    \foreach \x in {1,...,6}{
      \draw[lightgray,thin] ($(x\x|-y0)+(-0.5*\xsh*\xr,-0.1*\xr)$) -- ($(x\x|-y0)+(-0.5*\xsh*\xr,+0.1*\yr)$);
      \ifnum\x=6
        \draw[lightgray,thin] ($(x\x|-y0)+(+0.5*\xsh*\xr,-0.1*\xr)$) -- ($(x\x|-y0)+(+0.5*\xsh*\xr,+0.1*\yr)$);
      \fi
    }
    
    \draw[lightgray,thin,->,>=latex] ($(x1|-y0)+(-0.5*\xsh*\xr,0)+(-0.1*\xr,0)$) -- ($(x6|-y0)+(+0.5*\xsh*\xr,0)+(+0.25*\xr,0)$);
    \node at ($(x6|-y0)+(+0.5*\xsh*\xr,0)+(+0.25*\xr,0)$)[label={[label distance=-9pt]-45:{\tiny time}}]{};
    
    \foreach \y in {0,...,5}{
      \draw[lightgray,thin] ($(x1|-y\y)+(-0.5*\xsh*\xr,0)+(-0.1*\xr,0)$) -- ($(x1|-y\y)+(-0.5*\xsh*\xr,0)+(+0.1*\xr,0)$);
    }
  }
  
  \newcommand{\theF}{
    \foreach \x/\z/\y in {1/2/2,2/3/5,3/4/3,4/5/1,5/6/2,6/7/2}{
      \draw[fill=\colF,opacity=0.2] (xx\x|-y0) rectangle (xx\z|-y\y);
    }
  }
  
  \newcommand{\theJob}[1]{
    \draw[draw=\colJ,dashed,ultra thick,opacity=0.2] (xx3|-y0) rectangle (xx7|-y3);

    \ifnum#1=0
      \draw[draw=\colJ,line width=3pt,opacity=0.2] (xx3|-y0) rectangle (xx5|-y3);
      \node at (x3)[anchor=east]{\scriptsize $r_1=\phantom{d}\!\!\!$};
      \node at (x6)[anchor=east]{\scriptsize $d_1=\phantom{d}\!\!\!$};
      \node at (xx3|-y3)[label=180:{\scriptsize $e_1=3$}]{};
      
      \draw [decorate,decoration={brace,amplitude=4pt,mirror,raise=1.25em}] (xx3) -- (xx5) node[midway,yshift=-2.125em]{\scriptsize $\ell_1=2$};
    
      \draw [decorate,decoration={brace,amplitude=4pt,mirror,raise=2.5em}] (xx3) -- (xx6) node[midway,yshift=-3.375em]{\scriptsize $\flex_1=3$};
    
      \draw [decorate,decoration={brace,amplitude=4pt,mirror,raise=3.75em}] (xx3) -- (xx7) node[midway,yshift=-4.675em]{\scriptsize $h_1=4$};
    \else
      \draw[draw=\colJ,line width=3pt,opacity=0.2] (xx4|-y0) rectangle (xx6|-y3);
    \fi
  }
  
  \theCoords{}
  
  \theF{}
  
  \theJob{0}
  
  \tikzBatt{x1|-yBat}{1}{0}
  \node at (b1)[label=180:{\tiny $\bz=\ $}]{};
  \node at (b6)[label=180:{\tiny $\bc=\ $}]{};
  \tikzBatt{x2|-yBat}{2}{0}
  \node at (b.north)[label=90:{\tiny $\lin=\frac{1}{2}$}]{};
  \tikzBatt{x3|-yBat}{4}{0}
  \node at (b.north)[label=90:{\tiny $\bl=2$}]{};
  \tikzBatt{x4|-yBat}{4}{0}
  \tikzBatt{x5|-yBat}{0}{0}
  \node at (b.north)[label=90:{\tiny $\lout=\frac{1}{2}$}]{};
  \tikzBatt{x6|-yBat}{1}{0}
  \tikzBatt{x7|-yBat}{2}{0}
  
  \node at (x3|-y4)[]{\scriptsize $D_\pi(3)=0$};
  \node at (x4|-y3)[anchor=south]{\scriptsize $\Net_\pi(4)=-4$};
  
  \begin{scope}[xshift=10*\xr cm]
    \def\xr{0.4}
    \theCoords{}
  
    \theF{}
    
    \theJob{1}
    
    \tikzBatt{x1|-yBat}{1}{1}
    \tikzBatt{x2|-yBat}{2}{1}
    \tikzBatt{x3|-yBat}{4}{1}
    \tikzBatt{x4|-yBat}{55}{1}
    \tikzBatt{x5|-yBat}{15}{1}
    \tikzBatt{x6|-yBat}[\colE]{m05}{1}
    \tikzBatt{x7|-yBat}{05}{1}

  \end{scope}

 \end{tikzpicture}
 \caption{Illustration to \cref{sec:mpdp}. A simple instance with $\tau=6$,
 one job~$J_1=(r_1,d_1,\ell_1,e_1) = (3,6,2,3)$ (blue), battery~$B=(\bz,\bc,\bl,\lin,\lout)=(1,6,2,\frac{1}{2},\frac{1}{2})$ (top), and orange forecast~$F$. 
 Scheduling~$J_1$ at times~$3$ and~$4$, respectively, leads to a (left) feasible schedule~$\pi$ and a (right) infeasible one.
 }
 \label{fig:model}
\end{figure}
Additionally,
we have a set~$\Jset$ of~$n$ non-preemptive jobs,
where for each~$i\in N\ceq \set{n}$ we have a job~$J_i = (r_i, d_i, \ell_i, e_i)$~with
\begin{align*}
 r_i\in T &\colon \text{release date},\ \ d_i\in T \colon \text{deadline},
  \ \ \ell_i\in T \colon \text{job length},\ \ e_i\in \Qp \colon \text{energy per time}.
\end{align*}

\begin{remark}
  Real-world tasks may exhibit time-varying power consumption profiles.
  In our model, however, each job is assumed to have a constant energy per time over its duration.
  This abstraction can be interpreted either as replacing the original profile by its mean power 
  (thus preserving total energy) 
  or as a conservative modeling choice that assumes the profile's maximum power throughout the job (thus ensuring feasibility under fluctuations).
  \rqed
\end{remark}

Also,
we have a 
\emph{battery} $\Bat = (\bz,\bc,\bl,\lin,\lout)$ where 
\begin{align*} 
 \bz\in \Qnn &\colon \text{initial level}, \ \  
 \bc\in\Qnn \colon \text{capacity}, \ \,
 \bl\in \Qnn \colon \text{maximum loading speed}, \\
 \lin, \lout \in (0,1] &\colon \text{effective in- and output efficiencies.}
\end{align*}

Let~$(F,\Jset,\Bat)$ be an instance.
A \emph{schedule}~$\pi\colon\Jset \to T$ is
an assignment of jobs to starting times such that
$\pi(J_i)\geq r_i$ and
$\pi(J_i)+\ell_i -1 \leq d_i$.
A job~$J_j$ is \emph{active} at time~$t\in T$ if~$t\in[\pi(J_j),\pi(J_j)+\ell_j-1]$.
We denote by $\jis(t)$ the index set of the jobs active at time~$t$.
Let
\begin{align}
 D_{\pi}(t) = F(t) - E_{\pi}(t), \text{ where } E_{\pi}(t)\ceq \sum\nolimits_{i\in \jis(t)} e_i, \label{eq:Dpit}
\end{align}
denote the \emph{net} energy at time~$t\in T$ given schedule~$\pi$.
Let~$\maxz(x)\ceq \max\{0,x\}$, 
$\minz(x)\ceq\min\{0,x\}$,
and
\begin{align}
 \Net_{\pi}(t) =  \lin\cdot \maxz(D_{\pi}(t)) + \lout^{-1}\cdot \minz(D_{\pi}(t))
 \label{eq:net}
\end{align}
denote the \emph{effective} net energy relevant to the battery,
that is,
transformed by~$\lin$ when charged 
(i.e., $D_{\pi}(t)>0$)
and by~$\lout^{-1}$ when discharged
(i.e., $D_{\pi}(t)<0$).
We denote the battery state at time~$t$ given schedule~$\pi$ 
by~$\Bat_{\pi}(t)$, 
where~$\Bat_{\pi}(1)=b_0$.
The battery state in the current time step is composed of the battery level 
and the effective net energy of the previous time step,
coupled with the maximum loading speed~$\bl$,
and upper bounded by the battery's capacity.
Formally, for all~$t\in\set[2]{\tau+1}$ we have
\begin{align}
 \Bat_{\pi}(t) &= \min \left\{\bc, \Bat_{\pi}(t-1) + \min\{\bl, \Net_{\pi}(t-1)\}\right\}. \label{eq:bat:update}
\end{align}
\vskip -\lastskip
\begin{remark}
\label{rem:chargingspeed}
We neglect discharge-rate limitations of the battery. 
For modern residential battery systems, 
the maximum discharge power 
(of several kW, cf.\ \cite{ORTH2023107299}) 
exceeds the demand of household appliances
(cf.\ \cite[Table 5]{LiuYWS21})
and is therefore rarely binding in practice.
\rqed
\end{remark}
A schedule~$\pi\colon\Jset \to T$
is \emph{feasible}
if~$\Bat_{\pi}(t)\geq 0$ for all~$t\in \set{\tau+1}$.
Note that we include the auxiliary time step~$\tau+1$ 
to verify the effective net energy of the last time step~$\tau$.
\subparagraph*{Problem Definition.}
Our central decision problem is defined as follows.

\decprob{\qeTsc{} (\qeAcr)}{tdv}
{a forecast~$F$, a set~$\calJ$ of jobs, and a battery~$\Bat$}
{there is a feasible schedule}

Given a schedule~$\pi$,
one can compute each of~\eqref{eq:Dpit},
\eqref{eq:net},
and \eqref{eq:bat:update} in polynomial time,
thus:

\begin{observation}
 \label{obs:inNP}
 \qeAcr{} is contained in \NP{}.
\end{observation}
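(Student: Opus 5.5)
The plan is to use the standard guess-and-verify argument, with the schedule~$\pi$ as the certificate. First I will show that a schedule has polynomial size. Each job~$J_i$ receives a starting time~$\pi(J_i)\in T=\set{\tau}$. The instance lists~$F(t)$ for every~$t\in T$, so~$\tau$ is at most the input length, and writing~$\pi$ takes~$O(n\log\tau)$ bits. A verifier checks in polynomial time that~$\pi$ is a schedule, that is,~$r_i\leq\pi(J_i)$ and~$\pi(J_i)+\ell_i-1\leq d_i$ for every~$i\in N$.

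Next the verifier simulates the process step by step. For each~$t\in T$ it computes~$\jis(t)$ by scanning all~$n$ jobs. From this it obtains~$E_\pi(t)$ and~$D_\pi(t)$ via~\eqref{eq:Dpit}, then~$\Net_\pi(t)$ via~\eqref{eq:net}, and finally~$\Bat_\pi(t+1)$ via~\eqref{eq:bat:update}. It accepts if and only if~$\Bat_\pi(t)\geq 0$ for all~$t\in\set{\tau+1}$. This takes~$O(\tau\cdot n)$ arithmetic operations on rationals.

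The one point needing care is the bit length of these rationals, because the recurrence~\eqref{eq:bat:update} is iterated~$\tau$ times. Each~$D_\pi(t)$ is a sum of at most~$n+1$ input rationals, and~$\Net_\pi(t)$ multiplies it by~$\lin$ or~$\lout^{-1}$. So every~$\Net_\pi(t)$ has a denominator dividing the product~$Q$ of all input denominators, taken together with the numerators of~$\lin$ and~$\lout$. Its numerator is bounded by a product of input numbers.

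The battery state is always obtained as~$\min$ of~$\bc$ and~$\Bat_\pi(t-1)+\min\{\bl,\Net_\pi(t-1)\}$. Hence each~$\Bat_\pi(t)$ is either~$\bc$ or~$\bl$ plus a previous state, or a previous state plus some~$\Net_\pi(t')$. By induction, each~$\Bat_\pi(t)$ equals~$\bc$ or~$\bz$ plus at most~$\tau$ terms, each of which is~$\bl$ or some~$\Net_\pi(t')$. All these values have common denominator~$Q$, whose bit length is polynomial. Their absolute values are at most~$\tau$ times a polynomially-bit-bounded quantity. So every intermediate number has polynomial encoding length, and the simulation runs in polynomial time.

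I expect this bit-length bound to be the only step that is not immediate. I would handle it by the common-denominator argument just given, rather than by trying to bound the growth of arbitrary rational arithmetic.
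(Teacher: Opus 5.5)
Your proof is correct and takes the same guess-and-verify approach as the paper, which only notes that, with a schedule~$\pi$ as certificate, \eqref{eq:Dpit}, \eqref{eq:net}, and \eqref{eq:bat:update} can be evaluated in polynomial time. Your explicit justifications fill in details the paper leaves implicit: the common-denominator bound on the encoding length of the iterated battery states, and the remark that~$\tau$ is bounded by the input length because~$F$ is listed pointwise.
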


We say that the battery has no losses if~$\lin=\lout=1$.
When we say that there is no battery,
then we assume a battery with no losses and with~$\bc=0$
(note that feasibility in this case is equivalently
defined over~$D_\pi(t)\geq 0$ for all~$t\in T$).
\subparagraph*{Further Parameters.}
The parameters defined below will be central in the subsequent complexity analysis.
We denote by~$\nrl=|R|$ and~$\ndl=|D|$
the sizes of the sets of unique release times $R\ceq \bigcup_{i\in N} \{r_i\}$ and deadlines $D\ceq \bigcup_{i\in N} \{d_i\}$,
resp.
For a job~$J_i$,
$i\in N$,
we define the \emph{horizon} by $h_i \ceq  d_i-r_i+1$,
the \emph{seat}~$S_i=\{r_i,\dots,d_i-\ell_i+1\}$,
and, 
as the size of the seat,
\[ \text{the \emph{flexibility} by } \flex_i \ceq  h_i-\ell_i+1 . \]
Flexibility captures the number of different start times a job can take;
$\flex_i=1$ corresponds to a job with no scheduling flexibility.
Note that for all~$i\in N$,
$\ell_i\leq h_i$ and~$\flex_i\leq h_i$, 
but~$\ell_i$ and~$\flex_i$ are incomparable.
Recall that in this work,
we focus on the flexibility in our analysis.
We drop the subscripts to refer to the maximum over all values,
e.g.,
$\mx{\flex}\ceq \max_{i\in N} \flex_i$ denotes the maximum flexibility over all jobs.

Let~$G(\Jset)=(V,E)$ be the (undirected interval) graph with vertex set~$V=\{v_i\mid J_i\in\Jset\}$
and edge set~$E=\{\{v_i,v_j\}\mid [r_i,d_i]\cap [r_j,d_j] \neq \emptyset\}$.
We call~$G$ the \emph{job graph}.
Intuitively, 
edges in the job graph represent potential overlaps in time windows, 
which may induce dependencies for scheduling.
We consider connected components in~$G$,
i.e., 
inclusion-wise maximal vertex subsets such that in such a subset,
every two vertices are reachable from each other via a sequence of consecutively adjacent edges.
We denote by~$\cco(G)$ the order of the largest connected component of~$G$.
A \emph{modulator to $c$-$\cco$} with constant value~$c$
is a vertex set~$W\subseteq V$ such that the modified graph $G-W$,
i.e.,
when removing $W$ and all edges with an endpoint in~$W$ from~$G$,
has
$\cco$ of size at most~$c$.
By \mccco{} we denote the smallest size of a \emph{modulator to $c$-$\cco$}.
The associated problem of computing \mccco{} is also known as \prob{$c$-Component Order Connectivity} \cite{KumarL17},
which is already \NP-hard for every~$c\geq 1$~\cite{LewisY80}.

\section{Related Work} 
As usual for problems in computational sustainability,
related work stems from two perspectives: 
from a computational perspective
(here: scheduling and parameterized complexity),
and from a sustainable perspective
(here: demand-response and energy autarky).

\subparagraph*{Scheduling and Complexity.}
By now,
many scheduling problems 
have been studied 
from a parameterized complexity perspective~\cite{MnichW14}, 
revealing several interesting open problems~\cite{MnichB18}. 
Jobs whose costs co-depend on a resource—as in our setting—are comparatively rare. 
More common are budgets on the total number of jobs~\cite{NederlofS22}, 
rejection costs~\cite{HermelinSZP22}, 
or weights tied to completion times~\cite{BampisCLLMZ14}. 
A related outsourcing-motivated model considers weighted jobs that partially depend on an external resource which itself incurs a cost~\cite{BriskornDM21}; 
however, jobs have unit length, cannot overlap, and may be overdue. 
ILP
approaches to scheduling are by now well established, 
including systematic algorithmic investigations~\cite{KnopK18}.

Models without batteries include settings where energy is constantly renewed or globally limited and activities are subject to precedence constraints~\cite{BruckerDMNP99}, 
as well as more general resource-constrained scheduling frameworks studied from a complexity-theoretical viewpoint~\cite{GanianHM20}. 
Other battery-free models assume no release dates
(i.e., all jobs are available at time step~1)
together with a common deadline and a no-overlap constraint~\cite{BentertBGKN23}; 
in contrast to our work, 
several resources may be available and jobs may require more than one simultaneously. 
Further related 
is
single-machine scheduling with processing times and energy demands, 
where recharging itself may consume time steps, 
and analyze objectives such as (weighted) completion time or late jobs~\cite{YU202540}. 
Only few works combine energy harvesting with a battery: 
some restrict to unit-length, 
weighted jobs without parallel execution and forbid harvesting while processing, 
aiming to maximize 
the total weight
of feasible jobs, 
and provide weak NP-hardness, polynomial-time, and approximation results for special cases~\cite{SchieberSV23}; 
related heuristic approaches have also been explored experimentally~\cite{KumarA18}. 
An interesting extension to our model 
is so-called battery care, 
where the battery must recharge to a minimum level before reuse and may need to be discharged to a prescribed level before the next recharge~\cite{GopalakrishnanNP2022}.

\subparagraph*{Demand-Response.}

Demand-response (DR) research studies coordination in systems with distributed renewable generation such as photovoltaic (PV) units and batteries across residential~\cite{HuaETAL24,LezamaFFV20}, hotel~\cite{WamalwaIshimwe24}, community~\cite{SangareBFPP23}, and industrial~\cite{MerkertHISSS15,PravinLLW22} settings, predominantly aiming at reducing electricity costs, peak demand, or grid imports.
In residential PV-battery systems, Lezama \etal~\cite{LezamaFFV20} formulate an MILP model and solve their problem via evolutionary algorithms; 
optimality is not guaranteed due to the stochastic approach. 
Runtimes of one to three minutes are reported. 
Similarly, 
Hua \etal~\cite{HuaETAL24} introduce an energy consumption scheduler for 
interruptible and non-interruptible 
time-flexible appliances. 
Energy reduction for unscheduled against scheduled scenarios are compared,
but only for a fixed flexibility setting.
Neither varying degrees of flexibility nor runtimes are analyzed.
At larger scales, 
Wamalwa and Ishimwe~\cite{WamalwaIshimwe24} propose an MINLP model for a PV-battery-powered hotel building
to re-schedule flexible loads 
(e.g., washing machines, dishwashers, electric stoves)
taking the end-user appliance rescheduling inconvenience into account. 
For energy communities, 
Sangare \etal~\cite{SangareBFPP23} present an MILP model that is solved optimally for small 
and heuristically (potentially suboptimally) for larger instances.
Their Type B loads resemble our jobs. 
Across these PV-battery DR works~\cite{HuaETAL24,LezamaFFV20,SangareBFPP23,WamalwaIshimwe24}, flexibility is modeled as a property of loads but not systematically varied or analyzed with respect to feasibility or algorithmic behavior. 
None classifies the associated decision problems in terms of weak or strong NP-hardness, parameterized complexity, 
or tractability boundaries.
In contrast, our work treats flexibility as a central structural parameter. Rather than focusing on solver performance for fixed formulations, we characterize the complexity landscape of autarkic scheduling and analyze how increasing flexibility influences feasibility and computational difficulty.

Finally, 
we point out that there is a broader notion of 'energy flexibility' in the DR literature~\cite{HeiderRBLH21,LundLMS15,ShahidOTHDW23},
where 
flexibility is defined as the ability of an energy network to act in response to external signals, e.g., by temporal shifting of consumption or supply adaptation. 
To the best of knowledge,
these works focusing on 'energy flexibility' are more conceptual 
and neither provide complexity-theoretic analyses nor empirically study how varying flexibility levels affect external energy reductions or runtimes of solution algorithms.

\subparagraph*{Energy Autarky and PVs.}

Combining photovoltaic systems (PVs) and batteries 
on household level
is studied in the context of
cost-minimization~\cite{HeinischOGJ2019},
indicator-based self-consumption~\cite{BARZEGKARNTOVOM20201302},
emergency power supply facing blackouts~\cite{STENZEL2018165},
combination with heat pumps and thermal storage~\cite{LANGER2020115661},
or
autarky through decentralized batteries~\cite{QuernheimW2024}.
However,
to the best of our knowledge,
none of these studies investigate the use of flexible job scheduling to achieve household or community energy autarky.

\section{Polynomial-Time Solvable Cases}
\label{sec:ptime}
\appendixsection{sec:ptime}

If
every job's flexibility is one and thus no choice is left,
then \qeAcr{} is trivial.

\begin{observation}
 \label{obs:trivial}
 \qeAcr{} is linear-time solvable if~$\mx{\flex}=1$.
\end{observation}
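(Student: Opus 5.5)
If $\mx{\flex}=1$, then every job $J_i$ has seat $S_i=\{r_i\}$, since $\flex_i=|S_i|=1$ forces $d_i-\ell_i+1=r_i$. Hence there is exactly one candidate schedule, namely $\pi(J_i)=r_i$ for all $i\in N$. It satisfies $\pi(J_i)\ge r_i$ and $\pi(J_i)+\ell_i-1=d_i$, so it is a valid schedule. The instance is therefore a yes-instance if and only if this unique schedule is feasible, and the algorithm consists of building $\pi$ and checking feasibility.

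The plan is to check feasibility by evaluating \eqref{eq:Dpit}, \eqref{eq:net} and \eqref{eq:bat:update} for all $t\in\set{\tau+1}$, within time linear in the input size.

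\begin{enumerate}
\item Compute the consumption $E_\pi(t)$ for every $t\in T$ using a difference array $\Delta$ of length $\tau+1$. For each job, add $e_i$ to $\Delta[r_i]$ and subtract $e_i$ from $\Delta[d_i+1]$. A single prefix-sum pass then yields $E_\pi(t)$ for all $t$, in $O(n+\tau)$ time. This avoids the $O(\sum_i\ell_i)$ cost of summing each job over its active interval naively.
\item Compute $D_\pi(t)=F(t)-E_\pi(t)$ and $\Net_\pi(t)$ for each $t$, using a constant number of arithmetic operations per time step.
\item Iterate the battery recurrence from $\Bat_\pi(1)=\bz$ up to $\Bat_\pi(\tau+1)$, and reject as soon as some value is negative.
\end{enumerate}

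The total work is $O(n+\tau)$ arithmetic operations. Since the forecast $F$ is given as a list of $\tau$ values, this is linear in the input size.

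The only point needing care is what "linear time" means when the data are rationals. I would argue in the standard unit-cost arithmetic model, which is the convention implicit in the statement. If bit complexity were required, the number of operations is still linear and every intermediate number is polynomially bounded in size, but the bound would then be polynomial rather than strictly linear. I do not expect any real obstacle beyond fixing this convention and using the difference-array trick so the bound does not depend on the job lengths.
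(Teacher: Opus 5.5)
Your proof is correct and follows the paper's reasoning. With $\mx{\flex}=1$ every seat is the singleton $\{r_i\}$, so only the unique candidate schedule has to be verified, as in \cref{obs:inNP}. The paper leaves it at ``no choice is left'', so your difference-array computation of $E_\pi$ and your remark on the unit-cost arithmetic model add care that actually justifies ``linear'' rather than merely polynomial.
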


We will see that \qeAcr{} becomes \NP-hard already for $\mx{\flex}=2$.
Thus,
tractability persists only under further restrictions,
such as the following.
\begin{theorem}[\appref{prop:oneortwouniquerels}]
 \label{prop:oneortwouniquerels}
 \qeAcr{} is polynomial-time solvable 
 if~$\mx{\flex}=2$, 
 $\nrl=n$,
 and $\mx{\ell}=1$.
\end{theorem}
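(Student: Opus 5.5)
With $\mx{\ell}=1$ and $\mx{\flex}=2$, every job $J_i$ is a unit job that runs at a single time step chosen from a seat of at most two consecutive steps. If $\flex_i=1$, that step is $r_i$; if $\flex_i=2$, it is $r_i$ or $r_i+1$. Since $\nrl=n$, release dates are pairwise distinct. Hence at each time step $t$ there is at most one job that may start at $t$ or be deferred to $t+1$. All jobs with $\flex_i=1$ are fixed, and we fold their demand into the forecast by setting $F'(t)\ceq F(t)-\sum_{i:\flex_i=1,r_i=t}e_i$.

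For each time $t$, let $x_t\in\{0,1\}$ indicate whether the flexible job released at $t$ (if any) is deferred to $t+1$. The load at time $t$ then depends only on the choice for the job released at $t-1$ (deferred into $t$ or not) and the choice for the job released at $t$ (kept at $t$ or not). The instance is therefore a chain, and we solve it by a left-to-right dynamic program over time.

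The state before processing step $t$ is the pair (battery level $\Bat(t)$, whether a job was deferred into $t$). We cannot store all reachable battery levels, since there could be exponentially many. The key observation is monotonicity. The update \eqref{eq:bat:update} is monotone non-decreasing in the previous level, because $\min\{\bc,\cdot\}$ is monotone. Feasibility depends only on levels staying nonnegative. So among partial schedules with the same ``deferred'' flag, the one with the larger battery level dominates: any completion feasible from a lower level is feasible from a higher one, and yields pointwise at least as high levels afterwards.

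Hence it suffices to keep, for each $t$ and flag $f\in\{0,1\}$, the maximum achievable battery level $M_t(f)$, or $-\infty$ if none is feasible. The transition from $(t,f)$ chooses $x_t$, which applies only if there is a flexible job released at $t$. We compute $D(t)=F'(t)-f\cdot e_{\text{prev}}-(1-x_t)\,e_{\text{cur}}$, then $\Net(t)$ via \eqref{eq:net}, then the new level via \eqref{eq:bat:update}. We discard the result if the new level is negative, and keep the maximum per new flag $x_t$. One more constraint applies: a job released at $t$ with $\flex=2$ may be deferred only if $t+1\le d_i$, which holds by the definition of $\flex_i=2$. The instance is feasible iff $M_{\tau+1}(0)\ge 0$ is defined. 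This runs in $O(\tau+n)$ arithmetic operations on rationals whose bit-length stays polynomial, since each level is obtained by linear operations and mins from the input numbers.

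The main obstacle is rigorously justifying the dominance/exchange argument. We must check that $\Net$ is monotone in $D$, that $\Bat(t+1)$ is monotone in $\Bat(t)$ for fixed $D$, and that choices after $t$ depend only on the flag and not on earlier decisions. Distinct release dates guarantee the last point: no two flexible jobs compete for the same pair of steps in a way that creates more than one bit of state. We prove monotonicity by induction over $t$.
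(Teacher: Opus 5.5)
Your proposal is correct and follows essentially the same route as the paper: a left-to-right dynamic program with two entries per step that stores only the maximum reachable battery level, justified by the monotonicity of the battery update. The paper indexes by jobs in release order ($P_j[x]$, with $x$ indicating a start at $r_j$ or $r_j+1$), while you index by time steps with a one-bit ``deferred'' flag; your explicit dominance argument, folding of flexibility-one jobs into the forecast, and bit-length remark make precise points the paper leaves implicit.
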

As we will see,
a flexibility of two is crucial here, 
since \qeAcr{} becomes strongly \NP-hard 
for unbounded flexibilities even when $\nrl=n$
and $\mx{\ell}=1$
(\cref{prop:NPhard:BP:rdiff}).
Note that the setup of \cref{prop:oneortwouniquerels}
implies that all jobs are weakly-ordered by their starting times.
We will see that flexibility two and all jobs being weakly-ordered is not enough for tractability (\cref{prop:horiz:joborders}).

\begin{proof}[Proof sketch]
  We use dynamic programming:
  For each job~$J_j$ we have two table entries,
  corresponding to whether~$J_j$ is scheduled on~$r_j$ or~$r_j+1$,
  that stores the battery state for the next job.
  Since all release dates are distinct,
  we can now sweep ``from left to right'',
  with at most two jobs' horizons overlapping at any time.
  Thus, 
  a table entry's update only requires the battery state for the current job,
  stored for the previous job's two cases.
\end{proof}
\appendixproof{prop:oneortwouniquerels}
{
\begin{proof}
 Assume the jobs to be enumerated according to the order of their release times.
 Assume that the following~\cref{rr:jobstart} is inapplicable and hence,
 the first time step is in~$J_1$'s seat
 (which is unique due to~$r_i\neq r_j$).
 \begin{rrule}\label{rr:jobstart}
  If the first time step is not contained in any jobs' time window,
  then update the battery level~$\bz\ceq \min\{\bc, \bz + \min\{\bl,\lin \cdot F(1)\}\}$,
  $T\ceq\set{\tau-1}$,
  shift the forecast
  and 
  all job release dates and deadlines by $-1$,
  and remove the first time step.
 \end{rrule}
 We define a dynamic program
 $P_j[x]$ with~$j\in N$ and $x\in\{0,1\}$ such that
 $P_j[x]$ equals the maximum battery state at time step~$r_{j+1}$
 (let~$r_{N+1}\ceq \tau+1$)
 when all jobs with indices~$1,\dots,j$ are feasibly scheduled and job~$J_j$ is scheduled at~$r_j+x$,
 or~$-\infty$ if no feasible schedule exists for all jobs with indices~$1,\dots,j$ where job~$J_j$ is scheduled at~$r_j+x$.
 We define the auxiliary table~$P_j'[x]$ analogously to $P_j[x]$,
 but it tracks the battery state at time step~$d_j$.
 Let~$P_1'[0]$ be the battery state at time step 2 when exactly job~$J_1$ is scheduled to start at time step~$1$,
 and let $P_1'[1]$ be the battery state at time step 2 when 
 no job is scheduled at time step 1.
 For sake of readability, we
 only describe~$D(t)$ and hide all the transformation that goes into the battery update in $\langle\cdot\rangle$.
 We set
 \[
  P_j'[x] = \min\left\{
                \bc,
                \max_{y\in \{0,1\}}\left\{
                P_{j-1}[y] + \langle F(r_j)-(1-x)\cdot e_j-y\cdot e_{j-1}^*\rangle
                \right\}\right\}, 
 \]
   where
 \[
  e_{j-1}^* = \begin{cases}
               e_{j-1}, & \cif{}d_{j-1}=r_j, \\
               0, & \cotw;
              \end{cases}
 \]
 Let $P_j[x]$ be $P_j'[x]$ if~$d_j=r_{j+1}$,
 and otherwise the battery updated from~$d_j$ to~$r_{j+1}$ starting from~$P_j'[x]$,
 where we distinguish the first update regarding from which~$x$ we start:
 \[
 \Bat(d_j+1)[x] = \min\{\bc, P_j'[x] + \langle F(d_j) - x\cdot e_j \rangle\} 
 \]
 If~$P_j'[x]$ or~$\Bat(t)[x]$ for some~$t\in\set[d_j+1]{r_{j+1}}$ turns negative,
 then~$P_j[x]$ is set to $-\infty$.
 When~$P_N[x]\geq 0$ for some~$x\in\{0,1\}$,
 then we return~$\yes$.
 
 Each table has~$2\cdot N$ entries,
 and with two tables,
 each updated in polynomial time,
 filling the tables is doable in polynomial time.
 We next discuss the correctness. 
 For~$P_1[x]$,
 the correctness is clear from the definition.
 Suppose the correctness holds for job indices~$1,\dots,j-1$.
 We prove the correctness for~$P_j[x]$, $x\in\{0,1\}$.
 
 Assume there is a feasible schedule~$\pi$ for jobs~$1,\dots,j$ where~$J_j$ is scheduled at~$r_j+x$,
 $x\in\{0,1\}$,
 and the maximum battery state at time step~$r_{j+1}$ is~$b$.
 By induction,
 we know that there is~$y\in\{0,1\}$ such that~$P_{j-1}[y] \geq \Bat_{\pi}(r_j)\geq 0$.
 Since~$P'$ mimics the update of the battery by one time step with~$J_j$ being scheduled at~$r_j+x$,
 we have~$P_j'[x]\geq \Bat_{\pi}(d_j)$.
 If~$d_j=r_{j+1}$,
 we are done.
 Otherwise note that the battery updates are independent of all jobs but~$J_j$,
 which is equally considered in both updates,
 and hence this direction follows.
 
 Conversely,
 assume that~$P_j[x]\geq 0$ with~$j\in N$ and~$x\in\{0,1\}$
 (and hence,
 $P_j'[x]\geq 0$).
 Then, 
 $\max_{y\in\{0,1\}} \left\{P_{j-1}[y] + \langle F(r_j)-(1-x)\cdot e_j-y\cdot e_{j-1}^*\rangle\right\}$
 is not negative,
 implying that the $P_{j-1}[y]$ with corresponding~$y$ is not negative as well
 (recall that the only other option is~$-\infty$).
 By induction,
 we know that there is feasible schedule~$\pi'$ of the jobs with indices~$1,\dots,j-1$ such that~$J_{j-1}$ is scheduled at~$r_{j-1}+y$ and the maximum battery state is~$P_{j-1}[y]$ at time step~$r_{j}$.
 Since the battery update is deterministic for each choice of~$x$ in~$P_j[x]$,
 this direction follows.
\end{proof}
}

Finally,
we show that if all jobs are equal except for their release times,
and the battery can fully recharge in one time step,
then the problem becomes polynomial-time solvable.

\begin{theorem}[\appref{prop:greedy}]
 \label{prop:greedy}
 \qeAcr{} is polynomial-time solvable if $\mx{\ell}=1$,
 $e_i=e_j$ and
 $\flex_i=\flex_j$ for all~$i,j\in N$,
 and~$\bl=\bc$.
\end{theorem}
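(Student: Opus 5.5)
We plan to prove \cref{prop:greedy} by a greedy, earliest-deadline-first algorithm together with an exchange argument. Write $e$ for the common energy per time and $\flex$ for the common flexibility. Since $\mx{\ell}=1$, every job $J_i$ has deadline $d_i=r_i+\flex-1$. Hence ordering jobs by release date also orders them by deadline. Each job occupies exactly one time step, and its energy cost is the same, $e$, at every step.

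The key structural consequence of $\bl=\bc$ is the following. Consider the effective net energy at time $t$. If it is non-negative, the battery moves towards $\min\{\bc,\Bat(t)+\Net(t)\}$, and the loading-speed cap never binds beyond the capacity cap. If it is negative, the battery discharges by $\lout^{-1}$ times the deficit. So the battery dynamics form a monotone map $b\mapsto \min\{\bc, b+\Net(t)\}$, and this map is non-decreasing in the battery level $b$. A schedule is thus fully described by the numbers $k_t$ of jobs placed at each step $t$. Feasibility is equivalent to the battery staying non-negative when processing the vector $(k_1,\dots,k_\tau)$.

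The algorithm sweeps $t=1,\dots,\tau$ while keeping a pool of released but unscheduled jobs. At step $t$ it first computes the maximum number $k$ of jobs that can be run without $\Bat(t+1)<0$, i.e., the largest $k$ with $\Bat(t)+\Net(t)\ge 0$, where $\Net(t)$ is evaluated for $D(t)=F(t)-ke$. It then decides how many to actually run. Any pool job with deadline $t$ must be run, and if there are more than $k$ of these we reject. The delicate choice is what to do with the remaining slack.

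Running extra jobs now costs energy at the present step instead of at a later step. Because $\lin\le 1\le\lout^{-1}$ and because of capacity truncation, it is not a priori clear whether running early or late is better. We therefore intend to argue via exchange. Take an optimal feasible schedule and compare it with our schedule at the first step where they differ. Because all jobs are identical in energy and length and deadlines are ordered like releases, we may assume both schedules use EDF among pool jobs; this exchange of which jobs are run is trivially safe. The only freedom left is the count $k_t$.

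We would choose the rule ``run as few jobs as possible now (only the forced ones) unless the battery would otherwise hit capacity, in which case fill up to the point of avoiding wasted surplus.'' Concretely, we run additional jobs while $F(t)-k_te$ stays positive and $\Bat(t)+\lin(F(t)-k_te)\ge\bc$. Such jobs consume energy that would be truncated anyway, so they are free: the battery at $t+1$ is still $\bc$. Here $\bl=\bc$ is essential, because it guarantees that the next state depends only on the truncation at $\bc$. A job scheduled at such a step is dominated, since postponing it can only lower later battery levels, by monotonicity of the map. Conversely, any non-free early job lowers $\Bat(t+1)$, and by monotonicity a lower state never helps; deferring the job to a later step within its window costs at most the same deficit. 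The hard part is proving this last claim, because the cost at a later step could be larger ($\lout^{-1}$ versus $\lin$). Here we would use equal flexibility: the deferred job remains in the pool with the latest deadline among the jobs that are ordered after it, so deferring never creates a forced collision that the optimal schedule avoided. We would formalize this by an induction showing that the greedy battery level dominates the level of any feasible schedule with the same set of remaining pool jobs, and that its pool is lexicographically no more urgent. The main obstacle is making this dominance induction rigorous when the losses $\lin,\lout$ make the costs of early and late execution asymmetric.
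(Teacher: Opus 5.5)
\paragraph{Assessment.} The proposal has a genuine gap at exactly the step you flagged, and it cannot be closed. The claim that deferring a non-free job costs at most what running it now costs is false. Equal flexibility cannot repair this, because the failure already occurs with a single job.

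Take $\lin=\lout=\frac{1}{2}$, $\bc=\bl=10$, $\bz=0$, one job with window $[1,2]$ and $e=1$, and $F(1)=2$, $F(2)=0$. Your rule does not run the job at step~$1$, since it is not forced and $0+\frac{1}{2}<\bc$. So $\Bat(2)=1$, and the forced run at step~$2$ gives $\Bat(3)=1-2<0$. Running the job at step~$1$ instead gives $\Bat(2)=\Bat(3)=\frac{1}{2}$. The rule fails even with $\lin=\lout=1$, whenever truncation at the current step would only partially absorb a job.

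The paper goes the opposite way. It uses an EDF sweep that starts as many jobs as early as the battery allows. This is justified by pulling the next job of a feasible schedule forward to the first disagreement step $t^*$ and invoking $\Bat_\pi(t'')\le\Bat_{\pi'}(t'')+\lin e_j$. That inequality also fails when the pulled job lands on a step with $D\le 0$, where it costs $\lout^{-1}e_j$ rather than $\lin e_j$. Take windows $[1,2]$ and $[3,4]$, $e=1$, $F=(0,2,0,0)$, $\bz=2$, $\bc=\bl=10$, and $\lin=\lout=\frac{1}{2}$. The early sweep puts the first job at step~$1$, giving battery levels $2,0,1$, and the second job then fails at both steps $3$ and $4$. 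Starting the jobs at steps $2$ and $3$ instead gives battery levels $2,2,\frac{5}{2},\frac{1}{2},\frac{1}{2}$, which is feasible. So once $\lin\lout<1$, neither direction of greedy is safe.

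\paragraph{Missing idea.} You do not need to commit to a rule for $k_t$ at all. The two facts you already established suffice for a dynamic program over the counts.

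Order the jobs by release date as $J_1,\dots,J_n$; with equal flexibility and unit length this is also deadline order. By your swap argument, if the instance is feasible, some feasible schedule places exactly $J_1,\dots,J_m$ into steps $1,\dots,t$, where $m=\sum_{s\le t}k_s$.

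Define $V[t][m]$ as the maximum of $\Bat(t+1)$ over all window-respecting placements of $J_1,\dots,J_m$ into steps $1,\dots,t$ that keep the battery non-negative, and $-\infty$ if there is none. Set $V[0][0]=\bz$ and $V[0][m]=-\infty$ for $m>0$.

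Then $V[t][m]$ is the maximum of the one-step update of $V[t-1][m']$ with $D(t)=F(t)-(m-m')e$. The maximum ranges over all $m'\le m$ with either $m'=m$ or $r_m\le t\le d_{m'+1}$. Negative results are replaced by $-\infty$.

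Keeping only the maximum is correct because the update map $b\mapsto\min\{\bc,b+\min\{\bl,\Net(t)\}\}$ is non-decreasing in $b$. The instance is a \yes-instance if and only if $V[\tau][n]\ge 0$. This takes $O(\tau n^2)$ time and does not even use $\bl=\bc$.
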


%
\begin{proof}[Proof sketch]
  We run the following algorithm with initially $\calJ^*=\emptyset$.
  For~$t=1,\dots,\tau$ in ascending order,
  first add all jobs with release date~$t$ to~$\calJ^*$
  and then, 
  in ascending order of their deadlines,
  try to schedule at time step~$t$ as many jobs as the forecast and battery at~$t$ allows.
  Intuitively,
  \cref{algo:greedy} is correct in this case 
  due to the following.
  On the one hand,
  the jobs are nicely orderable and as similar that we can swap them in a
  schedule to fit the order.
  On the other hand,
  since the battery can recharge from 0 to 100\% in one time step,
  there is no incentive to delay a job.
  Interestingly,
  our algorithm as such is incorrect if we drop this requirement of~$\bc=\bl$,
  as shown by the example in \cref{fig:greedy} (see appendix).
  \toappendix{
  \begin{figure}[t]
    \centering
    \begin{tikzpicture}[font=\footnotesize]
    \def\xr{1}
    \def\yr{1}
    \def\xs{0.75*\xr}
    \def\ys{0.33*\xr}
    \def\xss{4.75*\xr}
    
    \newcommand{\cntexSkel}[1]{%
      \draw[->] (0,0) to (4*\xs,0);
      \draw[->] (0,0) to (0,4.5*\ys);
      \draw[->] (4*\xs,0) to (4*\xs,4.5*\ys);
      \foreach \t in {1,2,3}{
        \draw[-] (\t*\xs,0+0.1*\yr) -- (\t*\xs,0-0.1*\yr);
        \node at (\t*\xs-0.5*\xs,0*\ys)[anchor=north]{\t};
      }
      \node at (4*\xs-0.5*\xs,0*\ys)[anchor=north]{4};
      \node at (4*\xs,0*\ys)[anchor=north]{~~$t$};
      \foreach \y in {0,1,2,3,4}{
        \draw[-] (0+0.1*\xr,\y*\ys) -- (0-0.1*\xr,\y*\ys);
        \node at (0,\y*\ys)[anchor=east]{\y};
      }
      \node at (0,4.5*\ys)[anchor=south]{$F$ \& $B$};
      \foreach \y in {1,2,3,4}{
        \draw[-] (4*\xs+0.1*\xr,\y*\ys) -- (4*\xs-0.1*\xr,\y*\ys);
        \node at (4*\xs,\y*\ys-0.5*\ys)[anchor=west]{\y};
      }
      \node at (4*\xs,4.5*\ys)[anchor=south]{Job ID};
      \draw[color=orange,ultra thick,opacity=0.5] (0,0) -- (1*\xs,0) -- (1*\xs,3*\ys) -- (2*\xs,3*\ys) -- (2*\xs,0*\ys) -- (4*\xs,0*\ys);
      \ifnum#1=0
        \def\locop{0.15}
      \else
        \def\locop{0.1}
      \fi
      \draw[draw=none,fill=yellow,opacity=\locop] (0,0) rectangle (2*\xs,1*\ys);
      \draw[draw=none,fill=green,opacity=\locop] (0,1*\ys) rectangle (2*\xs,2*\ys);
      \draw[draw=none,fill=magenta,opacity=\locop] (2*\xs,2*\ys) rectangle (4*\xs,3*\ys);
      \draw[draw=none,fill=brown,opacity=\locop] (2*\xs,3*\ys) rectangle (4*\xs,4*\ys);
      
      \ifnum#1=0
        \draw[color=blue,ultra thick,opacity=0.5] (0*\xs,1*\ys) -- (1*\xs,1*\ys);
      \fi
      \ifnum#1=1
        \draw[draw=yellow,fill=yellow,opacity=0.5] (0,0) rectangle (1*\xs,1*\ys);
        \draw[draw=green,fill=green,opacity=0.4] (1*\xs,1*\ys) rectangle (2*\xs,2*\ys);
        \draw[draw=magenta,fill=magenta,opacity=0.4] (2*\xs,2*\ys) rectangle (3*\xs,3*\ys);
        \draw[draw=brown,fill=brown,opacity=0.4,dashed] (3*\xs,3*\ys) rectangle (4*\xs,4*\ys);
        \draw[color=blue,ultra thick,opacity=0.5] (0*\xs,1*\ys) -- (1*\xs,1*\ys) -- (1*\xs,0*\ys) -- (2*\xs,0*\ys) -- (2*\xs,1*\ys) -- (3*\xs,1*\ys) -- (3*\xs,0*\ys) -- (4*\xs,0*\ys) -- (4*\xs,-1*\ys);
      \fi
      
      \ifnum#1=2
        \draw[draw=yellow,fill=yellow,opacity=0.5] (1*\xs,0) rectangle (2*\xs,1*\ys);
        \draw[draw=green,fill=green,opacity=0.4] (1*\xs,1*\ys) rectangle (2*\xs,2*\ys);
        \draw[draw=magenta,fill=magenta,opacity=0.4] (2*\xs,2*\ys) rectangle (3*\xs,3*\ys);
        \draw[draw=brown,fill=brown,opacity=0.4] (3*\xs,3*\ys) rectangle (4*\xs,4*\ys);
        
        \draw[color=blue,ultra thick,opacity=0.5] (0*\xs,1*\ys) -- (1*\xs,1*\ys) -- (2*\xs,1*\ys) -- (2*\xs,2*\ys) -- (3*\xs,2*\ys) -- (3*\xs,1*\ys) -- (4*\xs,1*\ys) -- (4*\xs,0*\ys);
      \fi
    }
    
    \newcommand{\mylbl}[1]{\node at (-1*\xs,5.375*\ys)[]{(#1)};}
    
    \begin{scope}
      \mylbl{a}
      \cntexSkel{0}
    \end{scope}
    
    \begin{scope}[xshift=1*\xss cm]
      \mylbl{b}
      \cntexSkel{1}
    \end{scope}
    
    \begin{scope}[xshift=2*\xss cm]
      \mylbl{c}
      \cntexSkel{2}
    \end{scope}

    \end{tikzpicture}
    \caption{Counter example with~$\bl=1<\bc$,
    where the orange line describes the forecast,
    the blue line the battery,
    and filled rectangles either a job's horizon (more opaque)
    or a scheduled job (less opaque).
    (a) The input instance.
    (b) An infeasible schedule output by \cref{algo:greedy}.
    (c) A feasible schedule.
    }
    \label{fig:greedy}
  \end{figure}
  }
\end{proof}
\appendixproof{prop:greedy}
{
\begin{proof}
  Consider the following heuristic given in \cref{algo:greedy}.
  \begin{myalgo}[Greedy]\label{algo:greedy}
  Greedily schedule jobs ``from left to right'' as early as possible:
  \begin{enumerate}
  \item Sort jobs by ascending deadlines and let~$\calJ^*=\emptyset$.
  \item For~$t=1,\dots,\tau$ in ascending order:
  \begin{enumerate}
    \item Add all jobs with release date~$t$ to~$\calJ^*$ .
    \item If there is a job~$J_j$ in~$\calJ^*$ with~$d_j-\ell_j+1<t$,
    return~\no.
    \item For each job~$J_j$ in~$\calJ^*$ in increasing order of deadlines (break ties lexicographically, i.e., $J_j$ before~$J_i$ if and only if~$j<i$),
    pretend to schedule job~$J_j$ starting at~$t$:
    If $\Bat_{\pi}(t') + \Net_{\pi}(t') \geq 0$ for all~$t\leq t'\leq \tau$,
    then schedule job~$J_j$ at time~$t$ and delete it from~$\calJ^*$.
  \end{enumerate}
  \item If~$\calJ^*=\emptyset$, then return~\yes,
  otherwise return~\no.
  \end{enumerate}
  \end{myalgo}

  We prove that \cref{algo:greedy} is correct,
  that is,
  $I$ is a \yes-instance if and only if \cref{algo:greedy} returns~\yes.
  If \cref{algo:greedy} schedules all jobs,
  then $I$ is a \yes-instance,
  since each job is scheduled to start within its seat 
  and the battery stays positive due to scheduling condition of~$\Bat_{\pi}(t) + \Net_{\pi}(t) \geq 0$ where checking only one time step is sufficient due to~$\ell_i=1$.

  Conversely,
  let~$I$ be a \yes-instance.
  Let~$\alpha\colon N\to N$ be an order of the jobs indices such that
  $\alpha(i) < \alpha(j)$ if and only if~$r_i< r_j$ or $r_i=r_j$ and~$i$ is lexicographically smaller than~$j$.
  We say a schedule \emph{respects~$\alpha$}
  when~$\pi(J_i) \leq \pi(J_j) \iff \alpha(i)\leq \alpha(j)$.
  Observe that,
  by definition,
  \cref{algo:greedy} respects~$\alpha$.
  Now we claim that there is a solution~$\pi$ for~$I$ that \emph{respects~$\alpha$}.
  Let~$\pi$ such that~$\pi(J_i)>\pi(J_j)$ for two distinct jobs~$J_i$ and~$J_j$ with~$\alpha(i)<\alpha(j)$.
  By the definition of~$\alpha$,
  $r_j\geq r_i$.
  Since the horizon and energy consumptions are the same,
  swapping the scheduled starting times for $J_i$ and~$J_j$ yields again a feasible schedule.
  Iteratively applying this argument gives the sought feasible schedule.

  Let~$\pi$ be a schedule respecting~$\alpha$ such that it mimics \cref{algo:greedy} as closely as possible,
  that is,
  agrees with assigning the jobs like \cref{algo:greedy} iteratively for~$t=1,\dots,\tau$
  as closely as possible.
  If they all agree,
  we know that \cref{algo:greedy} returns~\yes.
  Suppose towards a contradiction
  that no such fully-agreeing schedule~$\pi$ exist.
  Let~$t^*$ be the smallest time step where~$\pi$ disagrees with \cref{algo:greedy}.
  There are essentially three possibilities for the set~$\calJ_{t^*}$ of all jobs 
  scheduled by~$\pi$ at~$t^*$:

  \xcase{1}{The jobs sets are of the same size but include different jobs}
  This case contradicts the fact that both schedules respect~$\alpha$.

  \xcase{2}{The job set size is larger}
  This case contradicts
  the definition of \cref{algo:greedy}
  since it greedily schedules as many jobs as possible and all jobs have the same energy consumption.

  \xcase{3}{The job set size is smaller}
  We claim that we can move a job from the next closest time step to~$t^*$ while preserving feasibility.
  Let~$t'>t^*$ be the first time step when a job is scheduled by~$\pi$
  after~$t^*$.
  Since~$\pi$ respects~$\alpha$,
  the smallest~$j$ according to~$\alpha$ for a job~$J_j$ 
  not scheduled in~$t^*$ must be scheduled in~$t'$.
  Let $\pi'$ be schedule~$\pi$ where only job~$J_j$ is rescheduled to start at~$t^*$.
  We claim that~$\pi'$ is feasible.
  Note that~$J_j$ can be scheduled by \cref{algo:greedy},
  and thus its release date is at most~$t^*$.
  Moreover,
  for all~$t^*<t''\leq t'$,
  we have
  $\Bat_{\pi'}(t'') \geq 0$ by~\cref{algo:greedy}
  and
  \begin{align}
     \Bat_{\pi}(t'') \leq  \Bat_{\pi'}(t'') + \lin\cdot e_j. \label{eq:greedy:ub}
  \end{align}
  It suffices to show that $\Bat_{\pi'}(t'+1) \geq \Bat_{\pi}(t'+1)$.
  If~$\Bat_{\pi}(t'+1)=\bc$, 
  then there is nothing to show.
  We distinguish whether $F(t') - \sum_{i\in \jis(t')} e_i$ is non-negative or negative.
  Recall that~$\bl=\bc$.

  \xcase{(a)}{$F(t') - \sum_{i\in \jis(t)} e_i\geq 0$}
  Then we have
  \begin{align*}
    \Bat_{\pi'}(t'+1) 
    &
    = \Bat_{\pi'}(t') + \lin\cdot (F(t') - \sum_{i\in \jis(t')\setminus\{j\}} e_i)
    \\
    &
    = \Bat_{\pi'}(t') + \lin\cdot e_j + \lin\cdot (F(t') - \sum_{i\in \jis(t')\setminus\{j\}} e_i - e_j)
    \\
    &
    \geq \Bat_{\pi}(t') + \lin\cdot(F(t') - \sum_{i\in \jis(t')\setminus\{j\}} e_i - e_j)
    \geq 
    \Bat_{\pi}(t'+1).
  \end{align*}
  This contradicts the choice of~$\pi$.

  \xcase{(b)}{$F(t') - \sum_{i\in \jis(t)} e_i< 0$}
  We further distinguish whether $F(t') - \sum_{i\in \jis(t)\setminus\{j\}} e_i$ is non-negative or negative.
  If $F(t') - \sum_{i\in \jis(t')\setminus\{j\}} e_i<0$,
  then
  \begin{align*}
    \Bat_{\pi'}(t'+1)
    &
    = \Bat_{\pi'}(t') + \lout^{-1}\cdot (F(t') - \sum_{i\in \jis(t')\setminus\{j\}} e_i)
    \\
    &
    = \Bat_{\pi'}(t') + \lout^{-1}\cdot e_j + \lout^{-1}\cdot(F(t') - \sum_{i\in \jis(t')\setminus\{j\}} e_i - e_j)
    \\
    &
    \geq \Bat_{\pi'}(t') + \lin\cdot e_j + \lout^{-1}\cdot(F(t') - \sum_{i\in \jis(t')\setminus\{j\}} e_i - e_j)
    \\
    &
    \geq \Bat_{\pi}(t') + \lout^{-1}\cdot (F(t') - \sum_{i\in \jis(t')\setminus\{j\}} e_i - e_j)
    \geq
    \Bat_{\pi}(t'+1).
  \end{align*}
  If $F(t') - \sum_{i\in \jis(t')\setminus\{j\}} e_i\geq 0$,
  then
  \begin{align*}
    \Bat_{\pi'}(t'+1)
    &
    = \Bat_{\pi'}(t') + \lin\cdot (F(t') - \sum_{i\in \jis(t')\setminus\{j\}} e_i)
    \\
    &
    = \Bat_{\pi'}(t') + \lin\cdot e_j + \lin\cdot(F(t') - \sum_{i\in \jis(t')\setminus\{j\}} e_i - e_j)
    \\
    &
    \geq \Bat_{\pi'}(t') + \lout^{-1}\cdot(F(t') - \sum_{i\in \jis(t')\setminus\{j\}} e_i - e_j)
    \\
    &
    \geq
    \Bat_{\pi}(t'+1).
  \end{align*}
  Either case contradicts the choice of~$\pi$.
\end{proof}
}

\section{Hardness Results}
\label{sec:hardness}
\appendixsection{sec:hardness}

We next show that a flexibility of two,
as opposed to one,
already makes the problem computationally hard.
Herein,
we observe a hierarchy:
for a flexibility of two, 
we obtain weak \NP-hardness,
and for larger (unbounded) flexibilities,
we obtain strong \NP-hardness.
\subsection{Weak Hardness for Flexibility Two}

We show that \qeAcr{} is weakly 
\NP-hard even for flexibility two.
We give two hardness reductions,
both from the 
well-known (weakly) \NP-hard~\cite{Kar72} \prob{Partition} problem
(see \cref{app:sec:hardness}).
\toappendix{
\decprob{\partTsc{}}{partition}
{a multiset~$X=\{x_1,\dots,x_n\}$ of numbers from~$\N$}
{there are two disjoint subsets~$X_1,X_2$ of~$X$ such that~$X=X_1\cup X_2$ 
and~$\sum_{x_i\in X_1} x_i= \sum_{x_j\in X_2} x_j$}
}
We first show that \qeAcr{} is already hard for two time steps.
Intuitively,
we can model the numbers in any \textsc{Partition} instance with the jobs' energies and the task to partition with 
flexibility two and length one 
for each job 
facing only two time steps overall.

\begin{theorem}[\appref{prop:NPhard:Part1}]
 \label{prop:NPhard:Part1}
 \qeAcr{} without battery 
 is weakly \NP-hard even if~$\flex=\tau=2$,
 $\mx{\ell}=1$,
 $\nrl=\ndl=1$,
 and~$F(1)=F(2)$.
\end{theorem}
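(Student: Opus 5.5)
We reduce from \partTsc{}. Given a multiset $X=\{x_1,\dots,x_n\}$ with $S\ceq\sum_{i=1}^n x_i$, we first check that $S$ is even; if it is not, we output a fixed trivial \no-instance, for example a single job whose energy exceeds the forecast. Otherwise we construct the following instance:
\begin{itemize}
\item the time window is $T=\{1,2\}$, so $\tau=2$;
\item the forecast is constant, $F(1)=F(2)=S/2$;
\item there is no battery, i.e., $\bc=0$ and $\lin=\lout=1$ (we may also set $\bz=\bl=0$);
\item for each $i\in N$ there is a job $J_i=(r_i,d_i,\ell_i,e_i)=(1,2,1,x_i)$.
\end{itemize}
Every job then has horizon $2$, length $1$ and flexibility $\flex_i=2$, and all release dates and deadlines coincide, so $\nrl=\ndl=1$. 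The construction is clearly polynomial, with numbers encoded in binary; this is why we obtain weak hardness.

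For correctness we use the remark after the definition of ``no battery'': feasibility is then equivalent to $D_\pi(t)\ge 0$ for all $t\in T$. To confirm this for the last step, note that $\Bat_\pi(1)=0$ and, since $\bc=0$, we get $\Bat_\pi(t+1)=\min\{0,\Bat_\pi(t)+\min\{0,\Net_\pi(t)\}\}$. Hence the battery stays at $0$ exactly as long as each $D_\pi(t)\ge 0$, and otherwise it becomes negative at time $t+1$. The auxiliary step $\tau+1$ thus checks $D_\pi(2)\ge0$.

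A schedule $\pi$ is the same thing as a split of $N$ into the jobs started at time $1$ and those started at time $2$. Writing $X_1$ and $X_2$ for the corresponding parts of $X$, we have
\[
D_\pi(1)=S/2-\sum_{x\in X_1}x, \qquad D_\pi(2)=S/2-\sum_{x\in X_2}x .
\]
Both are non-negative if and only if each part has sum at most $S/2$. Since the two sums add up to $S$, this holds if and only if both sums equal $S/2$, that is, exactly when $(X_1,X_2)$ is a valid partition. This gives both directions of the equivalence at once.

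There is no real obstacle here. The only point needing care is to verify, from equations~\eqref{eq:net} and~\eqref{eq:bat:update}, that the no-battery convention really turns feasibility into the per-step condition $D_\pi(t)\ge0$, including the check at step $\tau+1$ for $t=2$. The odd-$S$ case is handled separately so that $F$ takes a well-defined value; alternatively, one can scale all numbers by $2$.
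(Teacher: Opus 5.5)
Your proposal is correct and is essentially the paper's own reduction from \partTsc{}: jobs $(1,2,1,x_i)$, a constant forecast of half the total sum on two time steps, no battery, and the correspondence $\pi(J_j)=t \iff x_j\in X_t$. Your separate handling of odd $S$ is unnecessary, since $F$ may take the rational value $S/2$ and the constructed instance is then simply a \no-instance, but it does no harm. Your explicit check that $\bc=0$ reduces feasibility to $D_\pi(t)\ge 0$ spells out a detail the paper leaves implicit.
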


\appendixproof{prop:NPhard:Part1}
{
\begin{proof}
 Let~$I=(X=\{x_1,\dots,x_n\})$ be an instance of \textsc{Partition}
 and let~$\sigma=\sum_{i=1}^n x_i$.
 We construct an instance~$I'$ of~\qeAcr{} without battery 
 and with~$\tau=2$.
 For each~$i\in N$,
 add a job~$J_i$ with $r_i = 1$, $d_i = \tau$, $\ell_i =1$, 
 and $e_i = x_i$.
 The forecast is $ F(t) = \sigma/2$ for each~$t\in\{1,2\}$.
 We claim that~$I$ is a \yes-instance
 if and only if~$I'$ is a \yes-instance.
 The correctness follows directly from our convention for schedule~$\pi$ and solution~$(X_1,X_2)$
 that $\pi(J_j)=t \iff x_j\in X_t$ for every~$j\in N$.
 Thus,
 all jobs are scheduled if and only if $(X_1,X_2)$ is a partition of~$X$.
 Moreover,
 $F(t) - \sum_{j\in \jis(t)} e_j \geq 0 \iff \sum_{x_j\in X_t} x_j \leq \sigma/2$.
\end{proof}
}

\noindent
Even if all jobs are weakly~ordered by their starting times,
\qeAcr{} is computationally challenging.

\begin{theorem}[\appref{prop:horiz:joborders}]
 \label{prop:horiz:joborders}
 \qeAcr{} is weakly \NP-hard even if $\mx{\flex}=2$, 
 $\rho=n$,
 and $\lin=\lout=1$.
\end{theorem}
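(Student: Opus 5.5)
We reduce from \partTsc{}, as in \cref{prop:NPhard:Part1}, but we must now meet $\nrl=n$. By \cref{prop:oneortwouniquerels} this forces jobs of length at least two: with $\mx{\ell}=1$, $\mx{\flex}=2$ and $\nrl=n$ the problem is polynomial. So the plan is to give every job length $2$ (or some small constant) and flexibility exactly $2$. Each job~$J_i$ gets its own private block of consecutive time steps with a distinct release date, so that the job graph is essentially a path of disjoint gadgets. The choice ``start at $r_i$'' versus ``start at $r_i+1$'' then encodes whether $x_i\in X_1$ or $x_i\in X_2$. Since all numbers are encoded in binary in the forecast and energies, weak hardness is the right target.

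The main difficulty is that the battery is lossless ($\lin=\lout=1$). If neither the capacity~$\bc$ nor the loading speed~$\bl$ binds, every schedule has the same total net energy. The two choices of a job can therefore differ only through truncation: energy spilled because the battery is full, or energy that could not be loaded because of~$\bl$. The gadget for $J_i$ (energy $e_i=x_i$, length $2$, window of three steps $a_i,a_i+1,a_i+2$) will thus use a forecast profile with the following effects.
\begin{itemize}
\item The early start absorbs a surplus of $x_i$ exactly when it is produced, so the battery level after the gadget is unchanged.
\item The late start leaves that surplus unabsorbed. Because the gadget is entered with a full battery, or because the surplus exceeds~$\bl$, the surplus is wasted, and the late start later draws $x_i$ from the battery.
\end{itemize}
As a result, after gadget~$i$ the battery has lost exactly $x_i$ if $J_i$ starts late and nothing if it starts early.

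The gadgets are arranged in two phases. This gives an upper bound and a lower bound on $\sum_{x_i\in X_2}x_i$.
\begin{itemize}
\item \emph{Upper bound.} Start with $\bz=\bc$, and choose~$\bc$ so that the battery stays non-negative throughout the gadget phase if and only if the total loss is at most $\sigma/2$. Here $\sigma=\sum_i x_i$, and a constant offset absorbs the transient dips inside the gadgets.
\item \emph{Lower bound.} Run a second, mirrored copy of each gadget with a job $J_i'$ whose choice is tied to that of $J_i$. The tie comes from the capacity boundary: if $J_i$ is early, the battery is at a different level when $J_i'$ is processed, and only the opposite choice for $J_i'$ stays feasible. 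In the mirrored copy, \emph{early} choices cost energy, which bounds $\sum_{x_i\in X_1}x_i\le\sigma/2$.
\end{itemize}
Alternatively, we can append a final checkpoint with a fixed deficit that is matched against the wasted amount, if waste can be made to appear as a gain in one phase. Release dates are distinct across all jobs by construction.

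The correctness argument is routine once the gadget is fixed. Given a partition $(X_1,X_2)$, schedule $J_i$ early iff $x_i\in X_1$ and track the battery level gadget by gadget to verify feasibility. Conversely, read off $X_1$ from the early jobs; the two checkpoint inequalities then give equal sums.

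The step I expect to be the main obstacle is designing the gadget so that its effect is state-independent. Because of truncation, the effect of a choice depends on the current battery level. I would first try to force the battery to be full at the start of every gadget in the first phase. To do this, I would insert between gadgets a step with forecast $\bc$ and $\bl\ge\bc$, which refills the battery and erases all history. Information must then be carried by a separate mechanism rather than by the battery level, for example overlapping consecutive windows so that $J_i$ starting late pushes load into $J_{i+1}$'s block. If this fails, I would fall back to a chain construction in which each choice shifts a cumulative deficit that is checked only once at the end.
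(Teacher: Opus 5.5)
\textbf{Gap: no reduction is given, and the committed design provably fails.} Your proposal does not yet contain a reduction. The gadget's forecast profile, the battery parameters, and above all the mechanism tying $J_i$ to $J_i'$ are left open, and you yourself flag the state-dependence of the gadget as unresolved. More seriously, the architecture you commit to cannot give hardness unless $\classP=\NP$.

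\emph{Private blocks.} If every job lives in its own block of time steps, all components of the job graph have constant order, so \cref{obs:concom} solves the instance in polynomial time. The reason is that the update $B\mapsto\min\{\bc,B+\min\{\bl,\Net\}\}$ is monotone in $B$. Hence passing on the maximum reachable battery level from block to block is optimal, and whatever coupling the battery level induces between $J_i$ and $J_i'$ is resolved greedily.

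\emph{Overlapping windows.} These do not rescue constant lengths either. Suppose $\mx{\ell}\le L$, $\mx{\flex}=2$ and $\nrl=n$. Then every window has at most $L+1$ steps, so by distinct release dates at most $L+1$ windows contain any time step~$t$. A left-to-right dynamic program can store, for each of the at most $2^{L+1}$ start-choice combinations of these jobs, the maximum battery level at~$t$. By the same monotonicity this is correct, and it runs in polynomial time; it is the natural generalization of \cref{prop:oneortwouniquerels}. So job lengths must be unbounded.

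\textbf{The missing idea: long jobs with a common deadline.} This is what the paper does, and it realizes your fallback of a cumulative quantity checked once at the end. Take $\tau=n+2$ and $J_i=(i,n+2,n+2-i,x_i)$. Then $J_i$ runs on $[i,n+1]$ or on $[i+1,n+2]$, so every job reaches the checkpoint.
\begin{itemize}
\item For $t\le n$, set $F(t)=\sum_{j\le t}x_j$. The forecast exactly pays for the jobs already running, so $D_{\pi}(t)=x_t$ if $J_t$ starts late and $0$ otherwise.
\item With battery $(0,\sigma/2,\sigma/2,1,1)$, and w.l.o.g.\ $x_i\le\sigma/2$, this yields $\Bat_{\pi}(n+1)=\min\{\sigma/2,\sum_{x_i\in X_2}x_i\}$, where $X_2$ collects the late jobs.
\item Set $F(n+1)=F(n+2)=\sigma/2$. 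At step $n+1$ all jobs run (demand $\sigma$), which forces a full battery, i.e., $\sum_{x_i\in X_2}x_i\ge\sigma/2$, and then empties it.
\item At step $n+2$ only the late jobs run, which forces $\sum_{x_i\in X_2}x_i\le\sigma/2$.
\end{itemize}

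Your insight that truncation is essential is correct. Without it, delaying a job only raises intermediate battery levels, so the all-late schedule would dominate. Here the truncation is the capacity $\bc=\sigma/2$ spilling any excess, which makes the second check non-vacuous. No second phase or mirrored gadgets are needed.
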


\noindent
Note that $\mx{\flex}=2$ and 
$\rho=n$
imply that all jobs are weakly ordered by their starting times.
Recall that by \cref{prop:oneortwouniquerels},
we know that
when additionally~$\ell=1$,
\qeAcr{} is polynomial-time solvable.
Interestingly, 
this is the only reduction in our work that makes use of the battery.

\appendixproof{prop:horiz:joborders}
{
\begin{proof}
 Let~$I=(X = \{x_1, \dots, x_n\})$ be an instance of \textsc{Partition},
 and let $\sigma=\sum_{i=1}^n x_i$.
 We construct an instance~$I'$ of \qeAcr{} as follows.
 Let~$\tau=n+2$.
 For each~$x_i\in X$,
 construct job~$J_i$ with~$r_i=i$, $d_i=\tau$,
 $\ell_i=h_i-1$, and~$e_i=x_i$.
 The battery is~$\Bat=(0,\sigma/2,\sigma/2,1,1)$.
 The forecast is
 $F(t)= \sum_{i=1}^{t} x_i$ if $t\in N$,
 and $F(t)=\sigma/2$ if~$t\in\{n+1,n+2\}$.
 This finishes the construction.
 Note that~$J_i$ is scheduled either at time step~$i$ or~$i+1$,
 and hence, all jobs are weakly-ordered.\footnote{Note that with $2n+2$ time steps,
 the construction can be adjusted to enforce
 a strict job order.}%
 We claim that~$I$ is a \yes-instance if and only if~$I'$ is a \yes-instance.
 
 Intuitively,
 the correctness stems from the following.
 At time step~$n+1$,
 by construction,
 all jobs are scheduled to run.
 Hence,
 $\sigma$ energy is required.
 Since only~$\sigma/2$ energy is provided by the forecast,
 the battery must be at full state,
 i.e.,
 at~$\sigma/2$.
 Then,
 in time step~$n+2$,
 the battery is empty and again only~$\sigma/2$ energy is provided by the forecast.
 Now,
 the key insight is:
 If there is less than~$\sigma/2$ energy consumed in the last time step,
 the battery is also only charged with less than~$\sigma/2$
 in the first~$n$ time steps;
 a contradiction.

 \RD{}
 Let~$(X_1,X_2)$ be a solution to~$I$.
 For each~$i\in N$,
 set~$\pi(J_i)=i$ if~$x_i\in X_1$,
 and $\pi(J_i)=i+1$ if $x_i\in X_2$.
 Since~$(X_1,X_2)$ is a partition,
 all jobs are scheduled.
 Now,
 for the battery, we have that for~$i\in N$ it holds that ($\indic_A\in\{0,1\}$ denotes the indicator function that evaluates to 1 if and only if statement~$A$ is true)
 \begin{align} 
  \begin{aligned}
    B(i) 
    &= B(i-1)+F(i)-\sum_{j\in \jis(i)} x_j 
    =  B(i-1)+F(i)-\sum_{j=1}^{i-1} x_j - \indic_{x_j\in X_1} \cdot x_j
    \\
    &=  B(i-1) + x_i - \indic_{x_i\in X_1} \cdot x_i = B(i-1) + \indic_{x_i\in X_2} \cdot x_i
    \label{eq:prop:horiz:joborders}
  \end{aligned}
 \end{align}
 Thus,
 at time step~$n+1$, 
 we have that~$B(n+1) = \sum_{x_i\in X_2} x_i = \sigma/2$.
 Since $d_i=\tau$ and $\ell_i=h_i-1$,
 we know that~$B(n+2) = B(n+1) + F(n+1) - \sigma = 0$.
 Finally,
 we have that
 $B(n+3) = B(n+2) + F(n+2) - \sum_{j\in \jis(n+2)} x_j = \sigma/2 - \sum_{x_j\in X_2} x_j = 0$.
 
 \LD{}
 Let~$\pi$ be a solution to~$I'$.
 Since $d_i=\tau$ and $\ell_i=h_i-1$,
 we know that~$B(n+2) = B(n+1) + F(n+1) - \sigma = 0$,
 and thus~$B(n+1)=\sigma/2$.
 Let~$X_2 \ceq \{x_j\in X\mid j\in \jis(n+2)\}$
 and~$X_1\ceq X\setminus X_2$.
 We claim that~$(X_1,X_2)$ is a solution to~$I$.
 Since $B(n+3)\geq 0$,
 we have that~$\sum_{x_j\in X_2} x_j\leq \sigma/2$.
 Due to~\eqref{eq:prop:horiz:joborders},
 we know that~$\sigma/2=B(n+1)\leq \sum_{x_j\in X_2} x_j$.
 It follows that~$\sum_{x_j\in X_2} x_j=\sigma/2$.
\end{proof}
}

\subsection{Strong Hardness for Larger Flexibilities}

The distinct reductions behind \cref{prop:whard,prop:NPhard:BP:rdiff,prop:NPhard:BP:rsame,prop:NOPK:cco} 
use the same well-known \NP- and \W{1}-hard \cite{GareyJ79,JANSEN201339} problem \ubpTsc{} (see \cref{app:sec:hardness}).
Similar for each reduction is that,
intuitively,
each bin is represented as some distinct time interval.
\begin{theorem}[\appref{prop:whard}]
 \label{prop:whard}
 \qeAcr{} without battery is strongly \NP-hard
 even if~$e_i=1$ for all~$i\in N$ and $\Im(F)=\{1\}$.
 In this case,
 the problem is also
 \W{1}-hard \wpb{}~$\nrl+\ndl$.
\end{theorem}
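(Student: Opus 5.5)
We reduce from \ubpTsc{}: given item sizes $s_1,\dots,s_n\in\N$ in unary, a number $\nbin$ of bins and a bin size $\sbin$, decide whether the items can be partitioned into $\nbin$ groups, each of total size at most~$\sbin$. This problem is strongly \NP-hard and \W{1}-hard \wpb{}~$\nbin$~\cite{GareyJ79,JANSEN201339}. The idea is to turn the statement's restrictions into a structural constraint. Without a battery, feasibility means $D_\pi(t)\geq 0$ for all~$t$. With $F\equiv 1$ and $e_i=1$, this says that at most one job is active per time step. So the instance becomes a single-machine non-preemptive scheduling problem, and bins can be modelled as disjoint time segments of length~$\sbin$.

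\textbf{Construction.} Let $\tau\ceq \nbin(\sbin+1)-1$ and $F(t)=1$ for all $t\in T$, so $\Im(F)=\{1\}$.
\begin{itemize}
\item For each item~$i$, add a job $J_i=(1,\tau,s_i,1)$.
\item For each $j\in\set{\nbin-1}$, add a separator job $J'_j=(j(\sbin+1),\, j(\sbin+1),\, 1,\, 1)$. It has flexibility one, so it is fixed at time step $j(\sbin+1)$.
\end{itemize}
The separators split $T$ into $\nbin$ maximal free segments $[(j-1)(\sbin+1)+1,\; j(\sbin+1)-1]$, each of length exactly~$\sbin$. Since sizes are unary, $\tau$ and the job lengths are polynomial in the input size. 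Hence the reduction runs in polynomial time and all numbers in the output are polynomially bounded, which yields strong \NP-hardness. The release dates are $\{1\}\cup\{j(\sbin+1)\}$ and the deadlines are $\{\tau\}\cup\{j(\sbin+1)\}$, so $\nrl+\ndl\leq 2\nbin$. This makes the reduction a parameterized reduction and gives \W{1}-hardness \wpb{} $\nrl+\ndl$.

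\textbf{Correctness.} ($\Rightarrow$) Given a packing, place the items of bin~$j$ consecutively inside segment~$j$. Their total length is at most~$\sbin$, so they fit without overlap. At every time step at most one job is active, so $D_\pi(t)\in\{0,1\}$ and the schedule is feasible.

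($\Leftarrow$) Take a feasible schedule. At most one job is active per time step, so no item job can be active at a separator's time step. Because jobs are non-preemptive, each item job lies entirely within one segment. Item jobs in the same segment are pairwise disjoint, so their lengths sum to at most~$\sbin$. Assigning items to bins by segment therefore gives a valid packing.

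\textbf{Main obstacle.} The only delicate point is separating the bins while keeping $\Im(F)=\{1\}$ and $e_i=1$, since a zero forecast is not allowed. Fixed unit-length blocker jobs achieve this. Their cost is that each contributes its own release date and deadline, and the main thing to verify is that this keeps $\nrl+\ndl$ bounded by $O(\nbin)$.
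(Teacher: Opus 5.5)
Your proof is correct and is essentially the paper's own reduction from \ubpTsc{}. Like the paper, you use item jobs of length $x_i$ and energy~$1$ spanning the whole time window, plus fixed unit-length boundary jobs at $j(\sbin+1)$ for $j\in\set{\nbin-1}$, with $F\equiv 1$ forcing at most one active job per time step. Your choice $\tau=\nbin(\sbin+1)-1$ is in fact the right one: the paper's $\tau=\nbin\sbin+\nbin$ leaves $\sbin+1$ free slots after the last boundary job, so the last bin could hold total size $\sbin+1$, and your version avoids that off-by-one.
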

Note that 
\cref{prop:whard}'s setup
implies that only one job can be scheduled at any time step.
\toappendix{
We reduce from the following \NP-hard \cite{GareyJ79} problem \ubpTsc{},
which is \W{1}-hard~\cite{JANSEN201339} \wpb{} the number of bins.

\decprob{\ubpTsc{} (\ubpAcr{})}{bp}
{a multiset~$X=\{x_1,\dots,x_n\}$ of numbers from~$\N$ and two integers~$\nbin,\sbin\in \N$,
all numbers encoded in unary}
{there are $k$ pairwise disjoint subsets~$X_1,\dots,X_\nbin$ of~$X$ such that $X=\bigcup_{i=1}^{\nbin} X_i$ and
for each~$i\in\set{\nbin}$ it holds that~$\sum_{x\in X_i} x\leq \sbin$}
}
\appendixproof{prop:whard}
 {
\begin{proof}
 Let~$I=(X=\{x_1,\dots,x_n\},\nbin,\sbin)$ be an instance of \ubpTsc{},
 construct an instance~$I'$ of \qeAcr{} without battery as follows.
 Let~$T=\set{\tau}$ with~$\tau\ceq \nbin\cdot\sbin + \nbin$.
 For each item~$x_i\in X$,
 add a job~$J_i$ with release date~$1$,
 due date~$\nbin\cdot \sbin + \nbin$,
 $e_i=1$,
 and
 length~$\ell_i=x_i$.
 Additionally,
 for each~$i\in\set{\nbin-1}$,
 add \emph{boundary} job~$J_{n+i}$ with release date~$i\cdot (\sbin + 1)$,
 due date~$i\cdot (\sbin + 1)$,
 $e_{n+i}=1$,
 and
 length~$\ell_{n+i}=1$.
 This finishes the construction.
 We claim that~$I$ is a \yes-instance if and only if~$I'$ is a \yes-instance.
 
 \RD{}
 Let~$(X_1,\dots,X_\nbin)$ be a solution to~$I$.
 For each~$i\in \set{\nbin}$,
 let~$X_i=\{x_{i_1},\dots,x_{i_{n_i}}\}$ be the numbers in~$X_i$,
 where~$n_i=|X_i|$.
 We construct a schedule~$\pi$ for~$I'$ as follows.
 First,
 schedule each boundary job in its only possible way,
 i.e.,
 for each~$i\in\set{\nbin-1}$,
 $\pi(J_{n+i})=i\cdot (\sbin+1)$.
 For each~$i\in \set{\nbin}$ and for each~$j\in \set{n_i}$,
 schedule job~$J_{i_j}$ at~$(i-1)\cdot (b + 1) + 1 + \sum_{j'=1}^{j-1} \ell_{i_{j'}}$,
 i.e.,
 schedule job~$J_{i_j}$ right after job~$J_{i_{j-1}}$ finishes.
 By this,
 no two of these jobs overlap.
 Moreover,
 since~$\sum_{j=1}^{n_i} \ell_{i_{j}} \leq \sbin$,
 these jobs are active between time steps~$(i-1)\cdot (b + 1)$
 and~$i\cdot (b+1)$.
 Thus,
 they do not overlap with the boundary jobs~$J_{n+i-1}$
 (when~$i>1$) and~$J_{n+i}$ (when~$i<k-1$),
 and the last job finishes latest at time step~$\tau$.
 
 \LD{}
 Let~$\pi$ be a feasible schedule of all jobs forming a solution to~$I'$.
 We know that each boundary job is scheduled exactly at its release date.
 Thus,
 we can distribute the jobs in the following way:
 let~$\calJ_i$ denote the set of all jobs~$J_{i_1},\dots,J_{i_{n_i}}$ that are scheduled to start at some time step in~$\set[(i-1)\cdot (\sbin + 1) + 1]{i\cdot (\sbin + 1) - 1}$,
 where~$n_i = |\calJ_i|$.
 We claim that~$(X_1,\dots,X_\nbin)$
 with~$X_i = \{x_{i_1},\dots,x_{i_{n_i}}\}$ is a solution to~$I$.
 Since every two jobs are disjoint
 and~$\sum_{j=1}^{n_i} x_{i_j} = \sum_{j=1}^{n_i} \ell_{i_j} \leq \sbin$,
 the claim follows.
\end{proof}
}
Next we show that for unit lengths
and distinct release dates,
hardness remains.
\begin{theorem}[\appref{prop:NPhard:BP:rdiff}]
 \label{prop:NPhard:BP:rdiff}
 \qeAcr{} without battery is strongly \NP-hard,
 even if $\mx{\ell}=1$,
 $\rho=n$ and~$\im(F)=2$.
\end{theorem}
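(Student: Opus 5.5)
The plan is a polynomial-time reduction from \ubpTsc{}, which is strongly \NP-hard~\cite{GareyJ79}. Each bin becomes a single time step whose forecast equals the bin size~$\sbin$. Each item becomes a unit-length job whose energy per time equals the item size. We then force all release dates to be distinct by adding a ``waiting zone'' of time steps with forecast~$0$, in which no job can ever run. Since all numbers in \ubpTsc{} are encoded in unary, the constructed instance has polynomial size with numbers bounded polynomially, so the hardness is strong.

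Concretely, let $I=(X=\{x_1,\dots,x_n\},\nbin,\sbin)$ be an instance of \ubpTsc{}, and set $\tau\ceq n+\nbin$. We define $F(t)\ceq 0$ for $t\in\set{n}$ and $F(t)\ceq \sbin$ for $t\in\set[n+1]{n+\nbin}$, so that $\Im(F)=\{0,\sbin\}$ and $\im(F)=2$. For each $i\in N$ we add the job $J_i=(r_i,d_i,\ell_i,e_i)\ceq(i,\tau,1,x_i)$. Then all release dates are pairwise distinct, giving $\nrl=n$, and $\mx{\ell}=1$. There is no battery, so feasibility means $D_\pi(t)\geq 0$ for all $t\in T$.

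For correctness we argue both directions.
\begin{itemize}
\item Suppose $(X_1,\dots,X_\nbin)$ is a packing. Schedule every job $J_i$ with $x_i\in X_j$ at time $n+j$. This respects the time windows because $r_i=i\leq n<n+j\leq\tau=d_i$. At $t\leq n$ no job runs, so $D_\pi(t)=0$. At $t=n+j$ we have $D_\pi(t)=\sbin-\sum_{x\in X_j}x\geq 0$.
\item Conversely, let $\pi$ be a feasible schedule. Every $e_i=x_i\geq 1$, because $\N$ excludes zero. Hence no job can be active at a time step $t\leq n$, since otherwise $D_\pi(t)<0$. So every job starts in $\set[n+1]{n+\nbin}$. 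Define $X_j\ceq\{x_i\mid \pi(J_i)=n+j\}$. These sets partition $X$, and $D_\pi(n+j)\geq0$ gives $\sum_{x\in X_j}x\leq \sbin$.
\end{itemize}

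I do not expect a serious obstacle. The one point needing care is that the forecast-$0$ prefix really blocks jobs, which relies on strictly positive energies; we have $e_i\in\Qp$ anyway. A second point is that this prefix is exactly what lets us have distinct release dates while keeping $\im(F)=2$. Finally, we note that the job lengths are all one and that $\tau$, the energies and the forecast values are polynomial in the unary input size, which is what yields strong \NP-hardness.
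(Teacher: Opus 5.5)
Your proposal is correct and is essentially the paper's reduction: from \ubpTsc{} with $\tau=n+\nbin$, a zero-forecast prefix of length~$n$ followed by $\nbin$ steps of forecast~$\sbin$, unit-length jobs $J_i=(i,\tau,1,x_i)$, and the correspondence $\pi(J_i)=n+j \iff x_i\in X_j$. Your explicit remark that strictly positive energies keep jobs out of the prefix spells out a step the paper leaves implicit.
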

Recall that when adding a flexibility of two
to the restrictions of \cref{prop:NPhard:BP:rdiff},
we arrive in a polynomial-time solvable case (\cref{prop:oneortwouniquerels}).

\appendixproof{prop:NPhard:BP:rdiff}
{
\begin{proof}
 Let~$I=(X = \{x_1, \dots, x_n\},\nbin,\sbin)$ be an instance of \textsc{Unary Bin Packing}.
 Construct an instance~$I'$ of \qeAcr{} without battery as follows.
 Let $\tau = n + \nbin$.
 For each $i\in N$, 
 add job $J_i$ with $r_i = i$, $d_i = \tau$, $\ell_i =1$, 
 and $e_i = x_i$.
 The forecast is 
 $F(t) = 0$ if $t\in N$,
 and~$F(t) = \sbin$ if $t\in\set[n+1]{n+k}$.
 Thus, 
 in every solution,
 no job is scheduled before time step~$n+1$.
 We claim that~$I$ is a \yes-instance if and only $I'$ is a \yes-instance.
 The correctness follows from the 
 convention~$\pi(J_j)=n+t \iff x_j\in X_t$ for schedule~$\pi$ and solution~$(X_1,\dots,X_{\nbin})$.
 Since then,
 all jobs are scheduled if and only if~$(X_1,\dots,X_{\nbin})$ is a partition.
 Moreover,
 for every $t\in\set[n+1]{n+k}$,
 we have~$F(t) - \sum_{j\in \jis(t)} e_j \geq 0 \iff \sum_{x_j\in X_t} x_j\leq \sbin$.
\end{proof}
}

\section{Parameterized Complexity}
\label{sec:pc}
\appendixsection{sec:pc}

\subsection{FPT results}
\label{sec:fpt}

In~\cref{sec:hardness},
we showed that \qeAcr{} is \NP-hard
for constant values of almost all natural parameters,
except for the number~$n$ of jobs. 
In fact, 
for any constant number of jobs,
the problem becomes polynomial-time solvable
since we can guess for each job its starting time.

\begin{observation}[\appref{obs:FPTnPlusFlex}]
 \label{obs:FPTnPlusFlex}
 Let~$I$ be any instance of \qeAcr{} with~$n$ jobs and maximum flexibility~$\mx{\flex}$.
 In
 $O(\flex^n \cdot \poly(|I|))$ time,
 we can either correctly report that~$I$ is a \no-instance 
 or output a feasible schedule that maximizes the battery state when the last job finishes.
\end{observation}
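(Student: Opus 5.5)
The plan is brute-force enumeration over all start-time combinations. Each job $J_i$ has exactly $\flex_i\le\mx{\flex}$ admissible start times, namely its seat $S_i=\{r_i,\dots,d_i-\ell_i+1\}$. Hence the set of all schedules $\pi\colon\Jset\to T$ with $\pi(J_i)\in S_i$ for every $i\in N$ is the product $S_1\times\dots\times S_n$. This product has at most $\prod_{i\in N}\flex_i\le\mx{\flex}^n$ elements and can be enumerated with polynomial delay, for instance by a mixed-radix counter.

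For each enumerated $\pi$, I evaluate the model directly. First compute $E_\pi(t)$ and $D_\pi(t)$ via \eqref{eq:Dpit} for all $t\in T$; this takes $O(n\tau)$ time, or less with difference arrays. Then compute $\Net_\pi(t)$ via \eqref{eq:net}. Finally run the recursion \eqref{eq:bat:update} from $\Bat_\pi(1)=\bz$ up to $\tau+1$ and check that $\Bat_\pi(t)\ge 0$ for all $t\in\set{\tau+1}$. All operations are additions, multiplications and minima of rationals. This is polynomial in $|I|$, exactly as used for \cref{obs:inNP}.

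Among the feasible schedules, the algorithm keeps one maximizing $\Bat_\pi(t^\ast_\pi)$. Here $t^\ast_\pi\ceq\max_{i\in N}(\pi(J_i)+\ell_i)$ is the first time step after the last job finishes, i.e., the battery state right after the last job's final active step. If no enumerated schedule is feasible, the algorithm reports a \no-instance.

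Correctness follows from exhaustiveness. Every feasible schedule appears in the enumeration, so the algorithm says \no{} exactly when none exists, and any returned schedule is optimal for the stated objective.

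The running time is $O(\mx{\flex}^n\cdot\poly(|I|))$. The only point needing care, and the main obstacle, is that the arithmetic stays polynomial. Each $\Bat_\pi(t)$ arises from at most $\tau$ applications of $+$, $\cdot\lin$, $\cdot\lout^{-1}$ and $\min$ to input rationals. I would argue its encoding length is polynomial in $|I|$ by bounding it as a rational of the form (sum of products of at most $\tau$ efficiency factors and one input number); this is the same fact implicitly used for \cref{obs:inNP}. With that bound, each evaluation is polynomial, and the claim follows.
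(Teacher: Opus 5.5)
Your proposal is correct and is essentially the paper's proof: enumerate the at most $\mx{\flex}^n$ schedules in $S_1\times\dots\times S_n$, check each in polynomial time as in \cref{obs:inNP}, and keep a best feasible one; your remark on encoding lengths only makes explicit what the paper leaves implicit. One small caveat on the objective: where this observation is used (\cref{obs:concom}), the quantity maximized is the battery level at a \emph{fixed} time step (right after the last horizon of the component, i.e., $\tau+1$ of the sub-instance), not at your schedule-dependent $t^\ast_\pi$, and the two can differ because a schedule that finishes early can recharge afterwards---but since the enumeration can maximize any polynomial-time computable objective, switching to the fixed time step is a one-line change.
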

\appendixproof{obs:FPTnPlusFlex}
{
\begin{proof}
  We can guess the starting time of each job.
  Since each starting time must lie in the job's seat,
  we have~$\flex_j$ possibilities for job~$J_j$.
  Formally,
  for an instance~$I$ with job set~$\Jset$ consisting of~$n$ jobs,
  let~$\Pi(I)=\bigtimes_{i=1}^n S_i$ denote the set of all possible schedules,
  where~$S_i$ denotes the seat of job~$J_i$.
  Note that~$|\Pi(I))|=\prod_{i=1}^n \flex_i\leq \mx{\flex}^{n}$.
  Thus,
  we test each schedule~$\pi\in \Pi(I)$
  in polynomial time due to~\cref{obs:inNP}.
  If any schedule turns out to be feasible,
  then we return \yes,
  and \no{} otherwise.
  Since we test every schedule,
  if there is a feasible schedule,
  we can output one that maximizes the state of the battery when the last job finished.
  This results in~$O(\flex^n \cdot \poly(|I|))$ time to solve instance~$I$.
\end{proof}
}
\cref{obs:FPTnPlusFlex} leads to two natural next questions:
(Q1) Is \qeAcr{} in \FPT{} \wpb{}~$n$ alone?
(Q2) What (substantially) smaller parameters than~$n$
still lead to \FPT{}, when combined with~$\mx{\flex}$,
and \XP{}?
Regarding~(Q1),
so far, we only know that this is the case for constant flexibility and if there is no battery.
\begin{theorem}[\appref{prop:nobattery:FPTn}]
 \label{prop:nobattery:FPTn}
 \qeAcr{} without battery is \FPT{} \wpb{} $n$.
\end{theorem}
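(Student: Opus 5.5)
The plan is to exploit that without a battery, feasibility is the purely local condition $D_\pi(t)\geq 0$ for every $t\in T$. Equivalently, for every $t$ the set $\jis(t)$ of jobs active at $t$ must satisfy $\sum_{i\in\jis(t)} e_i\leq F(t)$. Only $2^n$ sets of jobs exist, so each time step $t$ can be summarized by its \emph{type}: the family $\mathcal{G}_t=\{S\subseteq N \mid \sum_{i\in S} e_i\leq F(t)\}$ of admissible active sets. This family is down-closed and is determined by where $F(t)$ falls among the $2^n$ subset sums. Hence there are at most $2^n+1$ distinct types, and they can be computed in $O(2^n\cdot\poly(|I|))$ time. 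The difficulty is that $\tau$ and the flexibilities are unbounded, so we cannot enumerate start times as in \cref{obs:FPTnPlusFlex}. Instead we enumerate only the combinatorial shape of a schedule.

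\emph{Step 1 (guess the event order).} A schedule produces $2n$ events: the start $s_i=\pi(J_i)$ and the end $c_i=s_i+\ell_i-1$ of every job. We guess a weak order of these $2n$ events, i.e., a sequence of event groups with ties allowed. There are at most $(2n)!\cdot 2^{2n}$ such orders. A fixed order determines, for every elementary segment between consecutive event groups, the active set $S_k$ on that segment. The segment is nonempty or empty according to the order, and boundary time steps belong to segments in the obvious way.

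\emph{Step 2 (reduce to placing events).} For a fixed order, the remaining question is whether integer positions $p_1\leq p_2\leq\dots$ for the event groups exist that satisfy three families of constraints: (i) the difference constraints $c_i-s_i=\ell_i-1$; (ii) the bounds $r_i\leq s_i$ and $c_i\leq d_i$; and (iii) the \emph{segment constraints}, which require $S_k\in\mathcal{G}_t$ for every $t$ in segment $k$. Constraint (iii) says that segment $k$ must lie inside a maximal run of time steps whose type admits $S_k$. I would check the existence of such positions by a dynamic program sweeping over time. The DP state is the index of the last placed event group together with its time step $t$. A transition to the next group at time $t'$ is allowed if all steps in between admit the corresponding $S_k$, which can be checked in $O(1)$ time after precomputing, for each $t$ and each of the $2^n$ sets $S$, how far a run admitting $S$ extends from $t$. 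This yields $\poly(\tau)\cdot f(n)$ time.

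\emph{Step 3 (the coupling, the main obstacle).} The DP of Step~2 ignores constraint (i): the end of job $J_i$ must occur exactly $\ell_i-1$ steps after its start, and this links events far apart in the order. This is the step I expect to be hardest. My first attempt is an exchange argument showing that some canonical placement suffices, in which every event group is as early as possible given the previous groups. The argument would be that each group is pinned either by a release date, by the equality from (i) with an earlier-placed start, or by the beginning of an admissible run. If that argument holds, the start positions become determined by a bounded number of guesses per group, namely which of these three reasons pins it, giving at most $3^{2n}$ additional branches. Each candidate placement can then be verified in polynomial time via \eqref{eq:Dpit}. If the exchange argument fails because moving a start earlier lengthens an earlier segment into a bad step, I would instead carry the positions of starts of currently active jobs implicitly. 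To do so, I would guess for each job the index of the maximal admissible run containing its start relative to the runs of the other jobs. The number of relevant runs is bounded in terms of $2^n$, because only run boundaries for the at most $2^n$ types matter between consecutive events. Within such a run the equality constraints reduce to interval arithmetic that can be solved directly.
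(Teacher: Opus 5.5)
Steps~1 and~2 are fine as far as they go, but they drop constraint~(i). Step~3 is where the whole difficulty lies, and it is left conditional, so the proposal does not prove the theorem.

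\textbf{The first variant of Step~3.} It is not well defined. Whether a position for event group~$k$ is acceptable depends on the segment that follows it, i.e., on groups not yet placed. Moreover, shifting a start earlier shifts the coupled end event by the same amount. That end can overtake another event and break the guessed order of the $2n$ events, so an exchange argument inside a fixed full event order need not exist.

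\textbf{The fallback.} It fails outright. The number of maximal runs of time steps admitting a fixed set $S$ is not bounded by any function of $n$. For example, $F(t)=0$ for odd $t$ and $F(t)=\sum_{i\in N}e_i$ for even $t$ gives $\lfloor\tau/2\rfloor$ runs for every nonempty $S$. Guessing a run index per job is therefore not an FPT branching.

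\textbf{The missing idea.} Fix only the order of the \emph{start} times, and use the absence of a battery in a one-job exchange argument.

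Guess a linear order $\chi$ of the jobs by start time ($n!$ choices). Place $\chi(1),\chi(2),\dots$ in turn. Put each job $J_j$ at the earliest $s\in S_j$ that is not before the previously placed start and satisfies $D(t)\geq 0$ on $[s,s+\ell_j-1]$ with respect to the jobs placed so far.

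Soundness: for every $t$, the job active at $t$ that comes last in $\chi$ was checked against all other jobs active at $t$.

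Completeness: let a feasible $\pi$ agree with the greedy on $\chi(1),\dots,\chi(i-1)$ but start $J_j=\chi(i)$ at $\pi(J_j)>s'$, where $s'$ is the greedy choice. Note $s'\leq\pi(J_j)$, since $\pi(J_j)$ itself passes the check. Moving $J_j$ to $s'$ keeps feasibility:
\begin{itemize}
\item On the new steps $t<\pi(J_j)$, only jobs earlier in $\chi$ are active, because later ones start at or after $\pi(J_j)$; the greedy check covers exactly these.
\item On the overlap, nothing changes.
\item On the vacated steps, $D$ increases by $e_j$.
\end{itemize}
Without a battery, no other time step is affected, and the moved schedule still respects $\chi$. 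This gives $O(n!\cdot\poly(|I|))$ time, with no need for types, end events, or a DP over positions.
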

\begin{proof}[Proof sketch]
We guess the (weak) order of jobs regarding their starting times
and then iteratively along this order 
schedule each job as early as possible.
This is correct since intuitively,
the absence of a battery makes delays useless.
Recall that 
\qeAcr{} with battery
is \NP-hard (\cref{prop:horiz:joborders})
even when such a weak order is given.
\end{proof}
\appendixproof{prop:nobattery:FPTn}
{
\begin{proof}
 We guess a weak order~$\chi\colon N\to N$ on the jobs with respect to their starting times,
 where~$\chi(i)$ denotes the~$i$-th job in the ordering.
 We claim that iteratively with~$i=1,\dots,n$ 
 scheduling job~$\chi(i)$ as early as possible 
 yields a feasible schedule~$\hat{\pi}$.
 Intuitively, 
 advancing a job cannot reduce feasibility since energy cannot be stored.
 Suppose the claim is false,
 and let~$\pi$ be a feasible schedule such that
 $\pi(\chi(i')) = \hat{\pi}(\chi(i'))$ for all~$1\leq i'<i$ for some~$i<n$
 with $\pi(\chi(i)) \neq \hat{\pi}(\chi(i))$
 and~$i$ is maximal among all such schedules.
 Let~$\pi'$ be the schedule that agrees with~$\pi$ on all but~$J_j\ceq \chi(i)$ and that starts~$J_j$ earliest possible.
 We have that~$D_{\pi}(t)\geq 0$ for all~$\pi'(J_j)\leq t<\min\{\pi'(J_j)+\ell_j,\pi(J_j)\}$ by definition
 of being ``possible''.
 Moreover,
 we have
 $D_{\pi}(t) = D_{\pi'}(t)$ if $t<\pi'(J_j)$ or $t>\pi(J_j) + \ell_j$ 
 since~$\pi$ and~$\pi'$ coincide and there is no battery.
 Also, 
 we have $D_{\pi}(t) = D_{\pi'}(t)$ if $t\in \{\pi(J_j),\pi'(J_j) + \ell_j\}$
 since there is no battery and~$J_j$ is scheduled either way.
 Finally,
 $D_{\pi}(t) = D_{\pi'}(t) - e_j \leq D_{\pi'}(t)$ if $\pi'(J_j)+\ell_i\leq t\leq \pi(J_j) + \ell_j$
 since there is no battery.
 Altogether,
 the schedule is feasible and all jobs~$\chi(1),\dots,\chi(i)$ are iteratively scheduled earliest possible, 
 a contradiction to the choice of~$\pi$.
\end{proof}
}
Regarding~(Q2),
we identify two (incomparable) parameters for which this is the case.
The first one is the size~$\cco$ of the largest connected component of job graph~$G$,
and the second one is the size~\mccco{} of a smallest modulator to~$c$-$\cco$ in~$G$.
\begin{theorem}[\appref{obs:concom}]
 \label{obs:concom}
 \qeAcr{} is in \FPT{} \wpb{} $\cco$ combined with the flexibility
 and in~\XP{} \wpb{}~$\cco$ alone.
\end{theorem}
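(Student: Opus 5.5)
The plan is to exploit that the job graph~$G(\Jset)$ is an interval graph. Its connected components therefore correspond to pairwise disjoint time intervals. For a component~$C$, let~$I_C=[\min_{i\in C} r_i,\max_{i\in C} d_i]$. Then~$I_C$ is the union of the windows~$[r_i,d_i]$, $i\in C$, and for two distinct components these intervals are disjoint. This holds because any two overlapping windows would be adjacent, and a union of pairwise-chained intervals is itself an interval. Order the components~$C_1,\dots,C_m$ by their intervals from left to right. Every job of~$C_k$ is active only inside~$I_{C_k}$. Hence outside~$\bigcup_k I_{C_k}$ no job is active, and there the battery evolves deterministically via~\eqref{eq:bat:update}.

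The key auxiliary fact is a \emph{monotonicity lemma}. Fix the jobs active in time steps~$t,\dots,t'$, so that~$\Net_\pi$ is fixed on these steps. Then~$\Bat(t'+1)$ is a non-decreasing function of~$\Bat(t)$, and if some starting value keeps all intermediate states non-negative, then so does every larger starting value. This follows by induction from~\eqref{eq:bat:update}, since~$x\mapsto\min\{\bc,x+\min\{\bl,c\}\}$ is monotone in~$x$. Consequently, a greedy sweep over components is correct. For component~$C_k$ with starting battery level~$\beta_k$ at time~$\min I_{C_k}$, we choose among all feasible schedules of the jobs in~$C_k$ (restricted to~$I_{C_k}$) one that maximizes the battery level at time~$\max I_{C_k}+1$. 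We then propagate deterministically to the start of~$C_{k+1}$ and reject as soon as some state becomes negative.

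The correctness argument is an exchange argument. Take any feasible schedule~$\pi$. We replace its restriction to~$C_1$ by our maximizing choice. This yields an end-of-$I_{C_1}$ battery level at least as large as under~$\pi$. By the monotonicity lemma, all later states do not decrease, so feasibility is preserved. Induction over~$k$ completes the argument.

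For the running time, we handle each component by brute force as in \cref{obs:FPTnPlusFlex}. That observation maximizes the battery state when the last job finishes, so we apply the same enumeration but evaluate the battery at~$\max I_{C_k}+1$, which is equally easy. The enumeration covers~$\prod_{i\in C_k}\flex_i\le\mx{\flex}^{\cco}$ schedules per component, each checked in polynomial time. This gives~$O(\mx{\flex}^{\cco}\cdot\poly(|I|))$ in total, which is \FPT{} for~$\cco+\mx{\flex}$. For \XP{} in~$\cco$ alone, we note that~$\flex_i\le h_i\le\tau\le|I|$, so the bound becomes~$|I|^{\cco}\cdot\poly(|I|)$.

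The main obstacle I expect is stating the monotonicity lemma carefully in the presence of the capacity cap and the loss factors. One must check that the cap only merges states and never reverses their order, and that a component's schedule influences the battery only inside~$I_{C_k}$, so that comparing at the single time~$\max I_{C_k}+1$ suffices.
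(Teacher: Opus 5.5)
Your proposal is correct and follows the paper's approach. You process the connected components from left to right, use the enumeration of \cref{obs:FPTnPlusFlex} to choose for each component a feasible schedule that maximizes the battery level right after it, and justify this by replacing a feasible schedule component by component, using that the update~\eqref{eq:bat:update} is monotone in the previous battery state. You are slightly more explicit than the paper about the disjointness of the component intervals, the deterministic propagation through gaps and after the last component, and the bound $\flex_i\le\tau\le|I|$ behind the \XP{} claim, but these are refinements of the same argument.
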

\begin{proof}[Proof sketch]
 The idea is to solve each instance induced by a connected component from left to right via the algorithm from~\cref{obs:FPTnPlusFlex}, maximizing the battery level when the last job of the component finishes. Intuitively, this is correct since the only influence between components is the battery state; thus, we pass on a maximally charged battery.
\end{proof}
\appendixproof{obs:concom}
{
\begin{proof}
 Let~$I=(F,\Jset,\Bat=(\bz,\bc,\bl,\lin,\lout))$ be an input instance to~\qeAcr{}.
 Let~$G=G(\Jset)$ be the job graph and~$C_1,\dots,C_p$ its connected components
 (ordered from left to right by time).
 For each component~$C_i$,
 let~$\Jset_i$ be the corresponding subset of jobs and
 let~$t_i^*$ be the last time step contained in a horizon of a job in~$\Jset_i$.
 We start with the instance~$I_1$ with jobset~$\Jset_1$ and battery~$\Bat=(\bz,\bc,\bl,\lin,\lout)$
 and forecast~$F_1\colon \set{t_1^*}\to\Q$ with~$F_1(t)=F(t)$.
 Using the algorithm from~\cref{obs:FPTnPlusFlex},
 we compute a feasible schedule~$\pi_1$ that maximizes~$\Bat_{\pi_1}(t_1^*+1)$
 (which is possible since the algorithm branches over all possible schedules).
 Let~$B_1^*$ be the maximum obtained value of $\Bat_{\pi_1}(t_1^*+1)$.
 Then, 
 we next consider instance~$I_2$ with jobset~$\Jset_2$ battery~$\Bat=(\bz=B_1^*,\bc,\bl,\lin,\lout)$
 and forecast~$F_2\colon \set{t_2^*}\to\Q$ with~$F_2(t)=F(t_1^*+t)$.
 Again compute a feasible schedule~$\pi_2$ that maximizes~$\Bat_{\pi_2}(t_2^*+1)$,
 and proceed with~$I_3$ defined in the same way,
 and so on.
 We claim that~$I$ is a \yes-instance if and only if
 each of~$I_1,\dots,I_p$ is a \yes-instance.
 If for each instance~$I_q$, 
 a schedule~$\pi_q$ was found,
 then
 the schedule~$\pi$ that combines~$\pi_1,\dots,\pi_p$ 
 is a schedule of~$I$
 (this is true since at each instance's border, 
 we transmit the last battery state via~$\bz$).
 
 It remains to show the other direction.
 Let~$\pi$ be a feasible schedule.
 Let~$\pi_q$ be~$\pi$ restricted to the job set~$\calJ_q$.
 If~$\pi_q$ is so that~$\Bat_{\pi_q}(t_q^*+1)$ is not maximal over all feasible schedules
 on~$t_{q-1}^*+1,\dots,t_q^*$,
 with~$t_0^*\ceq 1$,
 then we claim that we can replace~$\pi_q$ by a feasible schedule~$\pi_q'$ with maximum~$\Bat_{\pi_q'}(t_q^*+1)$ in~$\pi$ to obtain again a feasible schedule~$\pi'$.
 Doing this iteratively from ``left to right'' then proves this direction.
 By construction, 
 we know that~$\Bat_{\pi'}(t)\geq 0$ for all~$t\in\set{t_q^*}$.
 Moreover,
 it also holds true that~$\Bat_{\pi'}(t)\geq \Bat_{\pi}(t)$ for all~$t\geq t_q^*+1$.
 This is true for~$t=t_q^*+1$ by the choice of~$\pi_q'$.
 Suppose this holds true for all~$t_q^*+1\leq t<\tau$.
 We show that it also holds true for~$t+1$:
 since the schedules other than~$\pi_q$ did not change,
 we have that~$\Net_{\pi'}(t) = \Net_{\pi}(t)$.
 Thus,
 $\Bat_{\pi'}(t+1) = \min \left\{\bc, \Bat_{\pi'}(t) + \min\{\bl, \Net_{\pi'}(t)\}\right\}
 \geq \min \left\{\bc, \Bat_{\pi}(t) + \min\{\bl, \Net_{\pi}(t)\}\right\} = \Bat_{\pi}(t+1)$.
\end{proof}
}
With~\cref{obs:concom} at hand,
we obtain the following for a modulator to $c$-$\cco$.
\begin{theorem}[\appref{prop:mccco}]
 \label{prop:mccco}
 Let~$I$ be an instance of \qeAcr{},
 $G$ its job graph,
 and~$W\subseteq V$ be a modulator to~$c$-$\cco$.
 Then,
 $I$ is solvable in~$O(\phi^{|W|+c}\cdot \poly(|I|))$ time.
\end{theorem}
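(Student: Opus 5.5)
We first guess the start times of all jobs in the modulator~$W$ and then solve what remains component by component, as in \cref{obs:concom}. Each job~$J_i$ with $v_i\in W$ has $\flex_i\le\mx{\flex}$ possible start times in its seat~$S_i$. Hence there are at most $\mx{\flex}^{|W|}$ partial schedules for the jobs of~$W$, and we enumerate all of them. We answer \yes{} if and only if some guess extends to a feasible schedule of all jobs.

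Fix one guess $\pi_W$. Since the jobs of~$W$ are now fixed, their consumption is known at every time step. We absorb it into the forecast by setting $F'(t)\ceq F(t)-\sum_{i\in W,\, t\in[\pi_W(J_i),\pi_W(J_i)+\ell_i-1]} e_i$. Note that~$F'$ may become negative; this is allowed, since the model permits $F\colon T\to\Q$. The key observation is that for every completion~$\pi$ of~$\pi_W$, the net energy $D_\pi(t)$ in the original instance equals $F'(t)-\sum_{i\in\jis(t)\setminus W} e_i$. This holds because $D_\pi$ depends on the active jobs only through the sum of their energies. Consequently, the battery trajectories given by \eqref{eq:net} and~\eqref{eq:bat:update} coincide exactly. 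So~$\pi_W$ extends to a feasible schedule if and only if the reduced instance $(F',\Jset\setminus W,\Bat)$ is a \yes-instance.

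The job graph of the reduced instance is $G-W$, because time windows are unchanged. By assumption every connected component of $G-W$ has order at most~$c$. We therefore apply the algorithm from \cref{obs:concom}. It processes the components from left to right, and for each one it runs \cref{obs:FPTnPlusFlex} in $O(\mx{\flex}^{c}\cdot\poly(|I|))$ time to maximize the battery level after the component ends. It then passes that level on as the initial level for the next component. Correctness carries over verbatim from the proof of \cref{obs:concom}. That argument uses only the fact that different components interact solely through the battery state, together with the monotonicity of~\eqref{eq:bat:update} in the previous battery level. Neither fact depends on $F$ being nonnegative.

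The main point needing care is the one in the first step: fixing the~$W$ jobs must not create hidden dependencies between components of $G-W$. A job of~$W$ may span several components, but once its start is fixed it contributes only a fixed amount per time step. That contribution is fully captured by~$F'$, including the nonlinear charging and discharging effects, which are applied to the aggregate $D_\pi(t)$. The total running time is $\mx{\flex}^{|W|}$ guesses, each followed by at most $n$ component computations of cost $O(\mx{\flex}^{c}\cdot\poly(|I|))$. This gives $O(\mx{\flex}^{|W|+c}\cdot\poly(|I|))$ overall, as claimed.
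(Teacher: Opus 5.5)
Your proposal is correct and follows essentially the same route as the paper. The paper's proof likewise enumerates the at most $\mx{\flex}^{|W|}$ partial schedules of the modulator jobs and absorbs their consumption into an adjusted forecast so that $D_\pi(t)$ equals the net energy of the reduced instance. It then solves that reduced instance, whose job graph has components of order at most $c$, via \cref{obs:concom}. Your explicit remarks that negative adjusted forecasts are permitted since $F\colon T\to\Q$, and that the argument of \cref{obs:concom} relies on the battery update being monotone, make precise points the paper leaves implicit.
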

\appendixproof{prop:mccco}
{
\begin{proof}
 Let~$I_W=\{i\in N\mid v_i\in W\}$ be the index set corresponding to~$W$.
 For each possible partial schedule~$\pi'$ of the jobs 
 in~$\Jset_{W}\ceq \{J_i\mid i\in I_W\}$
 (of which there are~$\flex^{|\Jset_{W}|}$),
 adjust the forecasts according to the guessed schedule
 and delete all jobs in~$\Jset_{W}$.
 Let~$I_{\pi'}$ denote the resulting instance
 and~$G_{\pi'}$ denote its job graph.
 By construction,
 we have that~$G_{\pi'}$ has $\cco\leq c$.
 Thus,
 we use~\cref{obs:concom}
 to solve~$I_{\pi'}$ in $O(\mx{\flex}^{c}\cdot \poly(|I_{\pi'}|))$ time.
 Return \yes{} if and only if there is a partial schedule $\pi'$ of the jobs in~$\Jset_{W}$
 such that $I_{\pi'}$ is a \yes-instance.
 We now prove correctness with the following argument.
 Let~$F_{\pi'}$ be the forecast adjusted in accordance with~$\pi'$,
 and let~$\pi''$ be a solution to~$I_{\pi'}$.
 Let~$\pi$ be the schedule that executes all jobs in~$\Jset_W$ according to~$\pi'$ and
 all remaining jobs according to~$\pi''$.
 We have that~$F_{\pi'}(t) = F(t) - \sum\nolimits_{i\in \widehat{\pi'}(t)} e_i$ for every~$t\in T$.
 Hence,
 for every~$t$,
 we have 
 \begin{align*}
  D_{\pi}(t) &= F(t) - \sum_{i\in \jis(t)} e_i = F(t) - \sum_{i\in \widehat{\pi'}(t)} e_i - \sum_{j\in \widehat{\pi''}(t)} e_j 
  = F_{\pi'}(t) - \sum_{j\in \widehat{\pi''}(t)} e_j 
  \\
  &
  = D_{\pi''}(t).
 \end{align*}
 Since we consider every possible partial schedule~$\pi'$ for~$\Jset_{W}$,
 correctness follows.
\end{proof}
}
\cref{prop:mccco} yields that
\qeAcr{} is in \FPT{} \wpb{} the size~\mccco{} of a smallest modulator to~$c$-$\cco$ combined with the flexibility~$\phi$
and in~\XP{} \wpb{} \mccco{} alone.
A well-known modulator yielding constant-size connected components is a vertex cover, 
i.e., 
a modulator to $1$-$\cco$.
Since we can compute a minimum-size vertex cover in FPT-time (folklore),
we have that
 \qeAcr{} is \FPT{} \wpb{} the flexibility combined with vertex cover size of the job graph.
%

%
%
%
%

%
%
%
%
%

%
%
\subsection{W-hardness and Kernelization Lower Bounds Results}
\label{sec:pchardness}

Achieving \FPT{} for~$\mx{\flex}$ even in quite restricting settings or improving \cref{obs:concom} to polynomial kernelization seems unlikely.

\begin{theorem}[\appref{prop:NPhard:BP:rsame}]
 \label{prop:NPhard:BP:rsame}
 \qeAcr{} without battery is strongly \NP-hard and \W{1}-hard \wpb{}~$\tau$, 
 and hence also when parameterized by~$\flex$,
 even if $\mx{\ell}=1$, $\nrl = \ndl = 1$, and~$\im(F)=1$.
\end{theorem}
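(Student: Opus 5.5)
We reduce from \ubpTsc{}, which is strongly \NP-hard~\cite{GareyJ79} and \W{1}-hard \wpb{} the number~$\nbin$ of bins~\cite{JANSEN201339}. The construction is the one from \cref{prop:NPhard:Part1}, with two time steps replaced by~$\nbin$ time steps and unary numbers instead of binary ones. Given~$I=(X=\{x_1,\dots,x_n\},\nbin,\sbin)$, we build an instance~$I'$ of \qeAcr{} without battery as follows. Set~$\tau\ceq\nbin$ and~$F(t)\ceq\sbin$ for all~$t\in T$, so that~$\im(F)=1$. For each~$i\in N$, add a job~$J_i$ with~$r_i=1$, $d_i=\nbin$, $\ell_i=1$ and~$e_i=x_i$. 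Then~$\nrl=\ndl=1$, $\mx{\ell}=1$, and every job has seat~$T$, so~$\flex_i=h_i=\tau=\nbin$.

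For correctness we use the correspondence~$\pi(J_j)=t\iff x_j\in X_t$. A schedule is exactly an assignment of every item to one of the~$\nbin$ bins. Since there is no battery, feasibility means~$D_\pi(t)\ge 0$ for all~$t$, that is,~$\sum_{x_j\in X_t}x_j\le \sbin$ for every~$t\in\set{\nbin}$. This is precisely the bin-capacity condition, so both directions follow immediately, as in the proof of \cref{prop:NPhard:Part1}.

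For strong \NP-hardness, note that all numbers in~$I'$ are bounded by the unary-encoded numbers of~$I$ (and~$\tau=\nbin\le n$ without loss of generality). Hence~$I'$ has numbers polynomial in its size, and the reduction runs in polynomial time. For \W{1}-hardness, the new parameter is~$\tau=\nbin$, which depends only on the old parameter. This makes the construction a parameterized reduction from \ubpTsc{} \wpb{}~$\nbin$ to \qeAcr{} \wpb{}~$\tau$. Since~$\mx{\flex}\le\tau$ always holds (indeed~$\mx{\flex}=\tau$ here), \W{1}-hardness \wpb{}~$\flex$ follows as well.

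The only point needing care is that ``without battery'' means a lossless battery with~$\bc=0$. Each battery update in \eqref{eq:bat:update} is then capped at~$0$, so~$\Bat_\pi(t)\ge0$ for all~$t$ is equivalent to~$D_\pi(t)\ge 0$ for all~$t$, as noted in \cref{sec:mpdp}. With this verified, I expect no real obstacle: the hardness is inherited entirely from the source problem.
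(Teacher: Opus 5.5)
Your proposal is correct and matches the paper's proof. Both use the same reduction from \ubpTsc{}: $\tau=k$, $F\equiv b$, and unit-length jobs with $r_i=1$, $d_i=k$, $e_i=x_i$, with the correspondence $\pi(J_i)=t \iff x_i\in X_t$. You also spell out why the reduction preserves strong \NP-hardness and is a parameterized reduction (via $\flex\le\tau=k$), which the paper leaves implicit.
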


\appendixproof{prop:NPhard:BP:rsame}
{
Intuitively,
we cut off the first~$n$ time steps from the constructed instance in the proof of~\cref{prop:NPhard:BP:rdiff}
and set all release dates to the then resulting first time step.

\begin{proof}
 Let $I=(X = \{x_1, \dots, x_n\},\nbin,\sbin)$ be an instance of~\ubpTsc{}.
 We construct an instance~$I'$ of \qeAcr{} without battery as follows.
 Let $\tau \ceq \nbin$
 and $F(t)  = \sbin$ for every $t \in \set{\tau}$.
 For each $i\in \set{n}$, 
 add a job $J_i$ with $r_i = 1$, 
 $d_i = \tau$, 
 $\ell_i =1$, 
 and $e_i = x_i$.
 We claim that~$I$ is a \yes-instance if and only if~$I'$ is a \yes-instance.
 We have that~$x_i \in X_t \iff \pi(J_i)=t$.
 Thus,
 we have that all jobs are scheduled if and only if~$(X_1,\dots,X_{\nbin})$ is a partition.
 Moreover,
 we have~$\sum_{x_i\in X_t} x_i \leq \sbin \iff F(t) - \sum_{i\in \jis(t)} e_i \geq 0$.
\end{proof}
}

\begin{theorem}[\appref{prop:NOPK:cco}]
 \label{prop:NOPK:cco}
 \UnlessPK,
 \qeAcr{} without battery admits no problem kernel of size polynomial in~$\flex+\cco$,
 even if $\mx{\ell}=1$.
\end{theorem}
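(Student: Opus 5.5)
We plan to prove the kernelization lower bound with an OR-cross-composition from an \NP-hard source problem into \qeAcr{} without battery, parameterized by $\flex+\cco$. The natural source is \ubpTsc{}; by \cref{prop:NPhard:BP:rsame} it already reduces to \qeAcr{} without battery with $\mx{\ell}=1$. Since the absence of a battery decouples the time steps, an AND-composition is immediate, whereas an OR is not. We therefore use an \emph{AND}-cross-composition, which also rules out polynomial kernels \unlessPK{} (by the AND-distillation result of Drucker).

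Fix a polynomial equivalence relation that groups \ubpTsc{} instances with the same $n$, $\nbin$ and $\sbin$; since all numbers are unary, these values are polynomially bounded in the input size. Given $t$ equivalent instances $I_1,\dots,I_t$, first apply the construction of \cref{prop:NPhard:BP:rsame} to each $I_q$. This gives an instance $I_q'$ with $\tau=\nbin$ time steps, constant forecast $\sbin$, and $n$ unit-length jobs of energy $x_i^{(q)}$, all with release date $1$ and deadline $\nbin$. Next, place the $t$ instances side by side on disjoint time intervals: instance $q$ occupies $\set[(q-1)\nbin+1]{q\nbin}$, its jobs are shifted to release date $(q-1)\nbin+1$ and deadline $q\nbin$, and the forecast equals $\sbin$ on every step. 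No battery is used and $\mx{\ell}=1$ throughout.

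Correctness follows because, without a battery, feasibility is exactly the condition $D_\pi(t)\ge 0$ for every $t$. Jobs of block $q$ can only be scheduled inside block $q$, so the combined instance is feasible if and only if every block is feasible. By \cref{prop:NPhard:BP:rsame}, that holds if and only if every $I_q$ is a \yes-instance, which is precisely the AND-condition.

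For the parameter bound, in the job graph the time windows of different blocks are pairwise disjoint intervals. Hence the connected components are exactly the blocks, each of order $n$, so $\cco=n$. Moreover $\flex=h=\nbin\le n$ (we may assume $\nbin\le n$, as otherwise the instance is trivially \yes). Thus $\flex+\cco\le 2n$, which is polynomial in $\max_q|I_q|$ and independent of $t$, as required. The construction clearly runs in polynomial time.

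The main obstacle is justifying the AND-framework, that is, invoking the right theorem that AND-cross-compositions from an \NP-hard problem exclude polynomial kernels \unlessPK{}, and checking that the equivalence relation is polynomial. A secondary point is verifying that the block intervals really touch neither each other nor a common time step. This is ensured by using the closed windows $[(q-1)\nbin+1,q\nbin]$, so consecutive blocks never intersect.
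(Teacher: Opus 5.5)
Your proposal is correct and is essentially the paper's proof. The paper also uses an AND-cross-composition from \ubpTsc{} (invoking Drucker's result) that places copies of the \cref{prop:NPhard:BP:rsame} construction on disjoint blocks of $\nbin$ time steps. The only difference is that the paper groups instances by $\nbin$ alone and sets the forecast to $\sbin_p$ on block $p$, whereas your finer grouping by $(n,\nbin,\sbin)$ gives a constant forecast.

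One small correction: $\nbin>n$ does not make an instance trivially \yes{}; it is \yes{} iff all $x_i\le\sbin$. You do not need that assumption anyway, because $\nbin$ is encoded in unary and hence $\flex=\nbin\le|I_q|$.
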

\appendixproof{prop:NOPK:cco}
{
\begin{proof}
 For an \NP-hard problem~$L$,
 a polynomial equivalence relation~$\calR$
 is a relation such that
 whether any two instances belong to the same equivalence class can be decided in time polynomial in their aggregated size,
 and any finite set of instances is partitioned by~$\calR$ into classes
 whose number is polynomially upper-bounded by the maximum instances size found in the set. 
 An \ANDcroco{} from an \NP-hard problem~$L$ into a parameterized problem~$L'$
 takes~$q$ $\calR$-equivalent instances~$I_1,\dots,I_q$ from~$L$
 and constructs an instance~$I=(x,k)$ of~$L'$ 
 in time polynomial in~$\sum_{p=1}^q |I_{p}|$ 
 such that~$k$ is polynomially upper-bounded by~$\max_{1\leq p\leq q} |I_{p}|+\log(q)$,
 and $I$ is a \yes-instance if and only if each of~$I_1,\dots,I_q$ is a \yes-instance.
 A parameterized problem that has an \ANDcroco{} admits no problem kernel of polynomial size \unlessPK{}~\cite{Drucker15}.
 
 Let~$I_1,\dots,I_q$ be $q$ instances of \ubpTsc{} each with~$k$ bins,
 i.e., $I_p=(X_p,\nbin,\sbin_p)$ with~$X_p=\{x_1^p,\dots,x_{n_p}^p\}$. 
 We construct an instance~$I$ without battery of \qeAcr{} as follows. 
 For each each~$x_i^p\in X_p$,
 construct a length-1 job~$J_{i}^p$ with release time~$(p-1)\cdot k+1$, deadline~$p\cdot k$,
 and energy~$x_i^p$.
 The forecast~$F$ is defined as follows:
 For every~$p\in\set{q}$,
 $F((p-1)\cdot k+t)=\sbin_p$ for every~$1\leq t\leq k$.
 We have that~$\flex+\cco{}\in O(\max_{1\leq p\leq q} |I_p|)$.
 We claim that~$I$ is a \yes-instance if and only if each if~$I_p$, $p\in\set{q}$ is a \yes-instance.
 The correctness follows the same lines as in the proof of~\cref{prop:NPhard:BP:rsame}.
\end{proof}
}

\section{ILP and Experiments}
\label{sec:experiments}
\appendixsection{sec:experiments}

In this section,
we present our experimental evaluation.
Since our theoretical analysis revealed that \qeAcr{}
is intractable already in quite restricted settings,
we employ integer linear programming (ILP) as solver.
Exploiting the power of ILPs,
we even solve a generalization of \qeAcr{},
where an external power source~$\ex\colon T\to \Qnn$ exists.
We include the external power source by replacing~\eqref{eq:Dpit}
with
$D_{\pi}(t) = F(t) + \ex(t)- E_{\pi}(t)$ for all~$t\in T$.

\dectask{OPT-\qeTsc{} (OPT-\qeAcr)}{opttdv}
{a forecast~$F$, a set~$\calJ$ of jobs, and a battery~$\Bat$}
{find a schedule~$\pi$ and external energy~$\ex$ making~$\pi$ feasible such that~$\sum_{t\in T} \ex(t)$ is minimized}

The minimum sum of external energy can be understood as distance-to-autarky measure.
Note that \qeAcr{} is OPT-\qeAcr{}
where no external power source is available ($\ex(t) = 0$ for all~$t$).
Thus,
\qeAcr{} reduces to OPT-\qeAcr{} and hence the latter generalizes the former.

\subsection{ILP Model}

Consider the following ILP,
called \qeAcr{}-ILP,
that minimizes the sum of external power:
\begin{align}
 \min \sum\nolimits_{t\in T} X_t, \quad X_{t} \in \Qnn \,\forall t\in T. \label{ilp:goal}
\end{align}
First,
we model the job scheduling.
For each~$j\in N$,
let~$T_j\ceq \{r_j,\dots,d_j\}$ and we have:
\begin{align}
 x_{j,t}&\in \{0,1\} \ \forall\,t\in T_j,\quad  z_{j,t}\in \{0,1\} \ \forall\,t\in S_j \label{ilp:constr:defvarxz}
 \\
 1 &= \sum\nolimits_{t\in S_j} z_{j,t}  
 , \qquad
 x_{j,t} = \sum\nolimits_{s\in S_j: s\leq t\leq s+\ell_j-1} z_{j,s} \ \ \forall\,t\in T_j \label{ilp:constr:starttime}
\end{align}
Note that~$z_{j,t}$ models whether job~$J_j$ starts at time step~$t\in S_j$
and~$x_{j,t}$ models whether job~$J_j$ is active at time step~$t\in T_j$.
Herein, \eqref{ilp:constr:starttime} ensures that each job starts and the active times
of a job form a consecutive sequence starting in accordance with~$z_{j,t}$.

Next, we model the battery including the external power.
For all~$t\in T_{+1}\ceq \set[2]{\tau+1}$,
we have the following:
\begin{align} 
 0&\leq B(t) \leq b_c, \qquad B(1) = \bz, \qquad B(\tau+1) \geq \bz \label{ilp:constr:battery:basics}
 \\ 
 E_{t-1}^+ &\leq F(t-1) + \ex_{t-1} + E_{t-1}^- 
 - \sum\nolimits_{j\in N:\:(t-1)\in T_j} x_{j,t-1}\cdot e_j  \label{ilp:constr:net}
 \\
 0 &\leq E_{t-1}^+ \leq y_{t-1} \cdot \bl / \lin, \qquad\quad y_{t-1}\in\{0,1\} \\
 0 &\leq E_{t-1}^- \leq (1 - y_{t-1}) \cdot \bc \cdot \lout,
 \label{ilp:constr:net:adjust}
 \\ 
 B(t) &= B(t-1) + \lin\cdot E_{t-1}^+ - \lout^{-1}\cdot E_{t-1}^- \label{ilp:constr:batteryupdate}
\end{align}

Note that 
we additionally require in~\eqref{ilp:constr:battery:basics} that at each end of the scheduling horizon,
the battery's level is at least~$\bz$.
This makes the initial battery level~$\bz$ meaningful when computing on consecutive days
(cf.\ \cite[Sec.~2.3]{MunzkeSBH21}).
Moreover,
our ILP formulation deviates from a formulation of \eqref{eq:net} and \eqref{eq:bat:update}
that directly implements $\min$ and $\max$ functions
and enforces the fully available excess to go into the battery.
We model the $\max$-$\min$ constellation of \eqref{eq:net} through lower bounds of zero and the binary variables~$y_{t-1}$;
The nested $\min$ in \cref{eq:bat:update} we model by upper bounding the battery sizes by~$\bc$,
the excess by $\bl/\lin$,
and the demand by $\bc\cdot \lout$. 
Finally,
note that in \eqref{ilp:constr:net},
the excess~$E_{t-1}^+$ is only upper bounded.
This allows that the battery is possibly loaded with less than the actual available excess
and
only as much as required to reach an overall minimum of external energy.
Indeed, 
we will obtain several ILP solutions making use of this
(see \cref{fig:example_result} in the appendix).
While these formulation tricks are quite simple,
we observe a significant effect: 
we tested a more direct formulation 
which failed to compute solutions with larger flexibilities in reasonable time.

Finally,
note that we have a 1-to-1 correspondence between
a solution for OPT-\qeAcr{},
where we additionally require~\eqref{ilp:constr:battery:basics},
and an optimal solution for \qeAcr{}-ILP
via~$\pi(J_j)=t \iff z_{j,t}=1$.

\begin{remark}
  \label{rem:poweroftheilp}
  Our ILP allows easily for many reasonable modifications.
  If one is interested in minimizing the cost of the energy,
  one can set the goal function to
  $\min \sum\nolimits_{t\in T} w_t\cdot X_t$,
  where~$w_t$ is the energy's price in time step~$t$.
  When two jobs~$i$ and~$j$ must be disjoint,
  we add $x_{i,t}\neq x_{j,t}$ for all~$t\in T_i\cap T_j$.
  When a job~$i$ has to start or end before another job~$j$,
  we can add $\sum_{t\in T_j} t\cdot z_{i,t} < \sum_{t\in T_{j}} t\cdot z_{j,t}$  or 
  $\ell_i + \sum_{t\in T_j} t\cdot z_{i,t} \leq \sum_{t\in T_{j}} t\cdot z_{j,t}$,
  respectively.
  When each job may have several,
  mutually disjoint horizons in which it is allowed to be scheduled,
  then consider their union for \eqref{ilp:constr:starttime}.
  Our data includes no information about any such further requirements addressed above.
  Hence, these constraints are omitted.
  \rqed
\end{remark}

\subsection{Experiments}

In this section,
we describe the data, setup, and results of our experiments.
The goal of our experiments is threefold.
First, we quantify the practical impact of flexibility on reducing external energy under realistic solar radiation forecasts.
Second, we compare small-scale balcony PV systems with residential rooftop systems to understand how system scale influences achievable autarky.
Third, we investigate which structural properties of instances---such as job density and forecast irregularity---drive the computational difficulty of solving OPT-\qeAcr{} in practice.

\subsubsection{Data}

\subparagraph*{Solar Radiation Data.}

To simulate our forecast,
we use data 
from the Institute for Electrical Information Technology
of the Technical University Clausthal,
Germany,
for radiation (in Watt per square meter [\vwattpersqm{}]) measured for each minute [\vmin{}] for every year from 2016 to 2022.
The measurement is at a fixed point 
in Clausthal-Zellerfeld, 
Germany.

\subparagraph*{Appliance-level Power Consumption Data.}

We use data 
\cite{Alhamoud2014goSMART,Alhamoud2015EWSN}
in which for two households~A and B
in
Germany,
several devices,
such as kettle, coffee maker, fridge, or oven,
were monitored over a certain time period at a temporal resolution of a few seconds.
We only use the data for household~A since household~B's data is too sparse.
We set a threshold of \vwatt{6} for a device to be active,
and two active points in time that are at most 120 seconds apart 
belong to the same job.
Each job then has a start and end time
(understood as release date and deadline at flexibility 0),
given at minute resolution.
The difference between them defines the job's length.
The job's energy requirement is defined as the maximum observed power (in W) over its execution interval
(recall that this ensures the job to be executable even if energy demand fluctuates).
There is no information about the flexibility of any job in the data.

\subsubsection{Setup}

\subparagraph*{PVs.}
For PVs,
we assume a threshold of \vwattpersqm{10},
i.e., 
below which we set an entry to zero.
We assume a pessimistic-realistic efficiency of 20\%~\cite{Green56,Shaik2023}.
We consider two scenarios,
called~\pvfnt{RA} and \pvfnt{BS}.
\pvfnt{RA} (\pvfnt{BS}) corresponds to a rooftop (balcony) PV,
where
we assume \vsqm{48} (\vsqm{3}) of PV area~\cite{loedl2010pvpotenzial,ringel2024germanypv} directed south.
This allows us to compare plug-in balcony systems with residential rooftop systems~\cite{KRASCHEWSKI2025115092}.
So,
given the radiation data and the sun's position per time step,
the above described transformations are performed to compute the actual forecast.

\subparagraph*{Jobs.}
We computed the minimum external energy for an \emph{operational day},
where an operational day
starts at 4 am and ends at 3:59 am on the next day
(at minute resolution, leading to 1440 time steps).
\begin{figure}
 \centering
 \includegraphics[width=0.995\textwidth]{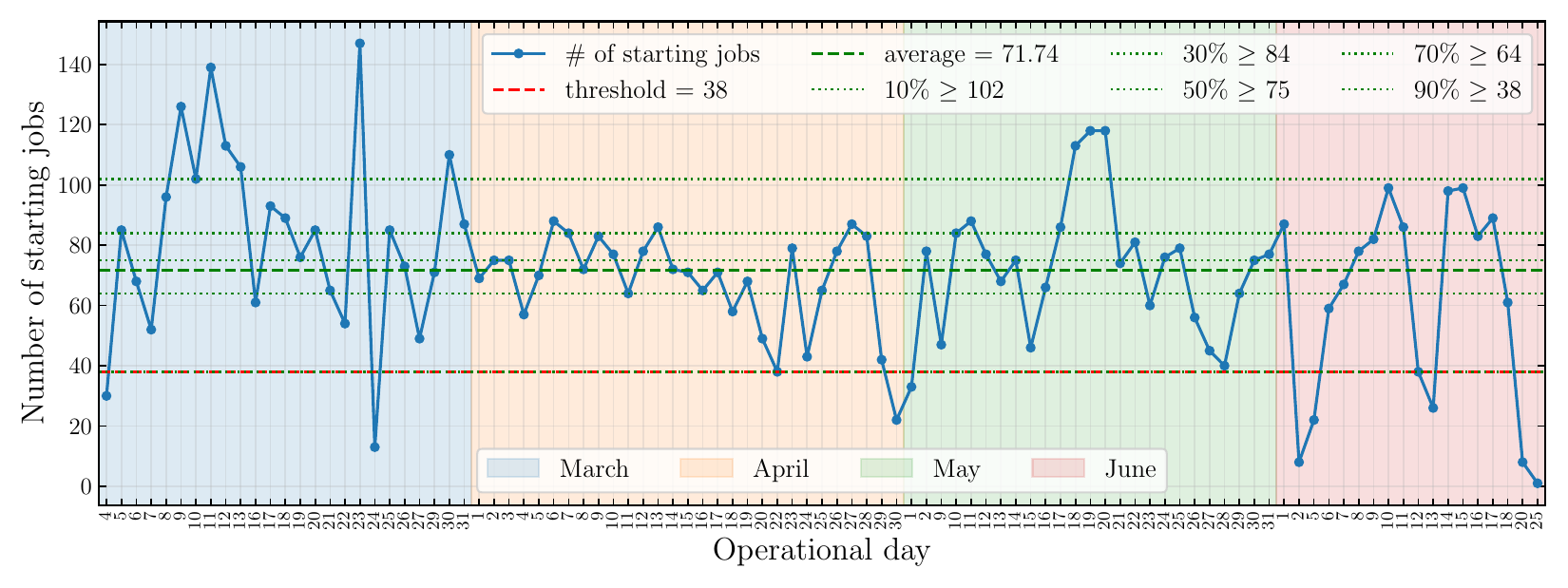}
 \caption{Overview of the number of jobs starting per operational day (calendar date) from March to June. 
 The x-axis shows the day of the month 
 (e.g., 2 in the green area is May~2). 
 Percentiles of daily job counts are included; 
 for instance, 
 on 70\% of the operational days, at least 64 jobs start.}
 \label{fig:jobs}
\end{figure}
We selected the 90\% (88) of all available days between March and June
(see \cref{fig:jobs} for an overview)
with the highest number of jobs
to cut off small instances,
leading to each instance having at least 38 jobs
(the average of all days is about 72 jobs).
For this set of jobs,
the average daily mean job lengths is $\vmin{39.17}$
and the average daily median job lengths is $\vmin{6.88}$.
We added flexibility scenarios next to the 0\% scenario given by the data.
These are specified through a parameter~$\hat{\flex}=0.25\cdot x$,
$x\in\set[0,1]{12}$,
corresponding to a symmetric extension of up to 300\% of the original time window length.
Concretely,
for $\hat{\flex}=1$ and $\hat{\flex}=3$,
a job whose length equals the average daily mean job length obtains additional time windows of $\vmin{19.58}$ and $\vmin{58.75}$,
respectively,
before and after its original release date.
For a job whose length equals the average daily median job length,
the corresponding additional time windows are $\vmin{3.44}$ and $\vmin{10.32}$,
respectively.
Formally,
if~$r_j$, $d_j$, and~$\ell_j$ are the release date, deadline, and length of~$J_j$ obtained from the data (recall that~$\ell_j=d_j-r_j+1$ here),
then~$r_j' = \max\{1,r_j-\hat{\flex}\cdot \ell_j/2\}$ and $d_j' = \min\{\tau,d_j + \hat{\flex}\cdot \ell_j/2\}$
(here, 
$\tau=1440$, 
corresponding to the last time step for the operational day).
We took the forecast of every year from 2016 to 2022,
and combined them with the jobs recorded 
on the respective days.

\subparagraph*{Battery.}
The battery is measured in Watt-minutes [\vwattmin{}].
We assume a battery with capacity \vkilowattmin{60}
(\vkilowatthour{1}),
commonly considered for balcony-sized PVs~\cite{BARZEGKARNTOVOM20201302},
with a loading speed of 3 hours from 0\% to 100\%
and an initial battery load of 10\%.
Further,
we assume
that the effective in- and output efficiencies are equal
($\lambda\ceq \lin=\lout$)
and consider three values~$\lambda\in\{0.9,0.94,0.98\}$
around the commonly assumed value 0.95~\cite{STENZEL2018165}.
Since we focus on the effect of forecast variability and flexibility, 
we fix the battery size to isolate 
these factors.

\subsubsection{Results}

We used Python 3.10 and Gurobi 12.0 (Python interface)
to compute solutions for~$\qeAcr$-ILP.\footnote{%
Run on Intel\textsuperscript{\textregistered} Xeon\textsuperscript{\textregistered} Silver 4310 CPU@2.10GHz (12 cores),
125GB RAM,
Ubuntu 22.04.3 LTS (x86\_64).
}
We compared \pvfnt{RA} and \pvfnt{BS},
each available day from 2016--2022,
the thirteen flexibility values,
and the three effectivities of our battery.
In total,
we solved 24024
instances for each of \pvfnt{RA} and \pvfnt{BS}.
\cref{fig:example_result} (appendix) shows%
\toappendix{%
  \begin{figure*}[t!]
  \centering
  \includegraphics[width=1\textwidth]{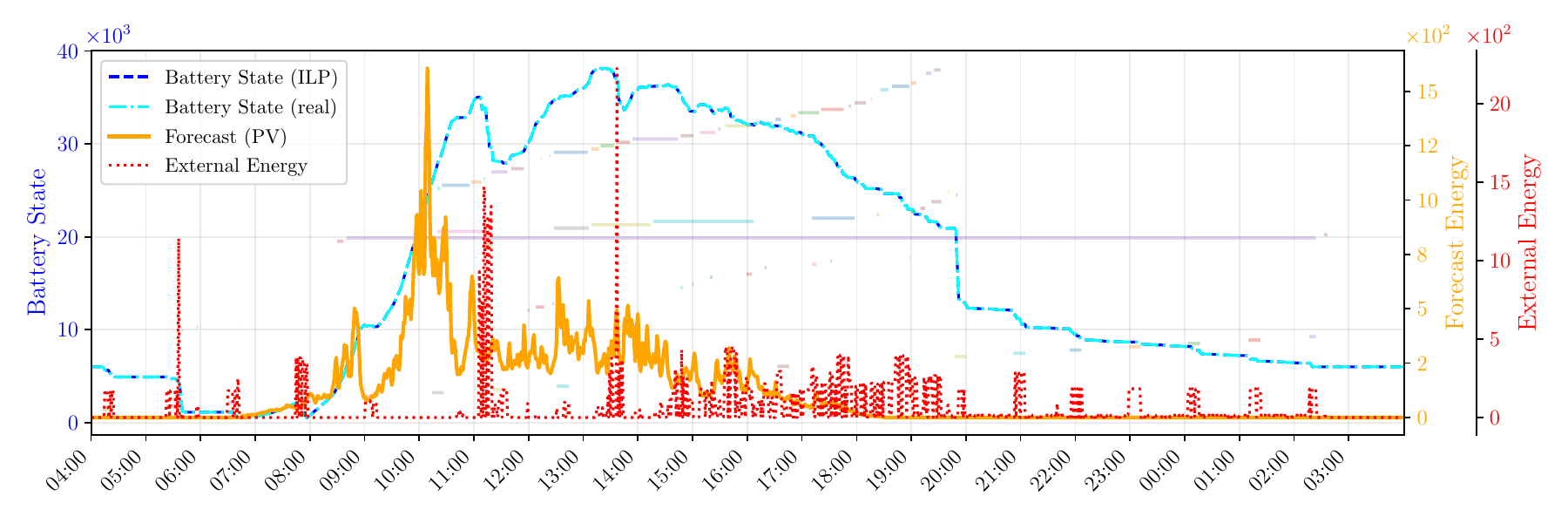}
  \includegraphics[width=1\textwidth]{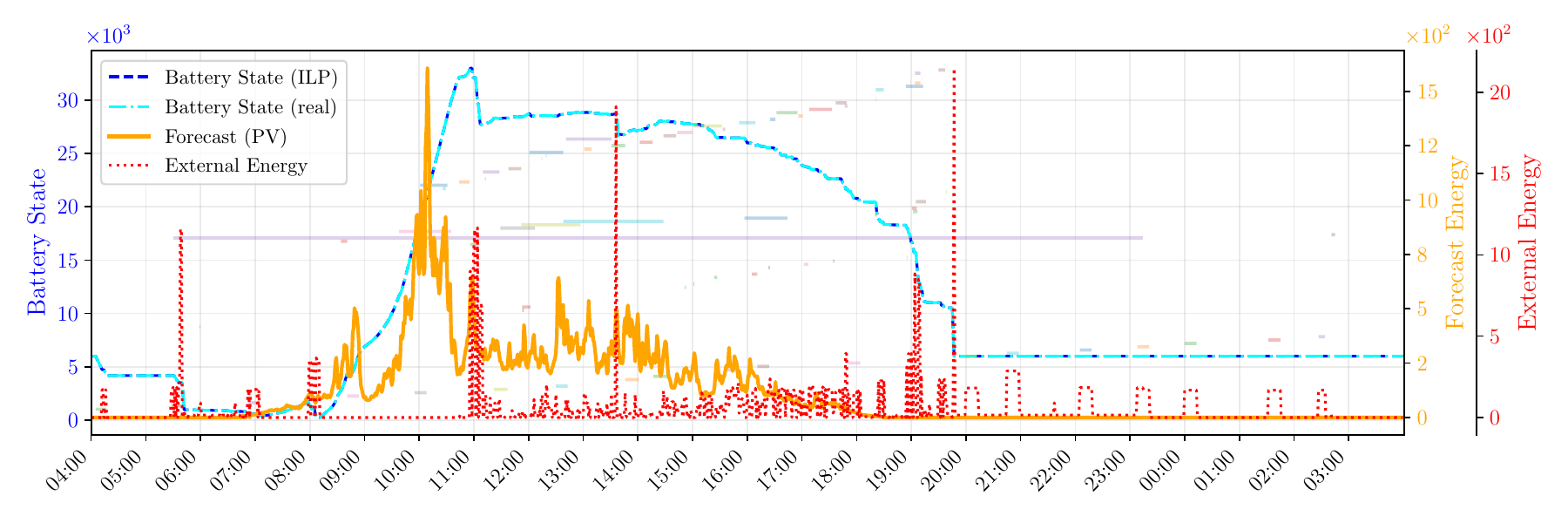}
  \includegraphics[width=1\textwidth]{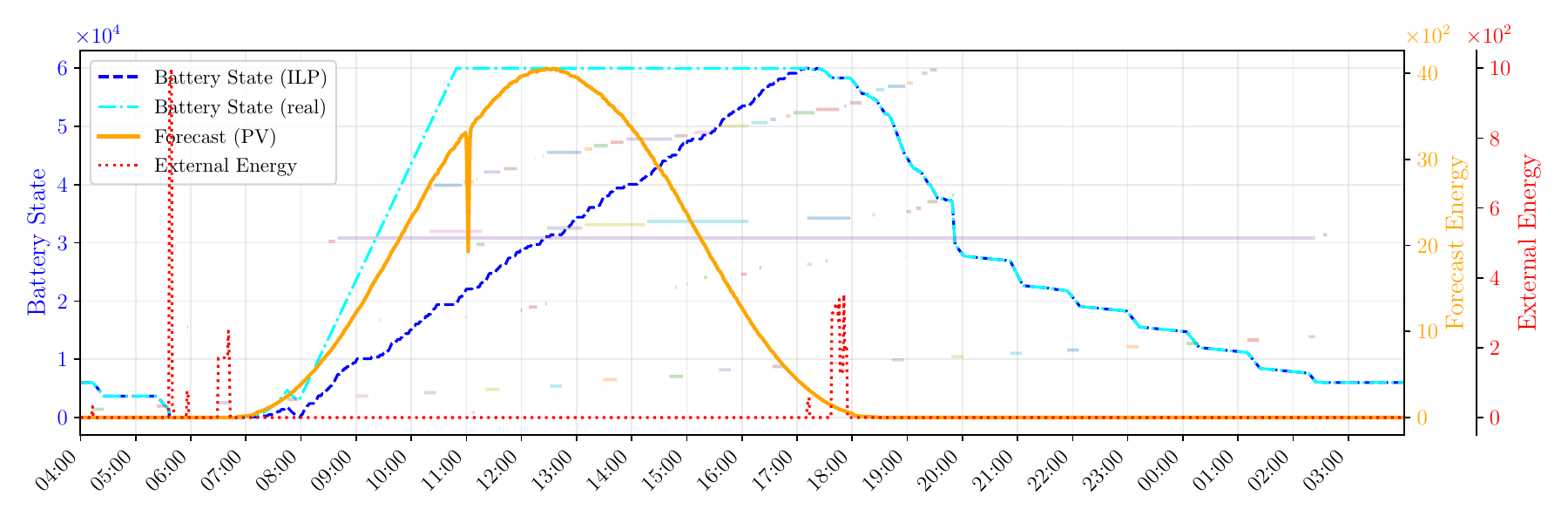}
  \includegraphics[width=1\textwidth]{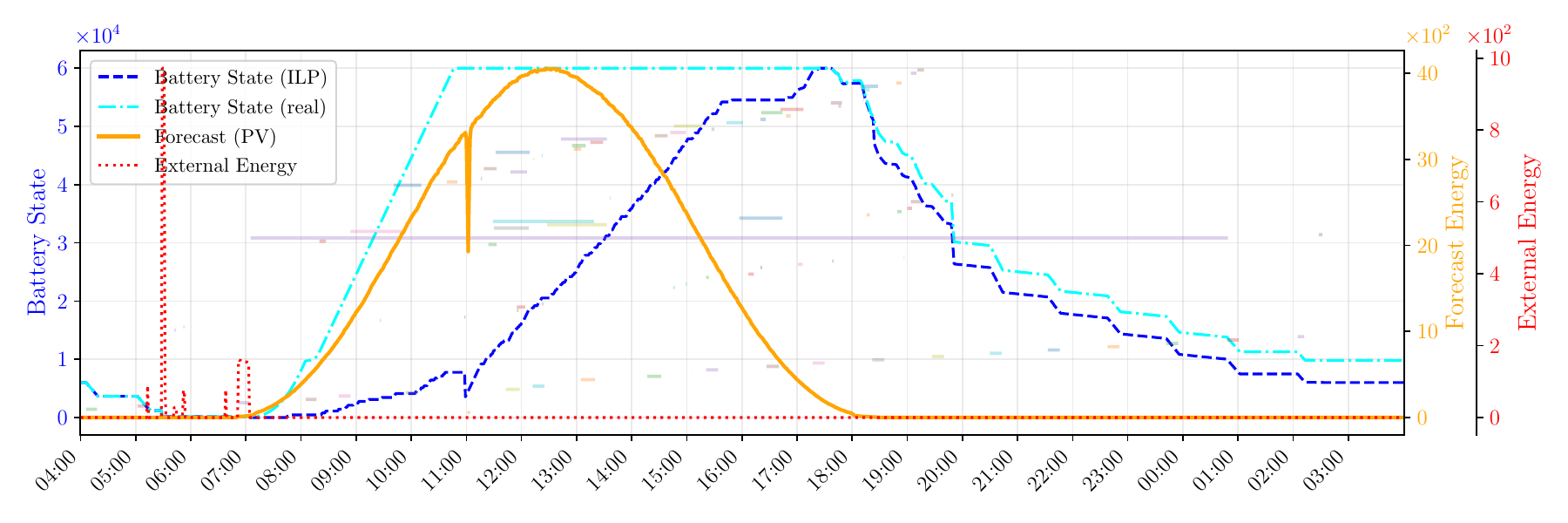}
  \caption{Illustration of ILP solution 
  for~\pvfnt{RA}
  with $\lambda = 0.94$
  for operational day March 13 in year 2020 (upper half)
  and in year 2022 (lower half).
  In each half,
  the upper and lower plot corresponds to
  flexibility 0 and 3,
  respectively.
  The total required external energies
  for flexibility 0 and 3
  are 84210 respectively 75497 for 2020,
  and 9876 respectively 5275 for 2022.
  Jobs are depicted as colored stripes,
  each at its unique $y$-value (corresponding to the job's index). 
  We present both battery states:
  those computed by the ILP,
  and the ``real'' ones in the sense of when the jobs are scheduled according to the ILP solution.
  }
  \label{fig:example_result}
  \end{figure*}
}%
optimal solutions for four example instances.
When applicable,
we discuss the results for~$\lambda=0.94$ in greater detail.

\subparagraph*{Energy Reduction.}
In \cref{fig:relgain},
\begin{figure}[t]
 \centering
 \includegraphics[width=0.495\textwidth]{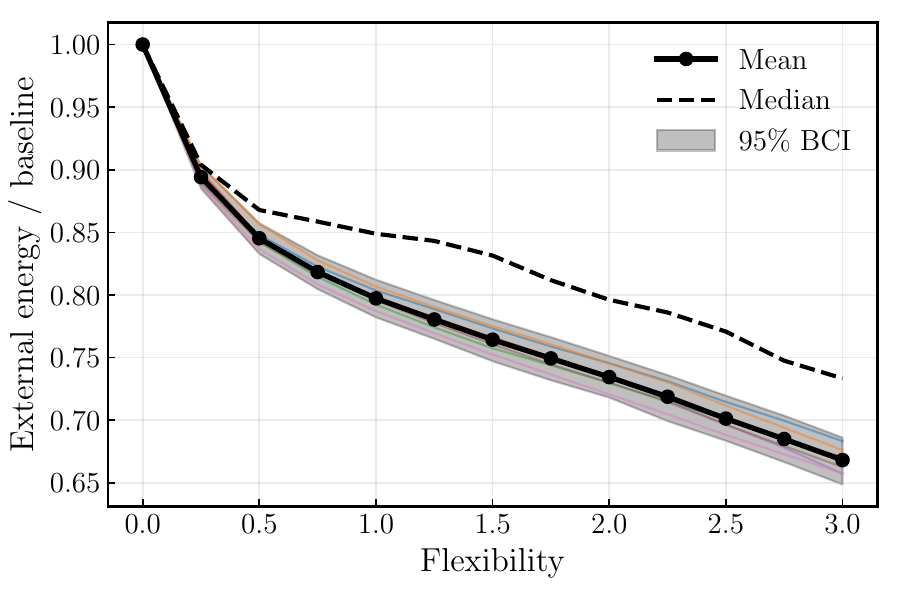}
 \hfill
 \includegraphics[width=0.495\textwidth]{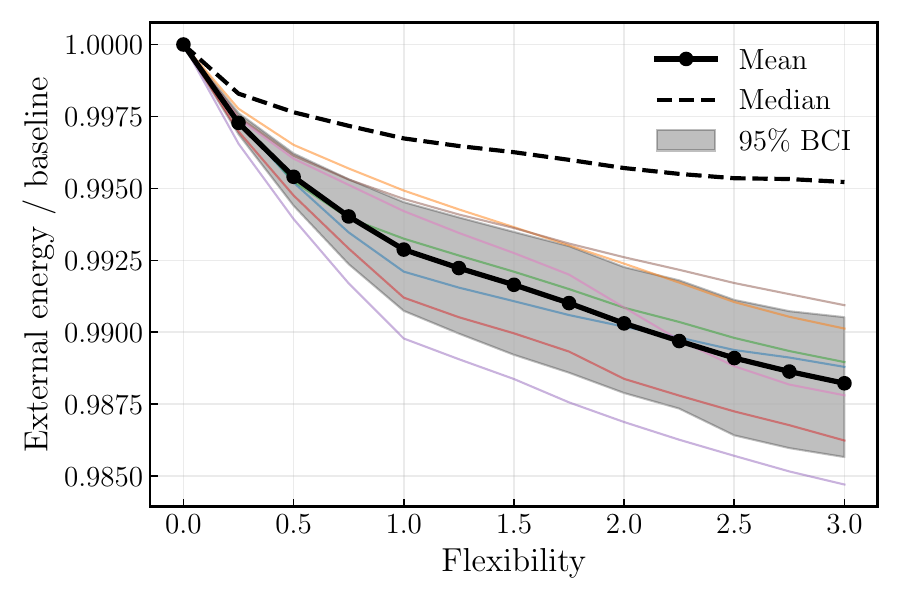}
 \caption{Flexibility versus aggregated mean and median external energy reduction 
 over all years 2016--2022 and~$\lambda=0.94$
 with 95\% BCI for (left) \pvfnt{RA} and (right) \pvfnt{BS}. 
 Each year's mean is shown in a different color 
 (used for illustration only).}
 \label{fig:relgain}
\end{figure}
we show the external energy reduction against the bottom line of flexibility zero
when increasing the flexibility,
aggregated over all years 2016--2022 
for~$\lambda=0.94$.
Our results
show that for both \pvfnt{RA} and \pvfnt{BS},
a higher flexibility allows higher energy savings on average.
With our maximum flexibility of 3,
these are relatively small for \pvfnt{BS} with a reduction of $1.18\%$ on average 
(median: $0.48\%$),
yet quite large for \pvfnt{RA} with about $33.18\%$ on average 
(median: $26.66\%$).
Notably,
the improvements are larger in the flexibility interval~$(0,1]$ 
and are then almost linear in $(1,3]$.
While the standard deviations are quite high (cf.\ \cref{fig:relextenergy}),
the 95\% percentile nonparametric bootstrap confidence interval~\cite{EfronT94} with 2000 replications (BCI for short) 
is quite narrow,
in particular for~\pvfnt{RA} ($\hat{\flex}=3$: $[0.65, 0.69]$).
Regarding the battery effectivity,
we observe two orthogonal trends.
With increasing effectivity,
the reduction for~$\hat{\flex}=3$
slightly decreases for \pvfnt{BS}
($\lambda=0.9$: $1.55\%$; $\lambda=0.98$: $0.83\%$)
but slightly increases for \pvfnt{RA}
($\lambda=0.9$: $32.8\%$; $\lambda=0.98$: $33.51\%$).
Overall, 
flexibility yields substantial benefits for rooftop systems (with small-space batteries), 
but only marginal improvements for small balcony systems.

We additionally investigated which additional factors the energy reduction
along flexibility may co-depend
(see~\cref{fig:relextenergy,fig:relextenergy-aggregated} for year 2022).%
\toappendix
{
\begin{figure*}[t]
 \centering
 \includegraphics[width=0.328\textwidth]{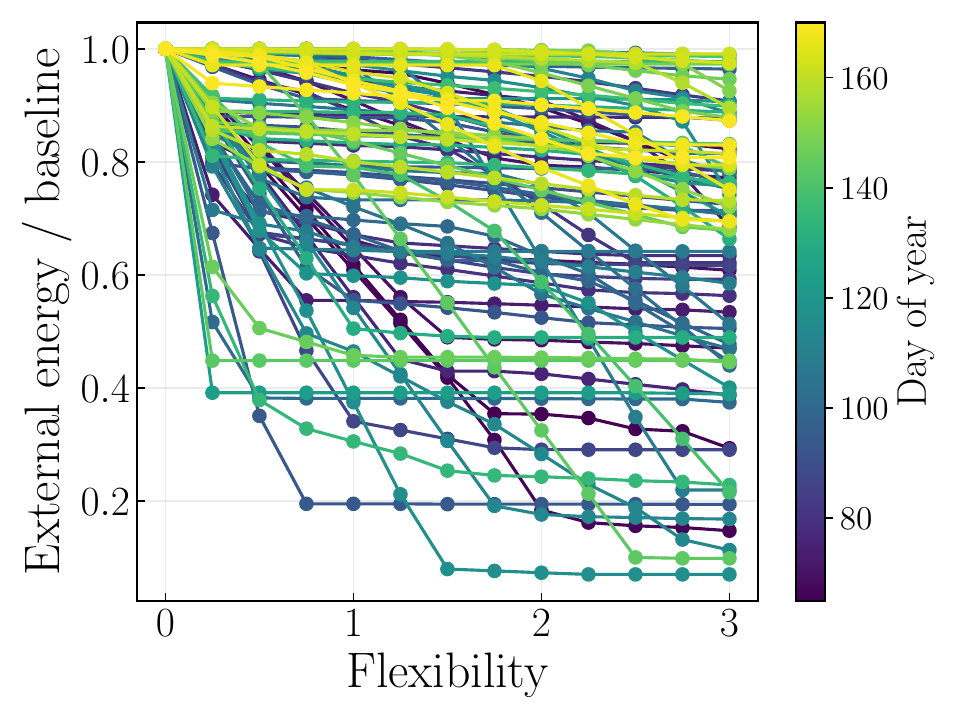}
 \includegraphics[width=0.328\textwidth]{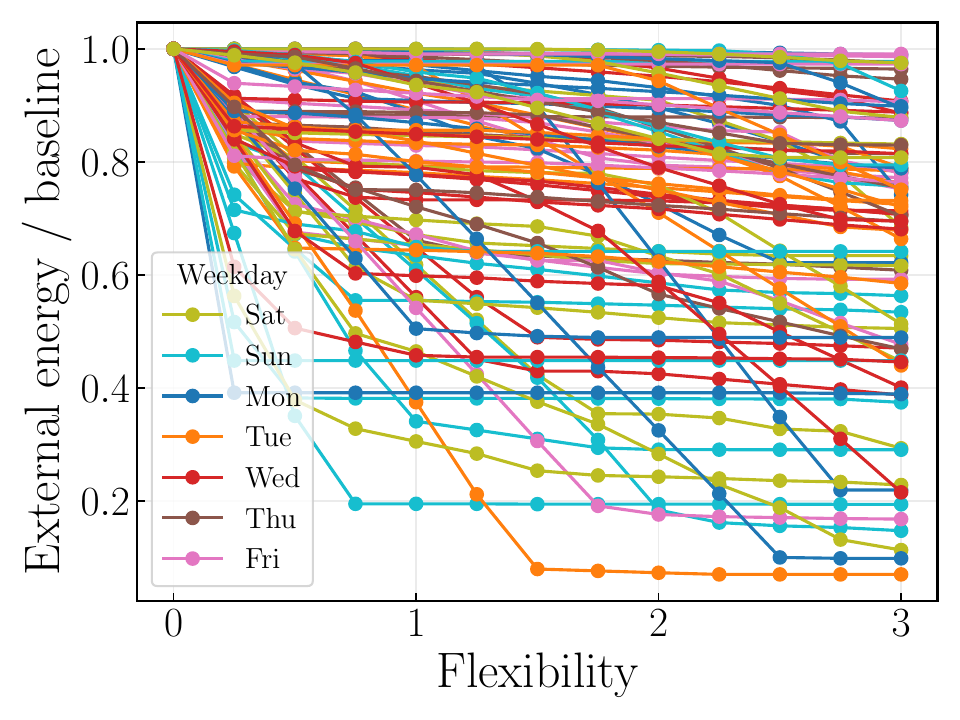}
 \includegraphics[width=0.328\textwidth]{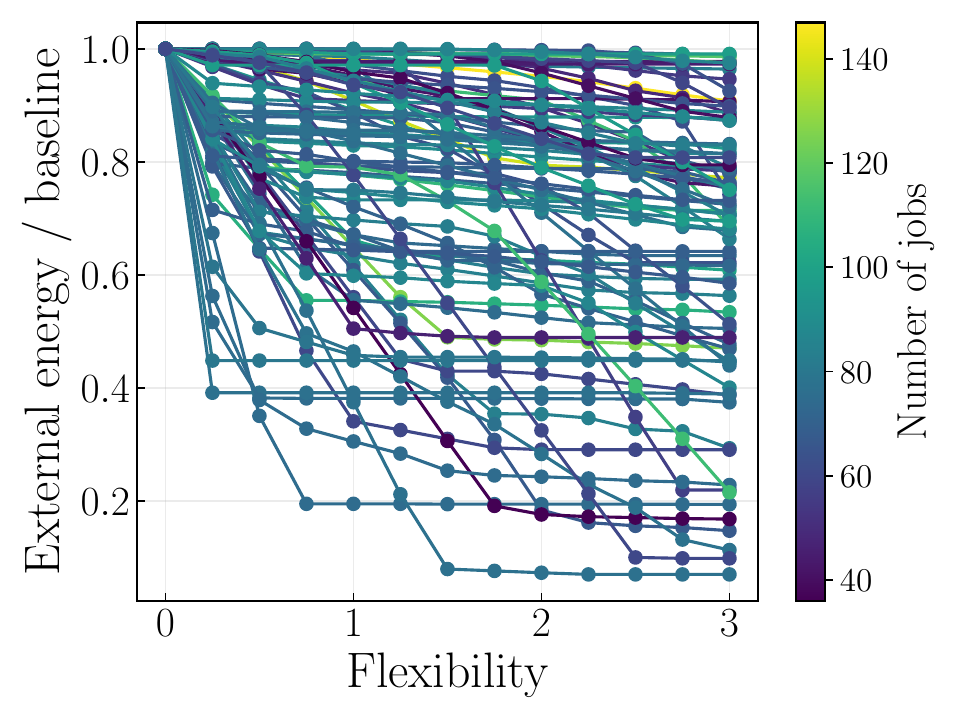}
 \includegraphics[width=0.328\textwidth]{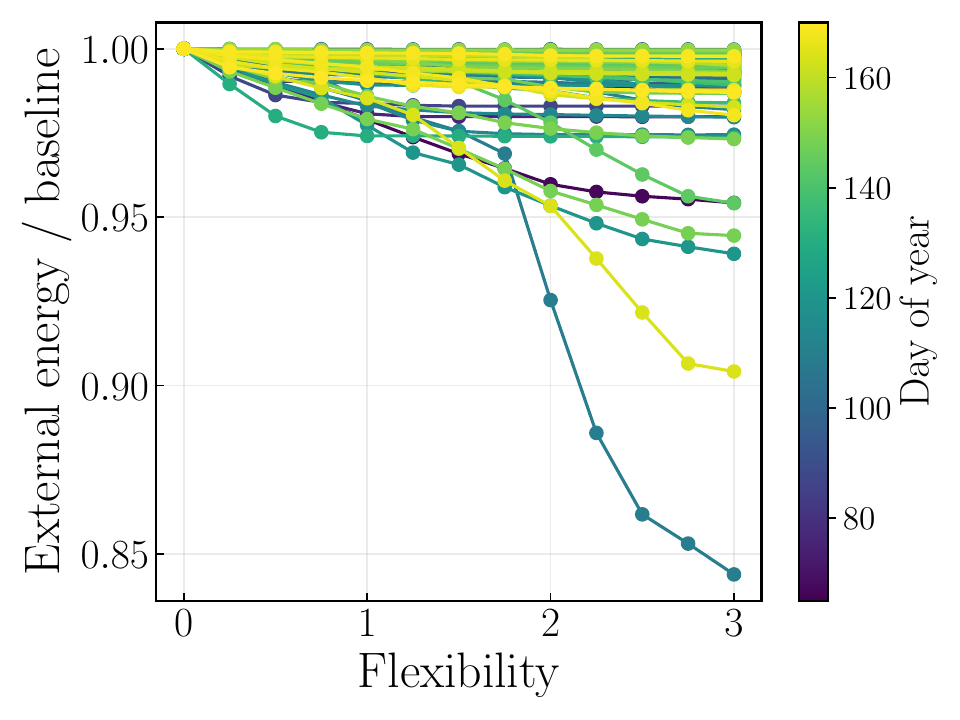}
 \includegraphics[width=0.328\textwidth]{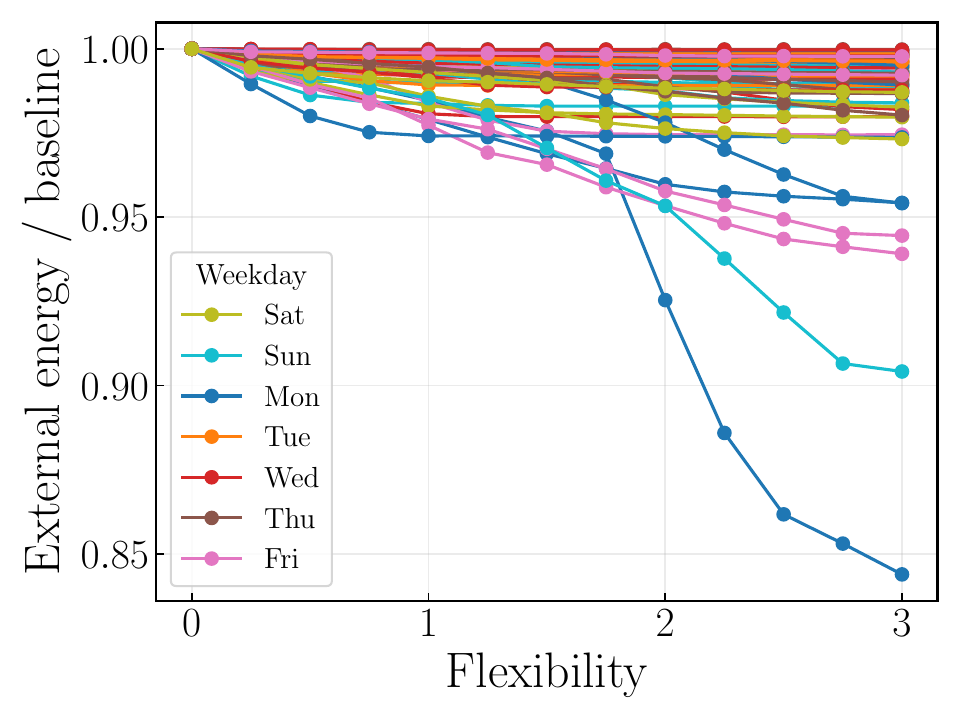}
 \includegraphics[width=0.328\textwidth]{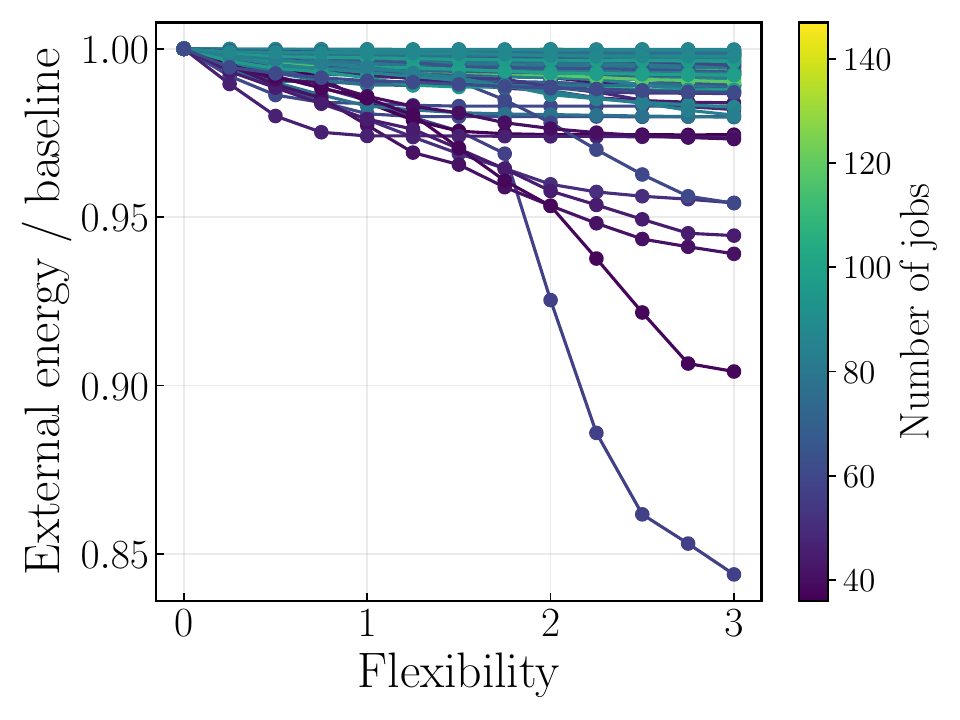}
 \caption{External energy reduction per flexibility for all instances in the year~2022 with~$\lambda=0.94$ for (top) \pvfnt{RA} and (bottom) \pvfnt{BS}.
 Instances are colored by day of the year (left),
 weekday (middle),
 and number of jobs (right).
 }
 \label{fig:relextenergy}
\end{figure*}
}
\begin{figure*}[t]
 \centering
 \includegraphics[width=0.328\textwidth]{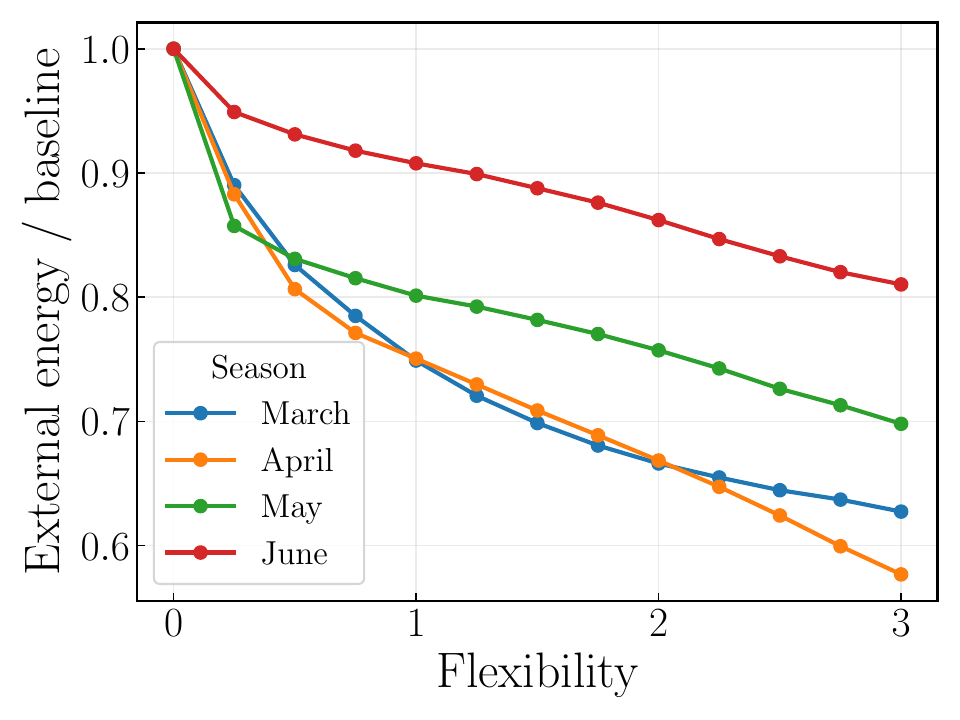}
 \includegraphics[width=0.328\textwidth]{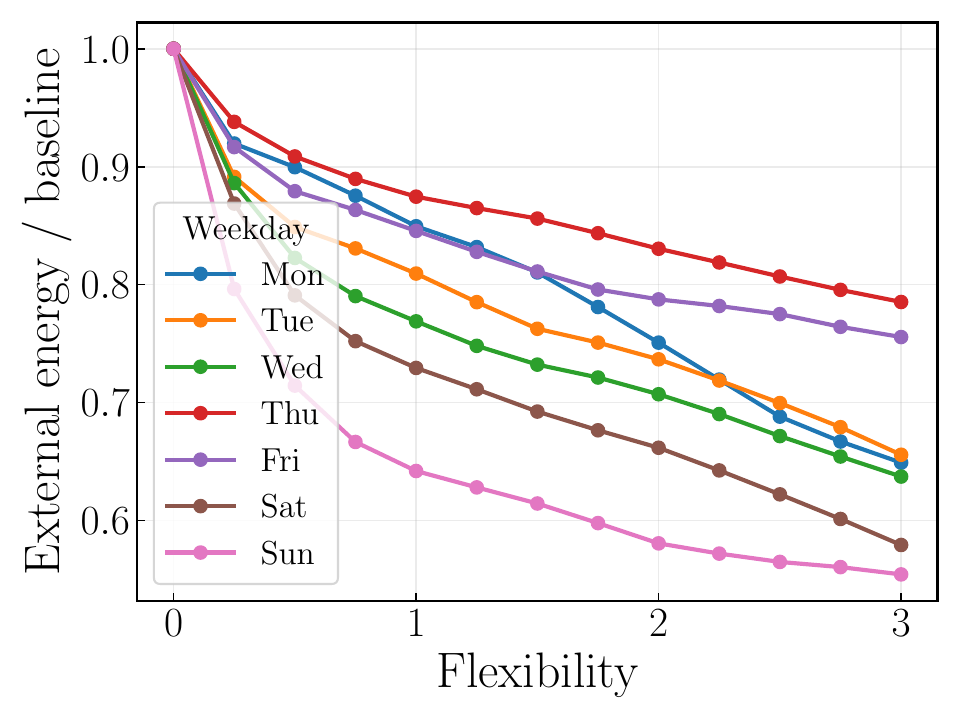}
 \includegraphics[width=0.328\textwidth]{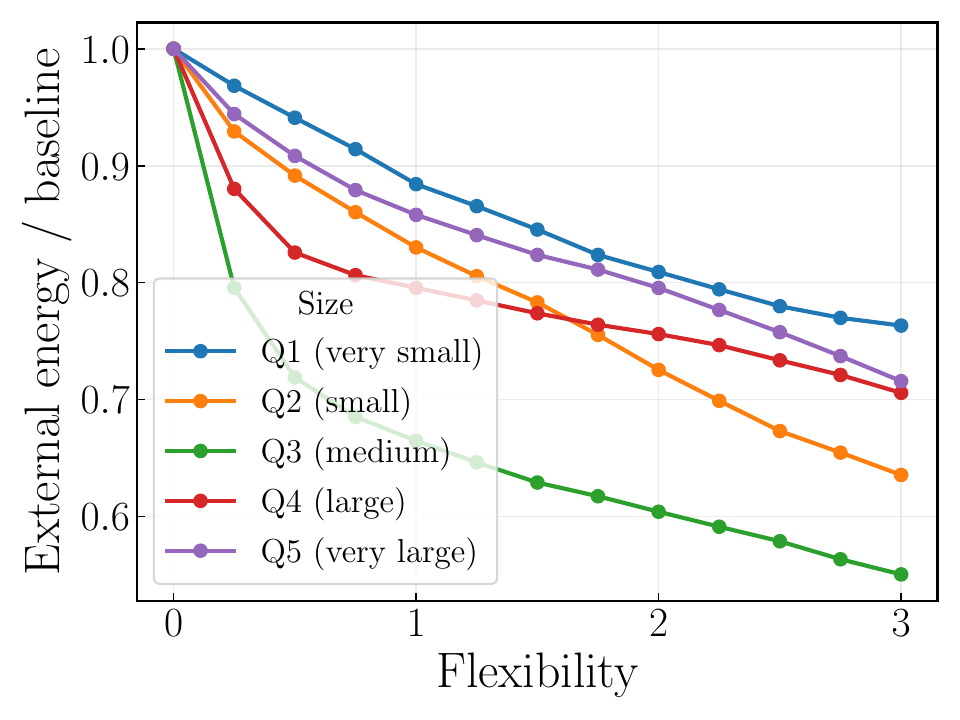}
 \includegraphics[width=0.328\textwidth]{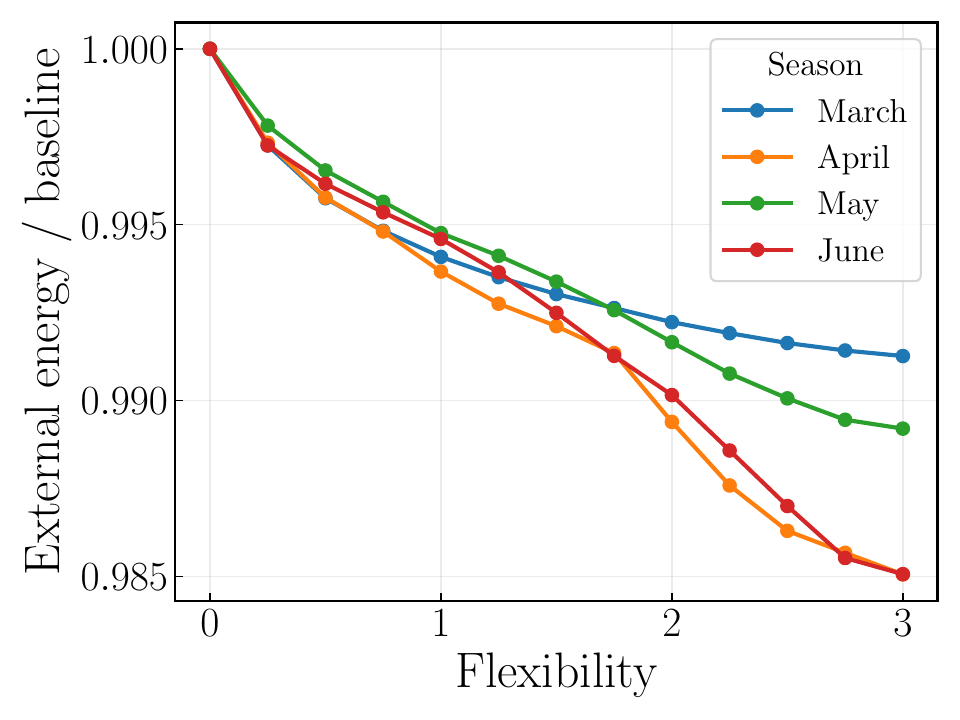}
 \includegraphics[width=0.328\textwidth]{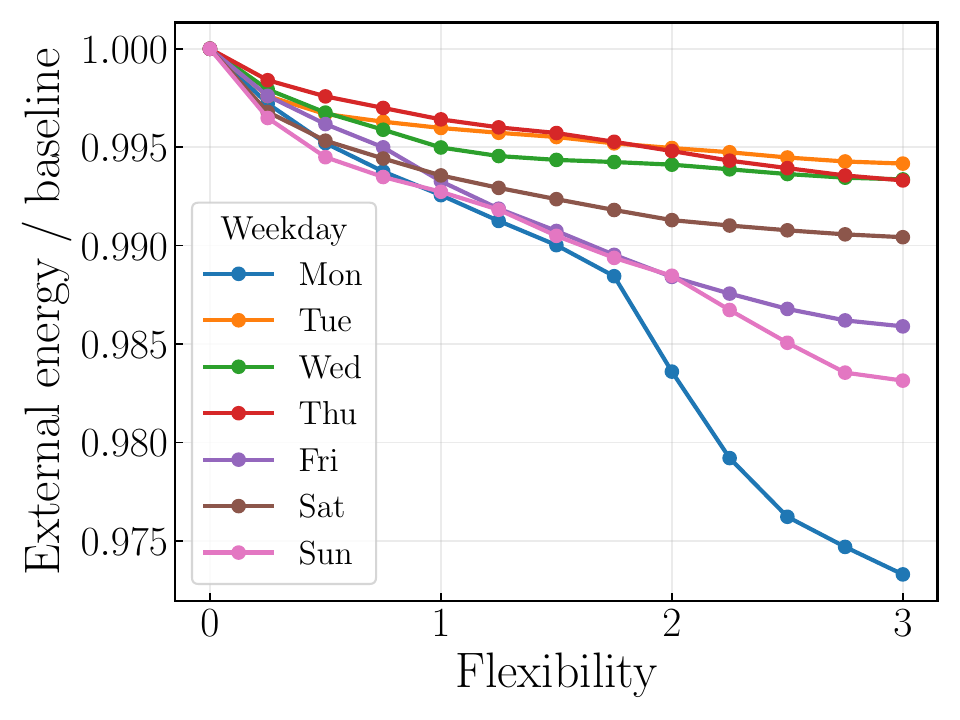}
 \includegraphics[width=0.328\textwidth]{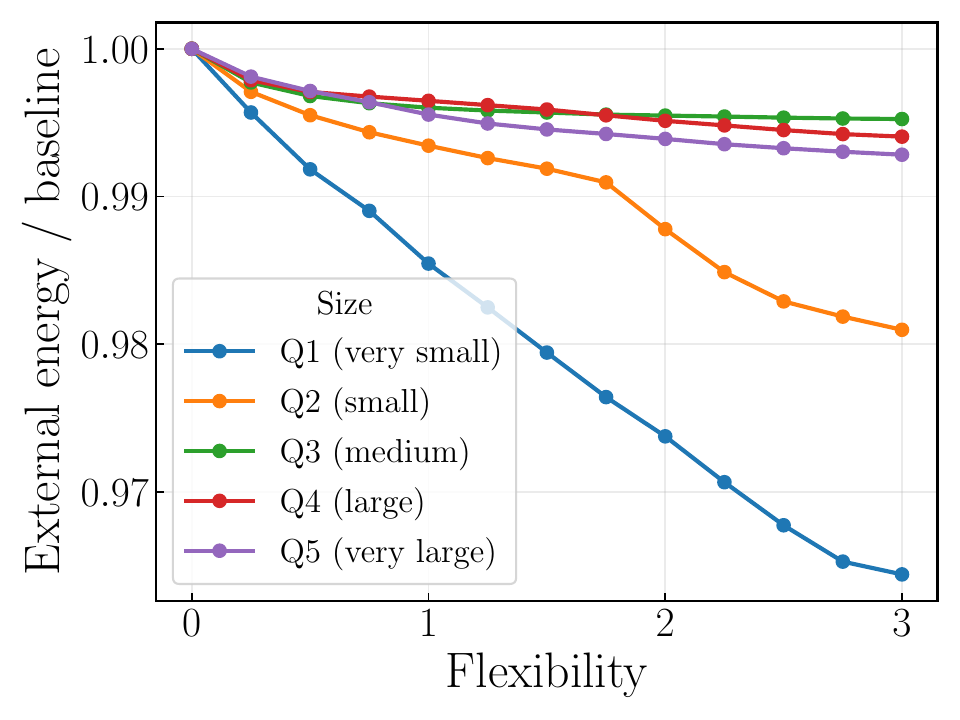}
 \caption{External energy reduction per flexibility for all instances aggregated 
 in the year~2022 with~$\lambda=0.94$ for (top) \pvfnt{RA} and (bottom) \pvfnt{BS}.
 Instances are aggregated by month (left),
 weekday (middle),
 and five percentiles of the number of jobs (right).
 }
 \label{fig:relextenergy-aggregated}
\end{figure*}
We checked for a seasonal correlation (by day of year),
a weekday correlation,
or correlation with the number of jobs.
For neither of these factors,
we could identify a clear correlation.

\subparagraph*{Running Times.}

The runtime increases with the flexibility almost linearly
(see~\cref{fig:runtimes}),
\begin{figure}[t]
 \centering
 \includegraphics[width=0.495\textwidth]{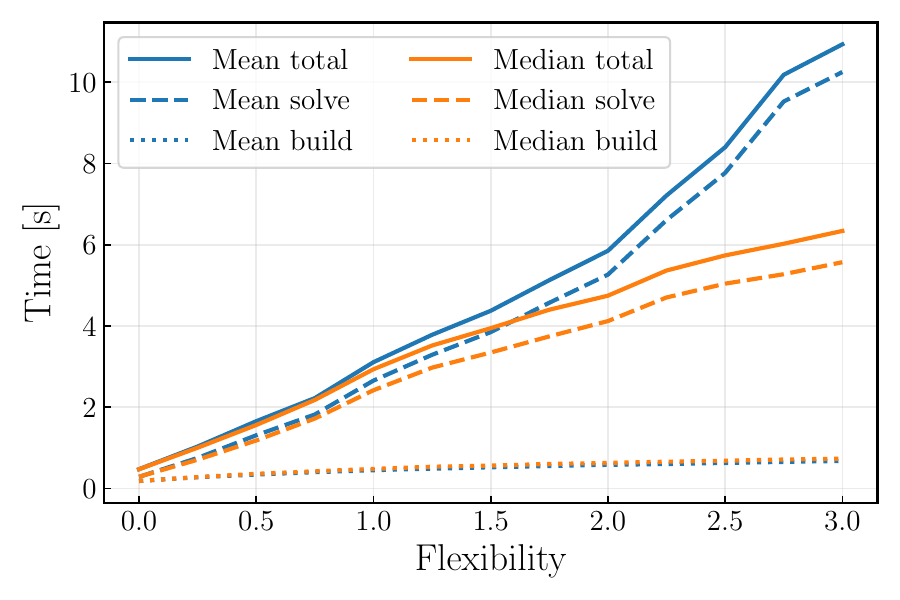}
 \hfill
 \includegraphics[width=0.495\textwidth]{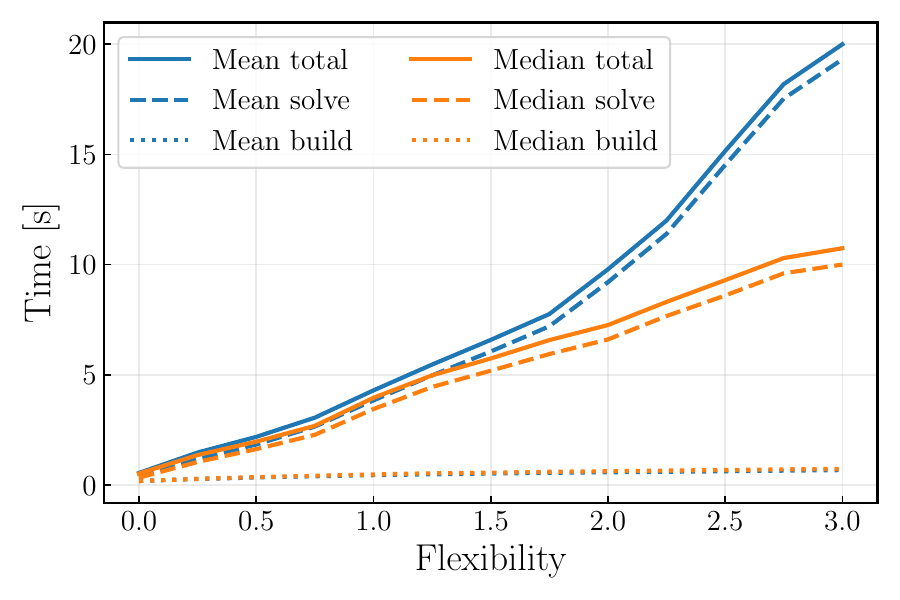}
 \caption{Flexibility versus mean and median runtimes for \pvfnt{RA} (left) and \pvfnt{BS} (right). 
 Here,
 'build' and 'solve' refer to the runtimes of constructing and solving the ILP,
 respectively.}
 \label{fig:runtimes}
\end{figure}
up to
\vsec{10.928} (\vsec{6.338})  and \vsec{19.989} (\vsec{10.742}) on average (median)
at~$\hat{\flex}=3$ for \pvfnt{RA} and \pvfnt{BS},
respectively.
Note that there are few yet severe outliers
when the flexibility is at least 2
(see~\cref{fig:runtimes-byflex}):
\begin{figure}[t]
 \centering
 \includegraphics[width=0.495\textwidth]{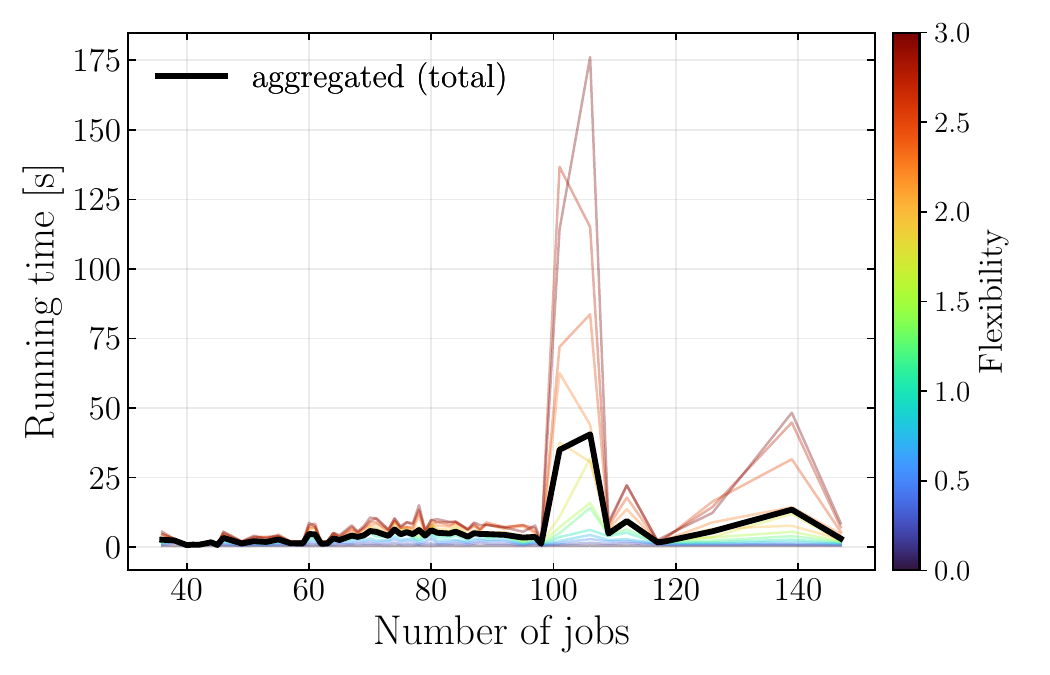}
 \hfill
 \includegraphics[width=0.495\textwidth]{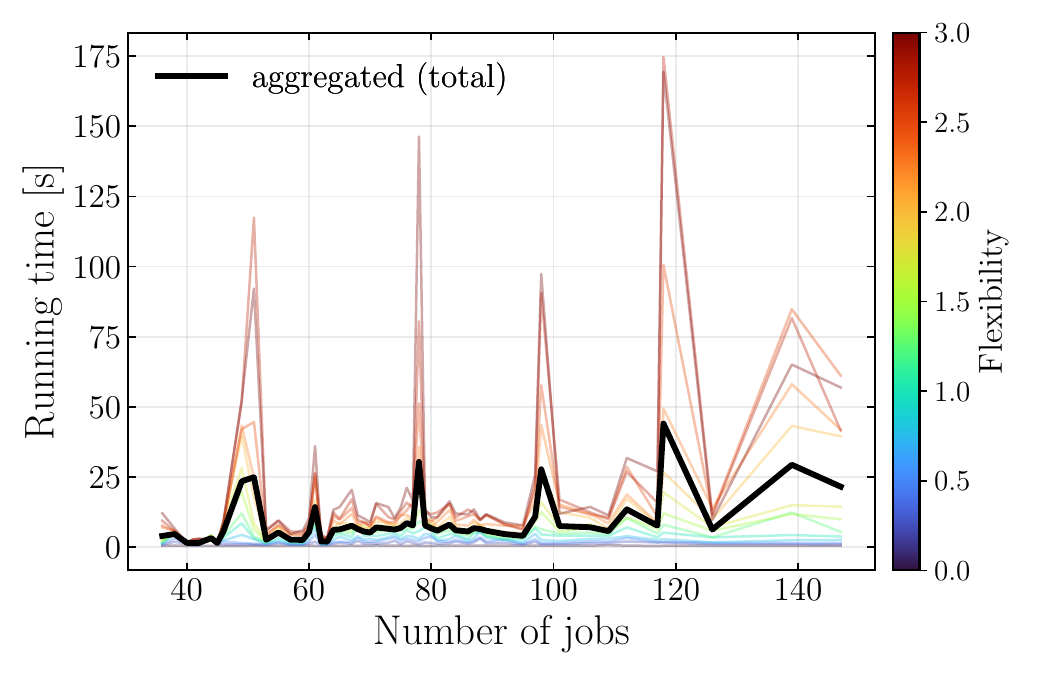}
 \caption{Running times versus number of jobs versus flexibility,
 with \pvfnt{RA} (left) and \pvfnt{BS} (right).}
 \label{fig:runtimes-byflex}
\end{figure}
Here, in the range of 100--120 jobs,
it may take up to 3 minutes on average to compute solutions for~$\hat{\flex}=3$.
We also detected that the runtime increases with increasing number of scheduled jobs
(see \cref{fig:runtimes-corrs});%
{
\begin{figure}[t]
 \centering
 \includegraphics[width=0.495\textwidth]{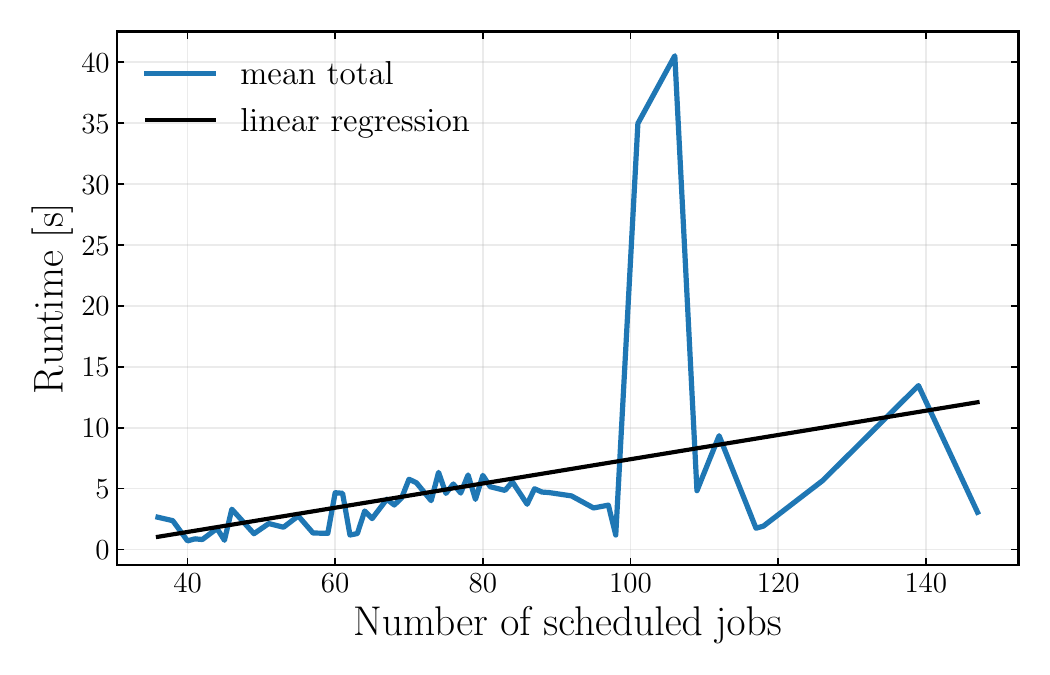}
 \hfill
 \includegraphics[width=0.495\textwidth]{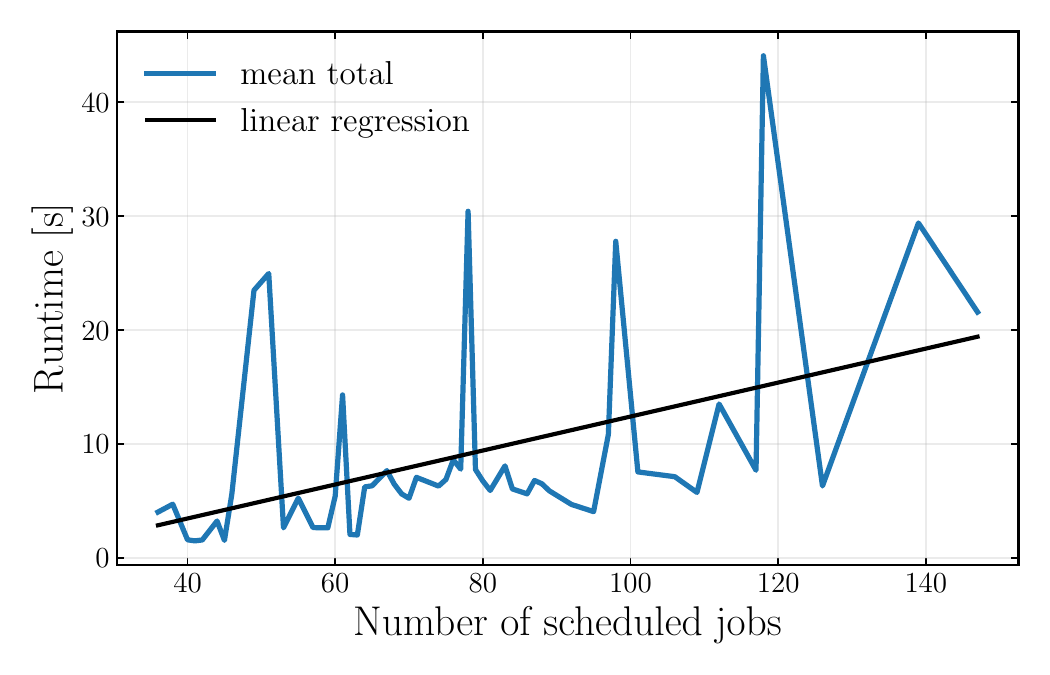}
 \includegraphics[width=0.495\textwidth]{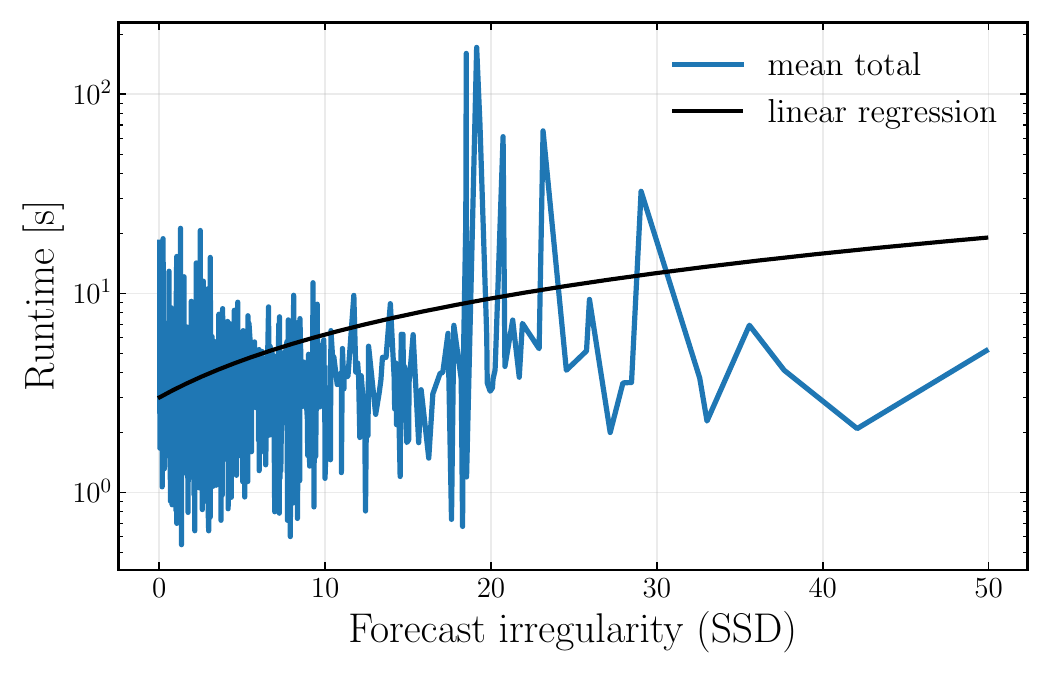}
 \hfill
 \includegraphics[width=0.495\textwidth]{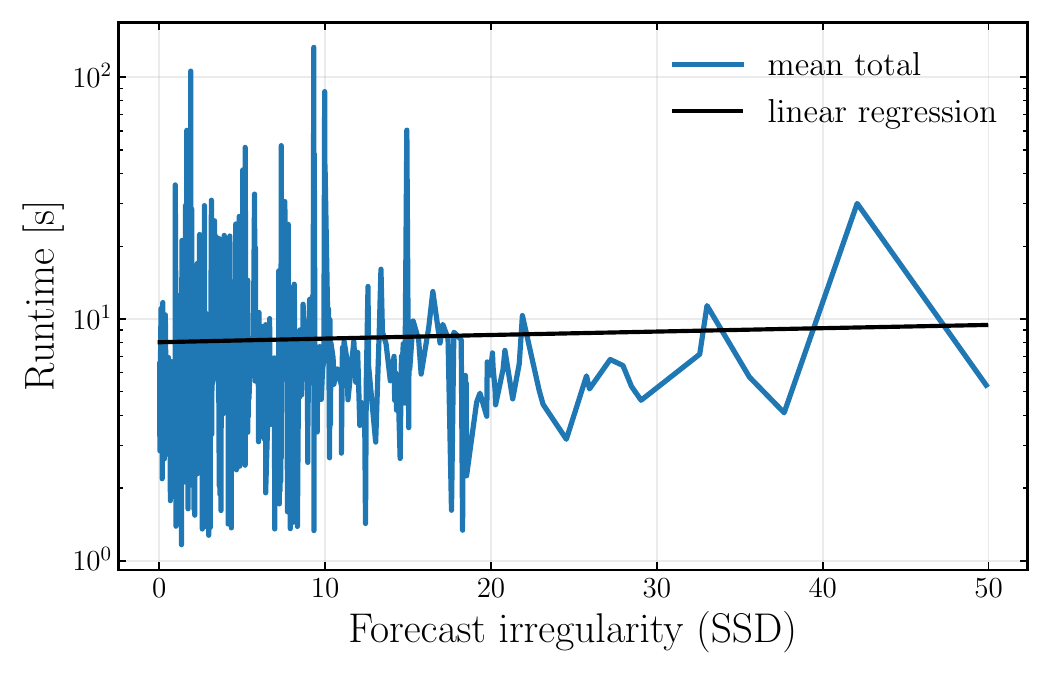}
 \caption{Running times versus (top) number of jobs and (bottom) forecast irregularity ($y$-$\log$ scale),
 with \pvfnt{RA} (left) and \pvfnt{BS} (right).}
 \label{fig:runtimes-corrs}
\end{figure}
}%
this effect was even more significant for \pvfnt{BS}
(with an~$r$-value of 0.4480, 
as compared to 0.3723 for \pvfnt{RA}).
In contrast,
we observed a weak positive correlation
($r$-value of 0.1871) 
for \pvfnt{RA} 
showing that the runtime increases with increasing \emph{forecast irregularity}.
We measure the forecast irregularity of an operational day by 
first normalizing the forecast by a clear-sky forecast,
and then taking the sum of the squared differences.
Interestingly, our hardness results rely on structurally simple forecasts.
In contrast, the empirical runtimes increase with forecast irregularity, 
indicating that temporal variability is a practically relevant complexity driver.
We point out that for \pvfnt{BS},
we did not detect such a correlation with statistical significance.
Finally,
we identified no correlation between the runtime and the battery effectivity
for either \pvfnt{RA} or \pvfnt{BS}.

\section{Conclusion}
\label{sec:epilogue}

We introduced and analyzed a novel scheduling problem motivated by energy autarky, 
combining parameterized complexity theory with data-driven evaluation. 
Our results reveal a sharp contrast between worst-case computational hardness and substantial empirical gains achievable through flexibility.
Overall, 
our experiments confirm that flexibility substantially reduces external energy in larger PV systems, 
already for relatively small flexibility values,
while also revealing how forecast irregularity and job density may influence computational effort.
For more detailed discussions on the theoretical and experimental parts,
as well as a concluding big picture,
see below.

\subparagraph*{Computational and Parameterized Complexity.}

Our most intriguing open question is whether 
the weak versus strong \NP-hardness boundary is substantial.
In this context,
combining~$\mx{\flex}$ with the number of distinct job energies remains open.
Further open questions regard whether \qeAcr{}
is \FPT{} \wpb{} the number~$n$ of jobs alone.
We know that guessing only the job order is insufficient (\cref{prop:horiz:joborders}),
and
that a battery is required for hardness (\cref{prop:nobattery:FPTn}).
For our \FPT{} results,
e.g., for the parameter~$\mx{\flex}+n$,
we wonder whether polynomial-sized problem kernels exist.
Finally, future work may uncover additional polynomial-time solvable cases by adapting the algorithms from~\cref{prop:greedy,prop:oneortwouniquerels}.

\subparagraph*{Experimental Evaluation.}
Assuming symmetric flexibility on both ends of a job's time window may be unrealistic in practice,
but we are unaware of data connecting jobs with their typical flexibilities.
Thus,
we seek additional and structurally diverse datasets to validate and extend our analysis.
Such data could also enhance our understanding of which additional factors significantly impact the energy reduction and runtimes.
In our experiments,
we considered \vhour{24} time windows starting at 4~am.
For most households,
4~am is a reasonable cutoff for defining operational days with respect to executed tasks.
Future work may investigate the robustness of our results under shifted,
shorter, 
or longer time windows.
As to the battery,
in our experiments
we assumed that the in- and output effectivity is equal. 
While this is practically a reasonable assumption~\cite{STENZEL2018165},
it would be interesting to investigate how the results change with larger gaps between the two
effectivities.
Moreover,
studying how charge-dependent efficiencies affect the outcomes, 
compared with the constant-efficiency setting considered here,
is a well-motivated research direction.
For future work,
we plan to analyze the effect on the energy reductions of varying PV orientations~\cite{MubarakWS19} as well as battery loading speeds and sizes.
Regarding battery sizes,
we have preliminary results that indicate that doubling the battery size to \vkilowatthour{2} leads to a marginally higher mean runtime and with average reductions that are slightly higher for \pvfnt{RA} and lower for \pvfnt{BS}.
Moreover,
with a huge battery size of \vkilowatthour{6} the average reduction diminishes since mostly the required external energy is little to none.

\subparagraph*{Big Picture.}
Our work demonstrates that flexibility-aware energy scheduling forms a rich algorithmic problem at the intersection of sustainability and complexity theory.
We point out that the model, 
and hence our ILP, 
applies not only to households and PVs,
but also to small neighborhoods,
wind,
or other power sources;
it is not even restricted to electrical energy.
While discharge-rate limits were assumed to be non-critical at the household level,
they may have to be included for larger-scale or industrial instances
(cf.~\cref{rem:chargingspeed}).
We anticipate further applications of our model and corresponding algorithmic results.

\newpage
\bibliography{strings-long,quarter-energy-bib}

\clearpage
\appendix
\section*{Appendix}
\appendixProofText

\end{document}